\pdfoutput=1
\documentclass[10pt,twocolumn,letterpaper]{article}

\usepackage[
  letterpaper,
  top=0.62in,
  bottom=0.72in,
  left=0.66in,
  right=0.66in,
  columnsep=13.5pt
]{geometry}
\usepackage[T1]{fontenc}
\usepackage[utf8]{inputenc}
\usepackage{lmodern}
\usepackage{microtype}
\usepackage{helvet}
\usepackage{overpic}
\usepackage{multirow}

\usepackage{amsmath,amssymb,amsthm,mathtools}
\usepackage{graphicx}
\usepackage{booktabs}
\usepackage{enumitem}
\usepackage{xcolor}
\usepackage{caption}
\usepackage{subcaption}
\usepackage{dblfloatfix}
\usepackage{balance}

\usepackage[numbers,sort&compress]{natbib}
\usepackage[hidelinks]{hyperref}
\hypersetup{
  pdftitle={Perfectly equidistributed Quasi-Monte Carlo sequences from Artin-Schreier polynomials},
  pdfauthor={Nicolas Bonneel, David Coeurjolly, Victor Ostromoukhov}
}

\usepackage{fancyhdr}
\usepackage{titlesec}

\titleformat{\section}
  {\sffamily\bfseries\normalsize}
  {\thesection.}{0.45em}{}

\titleformat{\subsection}[runin]
  {\sffamily\bfseries\normalsize}
  {\thesubsection.}{0.45em}{}[.]

\titleformat{\subsubsection}[runin]
  {\sffamily\bfseries\small}
  {\thesubsubsection.}{0.45em}{}[.]

\titlespacing*{\section}
  {0pt}{1.2ex plus 0.4ex minus 0.2ex}{0.45ex}

\titlespacing*{\subsection}
  {0pt}{0.9ex plus 0.3ex minus 0.2ex}{0.55em}

\titlespacing*{\subsubsection}
  {0pt}{0.7ex plus 0.2ex minus 0.1ex}{0.55em}

\setlist{nosep,leftmargin=*}
\newtheorem{theorem}{Theorem}
\newtheorem{proposition}[theorem]{Proposition}
\newtheorem{lemma}[theorem]{Lemma}

\theoremstyle{definition}

\theoremstyle{remark}

\newcommand{\cP}{\mathcal{P}}
\newcommand{\cK}{\mathcal{K}}

\newcommand{\PaperTitle}{%
  Perfectly equidistributed Quasi-Monte Carlo sequences from\\
  Artin-Schreier polynomials}

\newcommand{\PaperAuthors}{%
  Nicolas Bonneel\textsuperscript{1,*},
  David Coeurjolly\textsuperscript{1}, and
  Victor Ostromoukhov\textsuperscript{1}}

\newcommand{\PaperAffiliations}{%
  \textsuperscript{1}CNRS, Universit\'e Lyon 1, INSA Lyon, LIRIS, France\\
  \textsuperscript{*}Corresponding author: \href{mailto:nicolas.bonneel@cnrs.fr}{nicolas.bonneel@cnrs.fr}}

\newcommand{\PaperAbstract}{%
To numerically integrate a function, one may resort to Quasi-Monte Carlo estimators, that average integrand values at pseudo-random well-distributed uniform sampling locations. Better uniformity improves the worst-case integration-error bound.
A standard measure of uniformity is given by an integer $t$ value, where $t=0$ yields the best uniformity.
Producing sequences of samples with bounded $t$ values can be achieved with Sobol' recursive construction, that uses coefficients of irreducible polynomials. 
While $b$-dimensional sequences with $t=0$ can be obtained by taking $b$ polynomials of degree $1$ over the Galois Field $\mathrm{GF}(b)$, we show conditions that guarantee $t=0$ for specific higher degree polynomials. In particular, we relate the Sobol' construction to tensorized powers of Pascal matrices when the chosen polynomials only differ by a constant and exhibit simple conditions to guarantee $t=0$ in this case.
We then focus on Artin-Schreier irreducible polynomials, in the form $p_i(x) = x^b - x + c_i$, where $i \in  \{1, \dots, b-1\}$ and $b$ is prime, and we make explicit conditions that always guarantees $t=0$ in $b-1$ dimensions. 
Combining $b$-dimensional Sobol' of degree $1$ and our $(b-1)$-dimensional Artin-Schreier sequence of degree $b$, we provide a fast greedy procedure 
that optimizes the $(2b-1)$-dimensional combined $t$ value, while guaranteeing $t=0$ projection in subspaces.}

\newcommand{\PaperKeywords}{%
Quasi-Monte Carlo; Sobol' sequences; Artin-Schreier polynomials; digital nets; $(0,s)$-sequences}

\newcommand{\makefrontmatter}{%
\twocolumn[
\begin{minipage}{\textwidth}
  \vspace*{-1.5em}
  {\sffamily\bfseries\fontsize{18}{21}\selectfont \PaperTitle\par}
  \vspace{0.65em}
  {\sffamily\fontsize{10.5}{13}\selectfont \PaperAuthors\par}
  \vspace{0.35em}
  {\sffamily\fontsize{8.5}{10.5}\selectfont \PaperAffiliations\par}
  \vspace{0.9em}

  \begin{center}
  \begin{minipage}{0.88\textwidth}
    \small
    \PaperAbstract

    \vspace{0.65em}
    {\sffamily\bfseries Keywords:} \PaperKeywords
  \end{minipage}
  \end{center}

  \vspace{0.65em}
  \hrule height 0.45pt
  \vspace{0.9em}
\end{minipage}
]
}

\begin{document}
\makefrontmatter
\thispagestyle{fancy}


\section{Introduction}

Numerically estimating high-dimensional integrals benefits from Quasi-Monte Carlo sampling. The resulting estimator averages evaluations of the integrand at specific pseudo-random locations 
$$\int_{[0, 1)^s} f(x) \text{d}x \approx \frac{1}{n} \sum_{i=1}^n f(x_i)\,,$$
where $\{x_i\}_{i=1..n}$ uniformly covers the integration domain (we will consider an $s$-dimensional unit hypercube, for simplicity). Uniformity can be measured by a non-negative integer value $t$, where smaller $t$ indicates better uniformity. In this context, a $(t,m,s)$-net is a set of $n=b^m$ points in $s$ dimensions, where if we regularly subdivide the integration domain in rectangles of side lengths $\frac{1}{b^{k_i}}$ for $k_i \in \mathbb{N}, i \in \{1 \dots s\}$, then each rectangle of area $\frac{b^t}{n}$ has exactly $b^t$ points inside (see figure \ref{fig:t1}). This $t$ value amounts to a stratification property of the point set. 
High uniformity is desirable as it improves the integration error in $\mathcal{O}\left(C(s,b) b^t \frac{\left(\log n\right)^s}{n}\right)$ for integrands of bounded Hardy and Krause variation 
~\cite{niederreiter1992random} where $C$ only depends on $s$ and $b$.
A $(t,s)$-sequence is a sequence of sample points where suitable subsets of $b^m$ sample points all satisfy the $(t,m,s)$-net property -- this allows for progressive estimation of an integral, refining the estimated integral value by adding more points while maintaining low integration error.

The Sobol' method is widely used as it produces $(t,s)$-sequences. This construction uses a set of irreducible polynomials $\{p_i\}_{i=1 \dots s}$ in the Galois Field $GF(b)$ and small initialization matrices in $GF(b)$ to produce larger matrices, that, in turn, are used to produce points. The crux of Sobol' sequences lies in determining these polynomials and initialization matrices to ensure low $t$ values. It can be shown that by taking $p_i(x) = x + i$, for $i=0 \dots s$ ($s < b$), the resulting $s$-dimensional sequence has $t=0$ and the matrices produced relate to powers of Pascal matrices~\cite{faure1982discrepance}. While cases for which $t=1$ (when base $b=2$) have been found~\cite{BCIO25}, we are not aware of other conditions guaranteeing $t=0$ (for more general $b$) in the context of Sobol' sequences.
\begin{figure}[t]
\small    
\begin{overpic}[width=1.7cm]{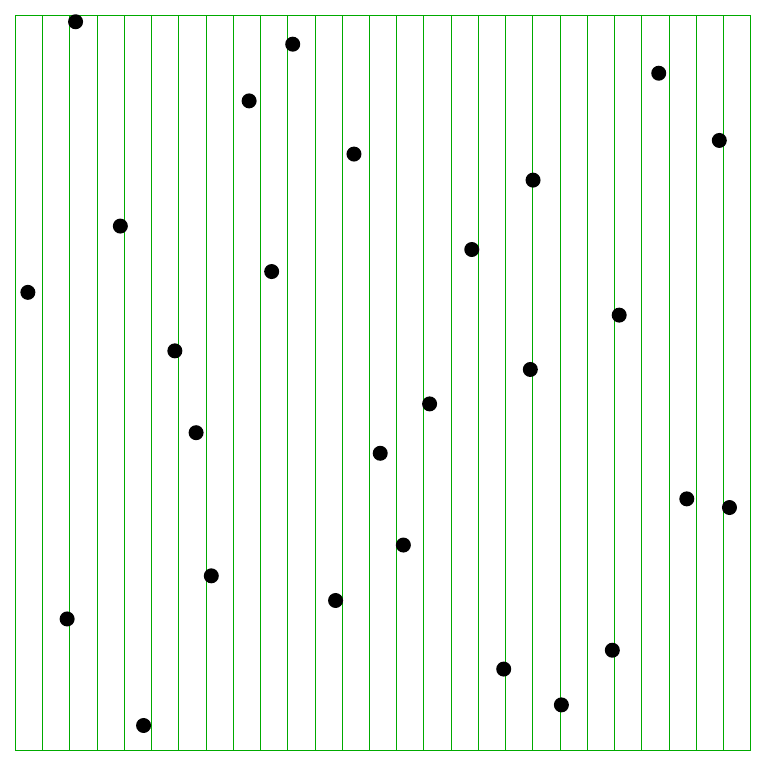}
        \put(350,-25){\rotatebox{0}{$(0,3,2)-$net, $\mathrm{GF}(3)$}}
        \put(350,-165){\rotatebox{0}{$(1,3,2)-$net, $\mathrm{GF}(3)$}}
        \put(350,-295){\rotatebox{0}{$(0,2,2)-$net, $\mathrm{GF}(5)$}}
    \end{overpic}
    \includegraphics[width=1.7cm]{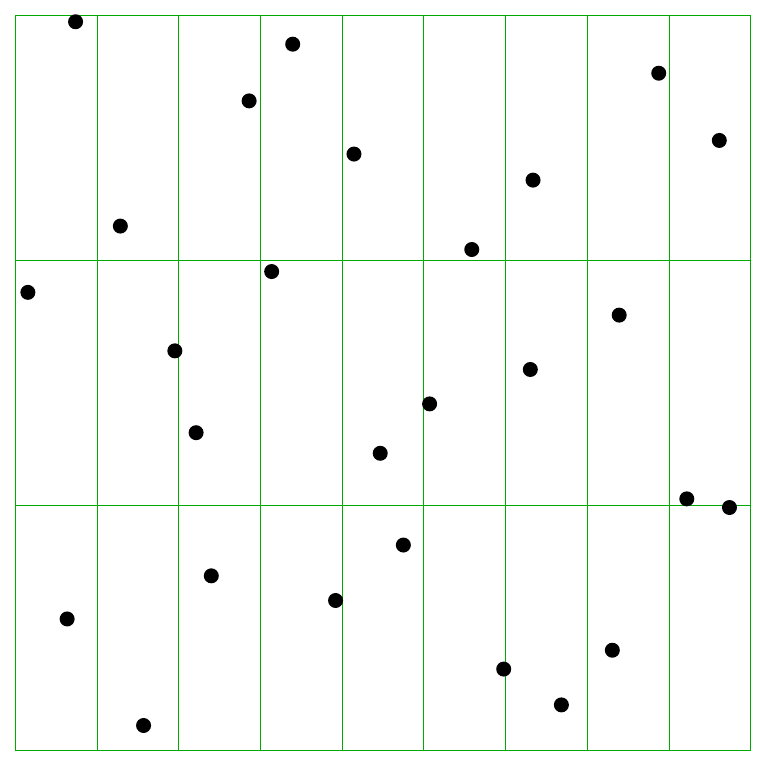}
    \includegraphics[width=1.7cm]{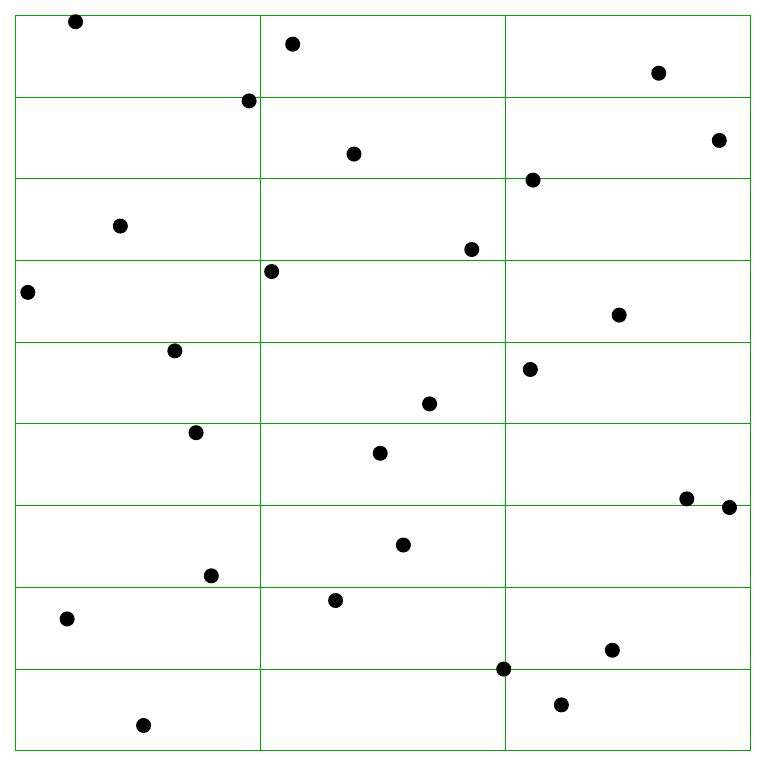}
    \includegraphics[width=1.7cm]{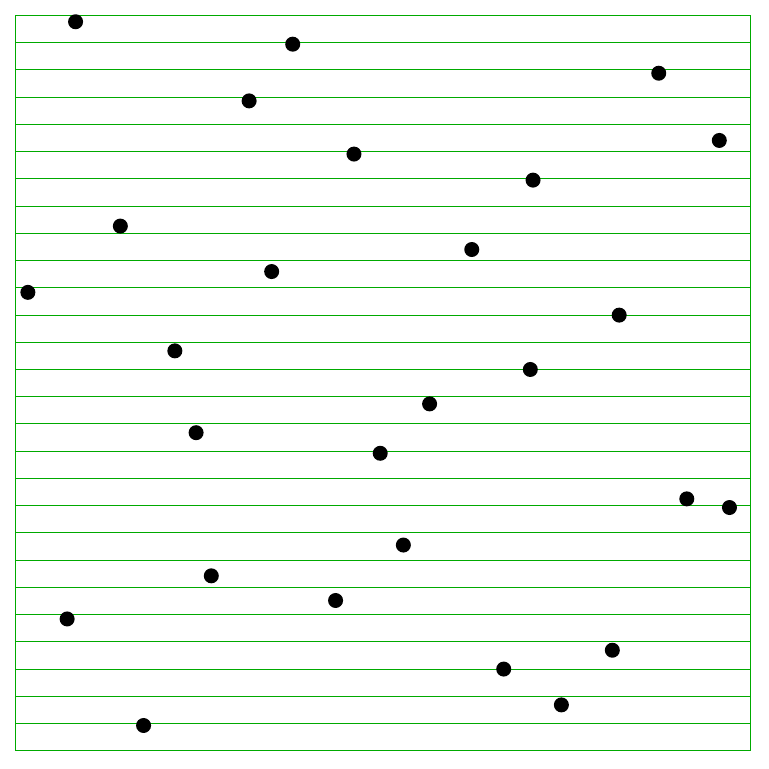}\\
    \vspace{.2cm}
    \hrule
     \vspace{.2cm}
    \includegraphics[width=1.7cm]{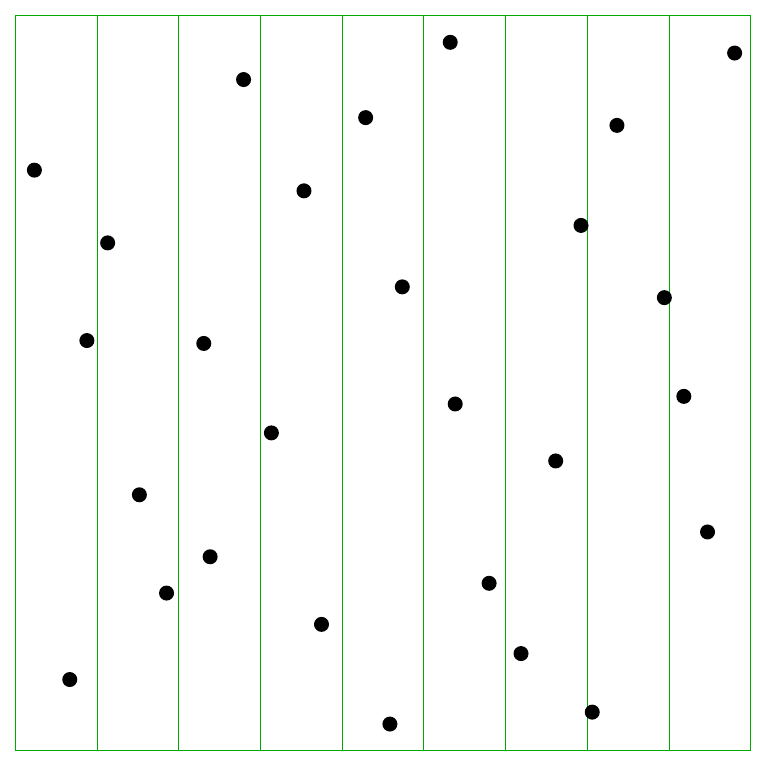}
    \includegraphics[width=1.7cm]{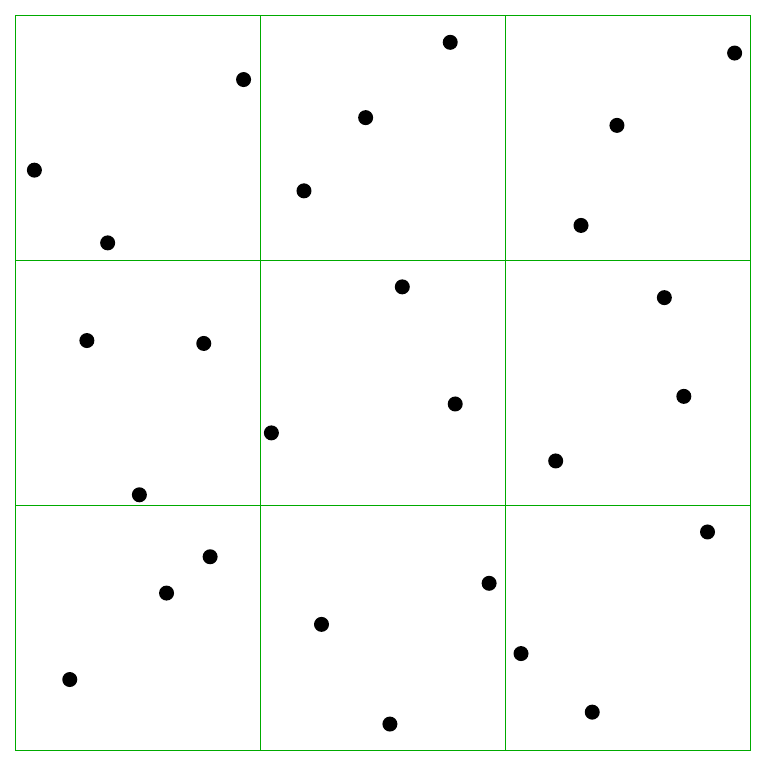}
    \includegraphics[width=1.7cm]{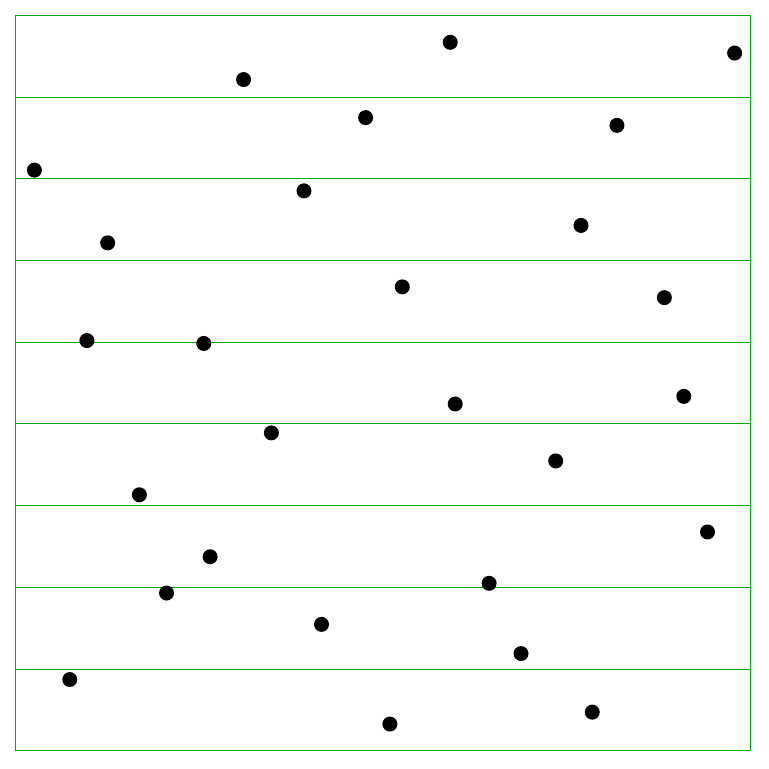}\\
    \vspace{.2cm}
    \hrule
    \vspace{.2cm}
    \includegraphics[width=1.7cm]{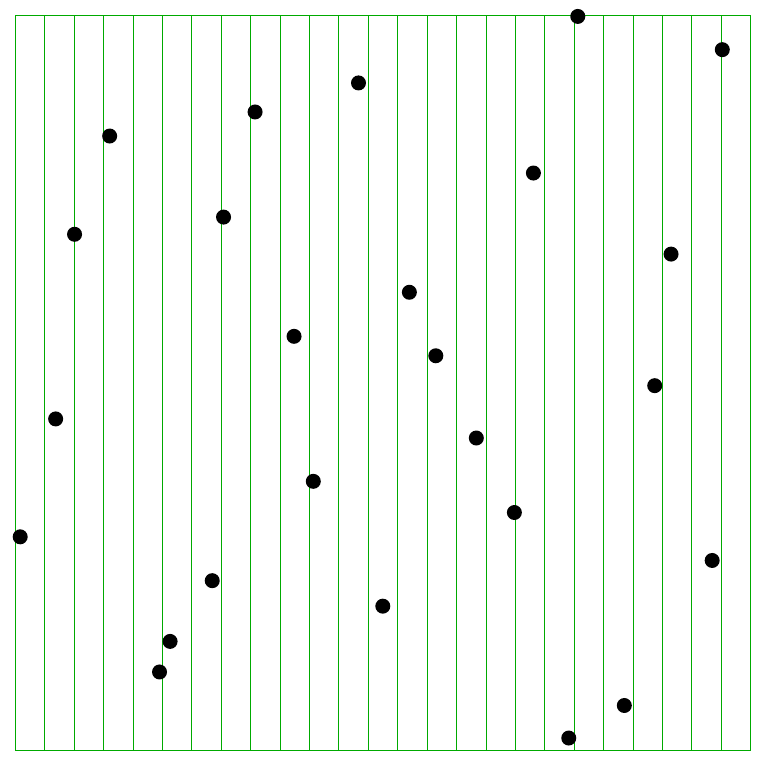}
    \includegraphics[width=1.7cm]{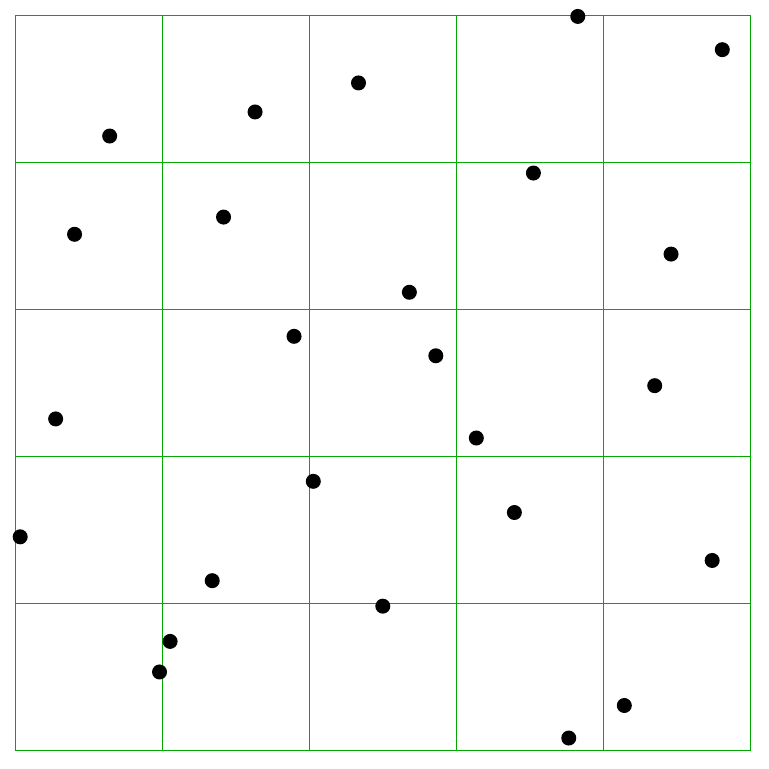}
    \includegraphics[width=1.7cm]{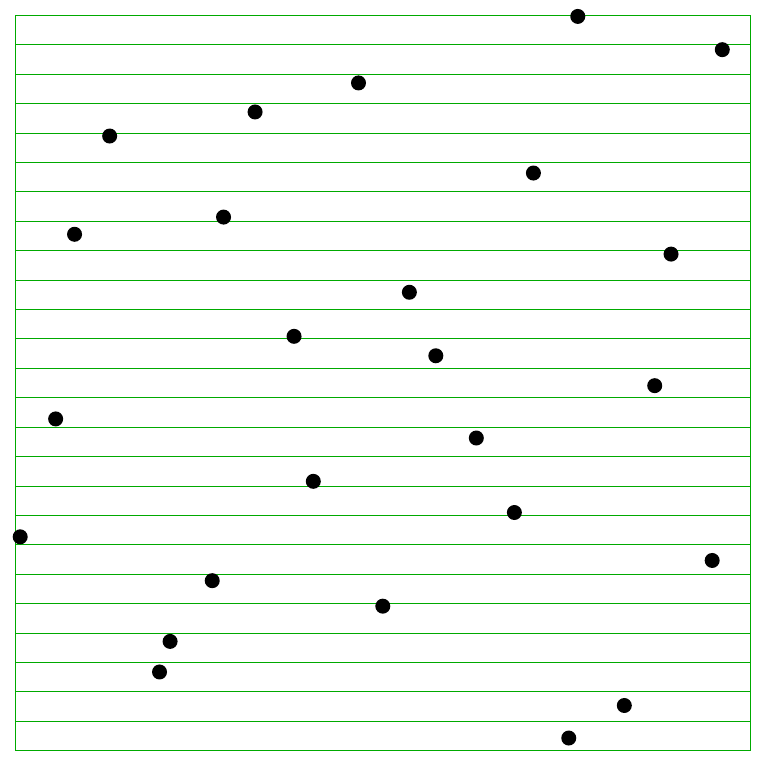}
   
    \caption{A $(t,m,2)$-net in base $b$ is a set of $n=b^m$ samples in 2D such that all $b$-adic intervals of area $b^t/b^m$ contain $b^t$ points. 
From top to bottom, we illustrate a $(0,3,2)$-net of $3^3 = 27$ samples in base $3$ with its 4 stratification constraints with $3^0=1$ sample in each stratum, a $(1,3,2)$-net of $3^3=27$ samples in base $3$ with its 3 stratification constraints with $3^1=3$ samples in each stratum, and a $(0,2,2)$-net of $5^2 = 25$ samples in base $5$ with its 3 stratification constraints  with $5^0=1$ sample in each stratum. Increasing the base and increasing $t$ reduces the number of constraints imposing uniformity. We ensure $t=0$ for arbitrary prime base. 
    \label{fig:t1}}
\end{figure}

In this paper, we show that similarly to the case of affine polynomials, we can guarantee $t=0$ by considering a set of polynomials $p_i(x) = p(x) + c_i$ of degree $e$ that only differ by a constant $c_i$. In this case, we show that the Sobol' constructed matrices also amount to powers of Pascal matrices, tensorized by an $e \times e$ matrix, and that $t=0$ as soon as the first few Sobol' iterations yield $t=0$ point sets. 
To ensure irreducibility of these polynomials and compatibility with Sobol', we then use Artin-Schreier irreducible polynomials of the form 
$p_i(x) = x^b - x + c_i$ in $\mathrm{GF}(b)$ with prime $b$ and $c_i \in \mathrm{GF}(b)^\times$ to obtain $b-1$ irreducible polynomials in base $b$. 
We further find that initializing these Sobol' recurrences with powers of Pascal matrices left and right-multiplied by diagonal matrices always ensures that $t=0$, and provide a greedy solution to explore $t=0$ candidate initializations.
We finally conclude with numerical experiments to evaluate the uniformity of our $(b-1)$-dimensional Artin-Schreier sequences as well as $(2b-1)$-dimensional sequences obtained by concatenating coordinates obtained by the $b$ linear polynomials and the $b-1$ Artin-Schreier polynomials.

\section{Background}
\label{sec:background}

\paragraph*{Digital nets}
To produce a well-distributed set of $b^m$ sample points in $s$ dimensions, the idea of digital nets is to first construct $s$ matrices $\{M_i\}_{i=1\dots s}$ of size $m \times m$ in the Galois Field\footnote{For those unfamiliar with Galois theory, you may consider that for $b$ prime, this amounts to integers modulo $b$} $\mathrm{GF}(b)$. To construct the $j-$th sample, one decomposes the index $j$  in base $b$ to form a column vector of digits (e.g., in base $b=2$, this is the binary representation of $j$). Then each matrix multiplies this column vector to obtain $s$ new column vectors, that are reinterpreted as $s$ coordinates in $[0, 1[$ by considering these vectors as base-$b$ digits with weights $b^{-1}, b^{-2}, \dots$. The uniformity of samples produced with this approach is dictated by the property of matrices $\{M_i\}_{i=1\dots s}$.
For digital nets, the quality factor $t$ can be obtained directly from matrices~\cite{niederreiter1992random,marion2020algorithm}. In particular, a sequence with $t=0$ amounts to ensuring that if one builds a matrix by stacking the $d_i$ first rows of matrix $M_i$, with $\sum_{i=1..s} d_i = m$, the $m\times m$ compound matrix $\cK$ has $\text{det}(\cK) \neq 0$ (in $\mathrm{GF}(b)$)  for all $m$ and all $\{d_i\}$.

\paragraph*{Sobol' construction}
A well-established way of producing good matrices is through Sobol' construction~\cite{sobol1967distribution}.
Given a primitive polynomial $p$ of degree $e$ in base $b$, the original Sobol' approach recursively builds an upper triangular matrix one column at a time, as a combination of the $e$ previous columns. One matrix is produced for each dimension, and thus $s$ dimensions require $s$ polynomials and $e \times e$ matrices to initialize the recursion.
Numerical search over good $e \times e$ matrices was notably performed by Joe and Kuo~\cite{joe2008constructing} for $b=2$, and then Ostromoukhov et al.~\cite{OBCI24} for $b=3$.

The core of our paper uses the construction of Faure and Lemieux~\cite{faure2016irreducible} who obtained the same Sobol' matrices by operating the recursive construction blockwise rather than columnwise. For a Sobol' matrix $M$ and polynomial $p(x) = \sum_{i=0}^e a_i x^i$ of degree $e$, their recurrence reads:
\begin{equation}
\label{eq:faurelemieux}
 M_{i,j} = \left(M_{i,j-1} Q_p + M_{i-1,j-1}\right) F_p 
\end{equation}
where $i,j$ index the $i^{\text{th}}$ block row and $j^{\text{th}}$ block column of matrix $M$, with blocks of size $e \times e$, and $Q_p$ and (inverse of) $F_p$ are matrices\footnote{We identified sign errors in $Q_p$ and $F_p$ in the original paper of Faure and Lemieux~\cite{faure2016irreducible} that do not affect base $b=2$ but affect other bases} given by~\cite{faure2016irreducible,BCIO25}:

\begin{equation}
\label{eq:q}
Q_p = 
\left[
\begin{array}{ccccc}
 -a_0 & 0 & 0 & \dots & 0 \\
 -a_1 & -a_0 & 0 & \dots & 0 \\
 -a_2 & -a_1 & -a_0 & \dots & 0 \\
 \vdots & \vdots & \vdots & \ddots & \vdots \\
 -a_{e-1} & -a_{e-2} & -a_{e-3} & \dots & -a_0 \\
\end{array}
\right]
\end{equation}

\begin{equation}
\label{eq:f}
F^{-1}_p = 
\left[
\begin{array}{ccccc}
 1 & -a_{e-1} & -a_{e-2} & \dots & -a_{1} \\
 0 & 1 & -a_{e-1} & \dots & -a_{2} \\
 0 & 0 & 1 & \dots & -a_{3} \\
 \vdots & \vdots & \vdots & \ddots & \vdots \\
 0 & 0 & 0 & \dots & 1 \\
\end{array}
\right]\,.
\end{equation}

Faure and Lemieux show that using the larger class of irreducible polynomials instead of primitive polynomials, with $b$ a prime power, also yields $(t,s)$-sequences.

\section{Overview}

Our paper investigates conditions that guarantee the highest quality $t=0$ in the context of Sobol' sequences in $\mathrm{GF}(b)$.
In base $b$, one may find $s$ irreducible degree-$e$ polynomials that only differ by a constant (we call them \textit{consecutive polynomials}), although they may not always exist. For instance, in base $b=5$, we may find consecutive degree $e=5$ polynomials $x^5+4x+1$, $x^5+4x+2$, $x^5+4x+3$ and $x^5+4x+4$ that are all irreducible.
In section~\ref{sec:pascal}, we show that given 2 consecutive polynomials, the Sobol' matrices produced are equivalent to powers of Pascal matrices tensorized by $e \times e$ matrices. This relates our construction to Faure's construction of $(0,s)$-sequences that corresponds to the special case where $e=1$~\cite{faure1982discrepance}. However, while in the case $e=1$ the $(0,s)$-sequence condition is automatically satisfied, for $e>1$ this is not always the case.
In section~\ref{sec:conditions}, we  show that testing the $(0, m, s)$-net conditions for the first few submatrices guarantees the $(0,s)$-sequence property up to infinity.
In section~\ref{sec:exist}, we use Artin-Schreier theory to obtain $s=b-1$ degree-$b$ consecutive irreducible polynomials in $\mathrm{GF}(b)$, and show that initializing the Sobol' construction with powers of Pascal matrices always fulfills the conditions for $t=0$.
In section~\ref{sec:numerics} we numerically experiment our sequences with a simple greedy algorithm to explore good initialization matrices, and evaluate discrepancies of the resulting point sets.

\paragraph*{Settings}
We consider a prime base $b$ and $p(x)\in\mathrm{GF}(b)[x]$ a polynomial of degree~$e\ge1$.
We assume that for $\{c_i\in\mathrm{GF}(b)\}_{i=1\dots s}$, the shifted polynomials $p_i = p+c_i$ are distinct irreducible polynomials of degree~$e$. 
We denote $M^{(c_i)}$ their corresponding Sobol' matrix.

\begin{figure*}[!h]\setlength{\tabcolsep}{1pt}
  \begin{tabular}{c|cccccccccccc}
 \multirow{2}{*}{\rotatebox{90}{\quad$GF(5)$}}&  \rotatebox{90}{~\quad$\{x+c_i\}$}& \includegraphics[width=1.5cm]{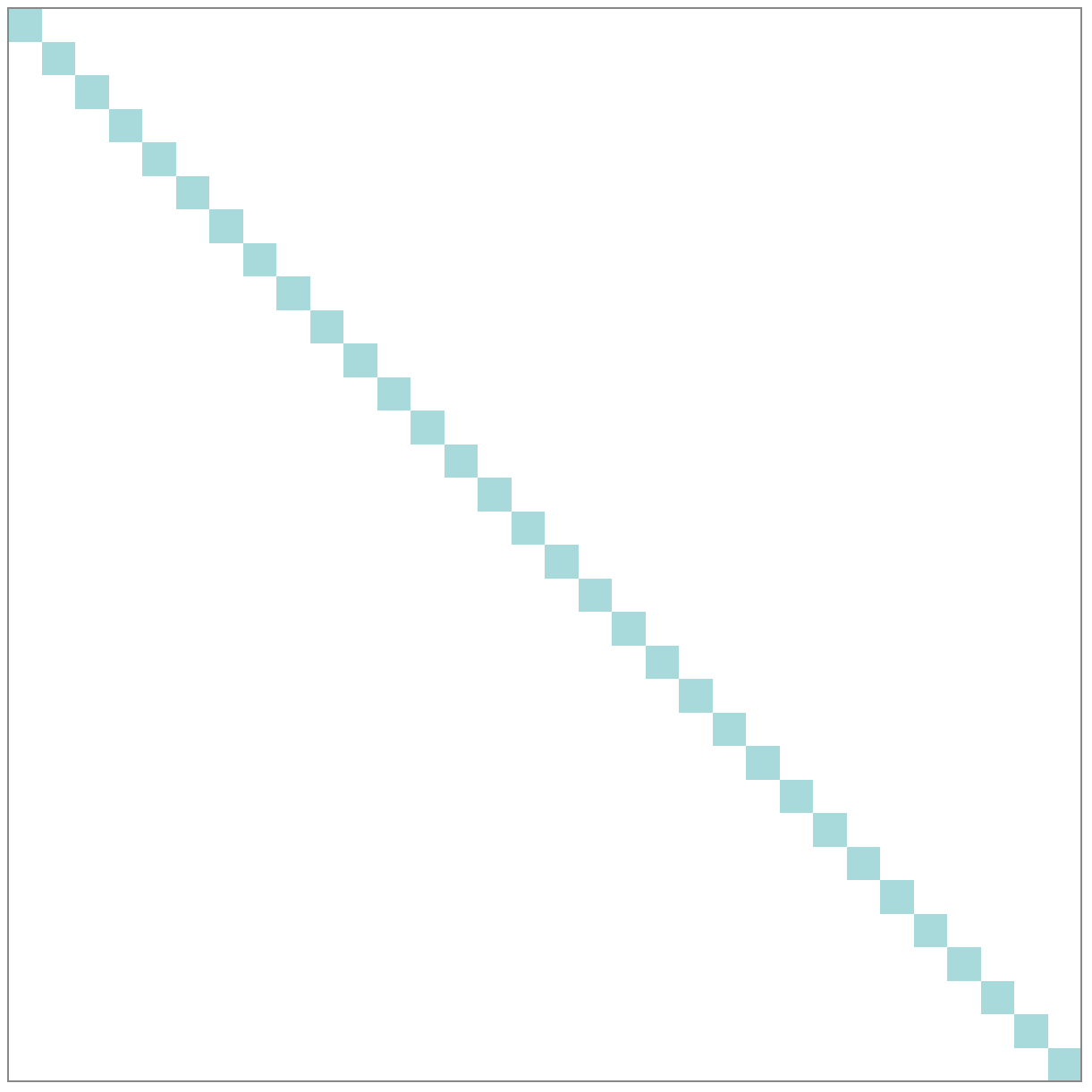}&
    \includegraphics[width=1.5cm]{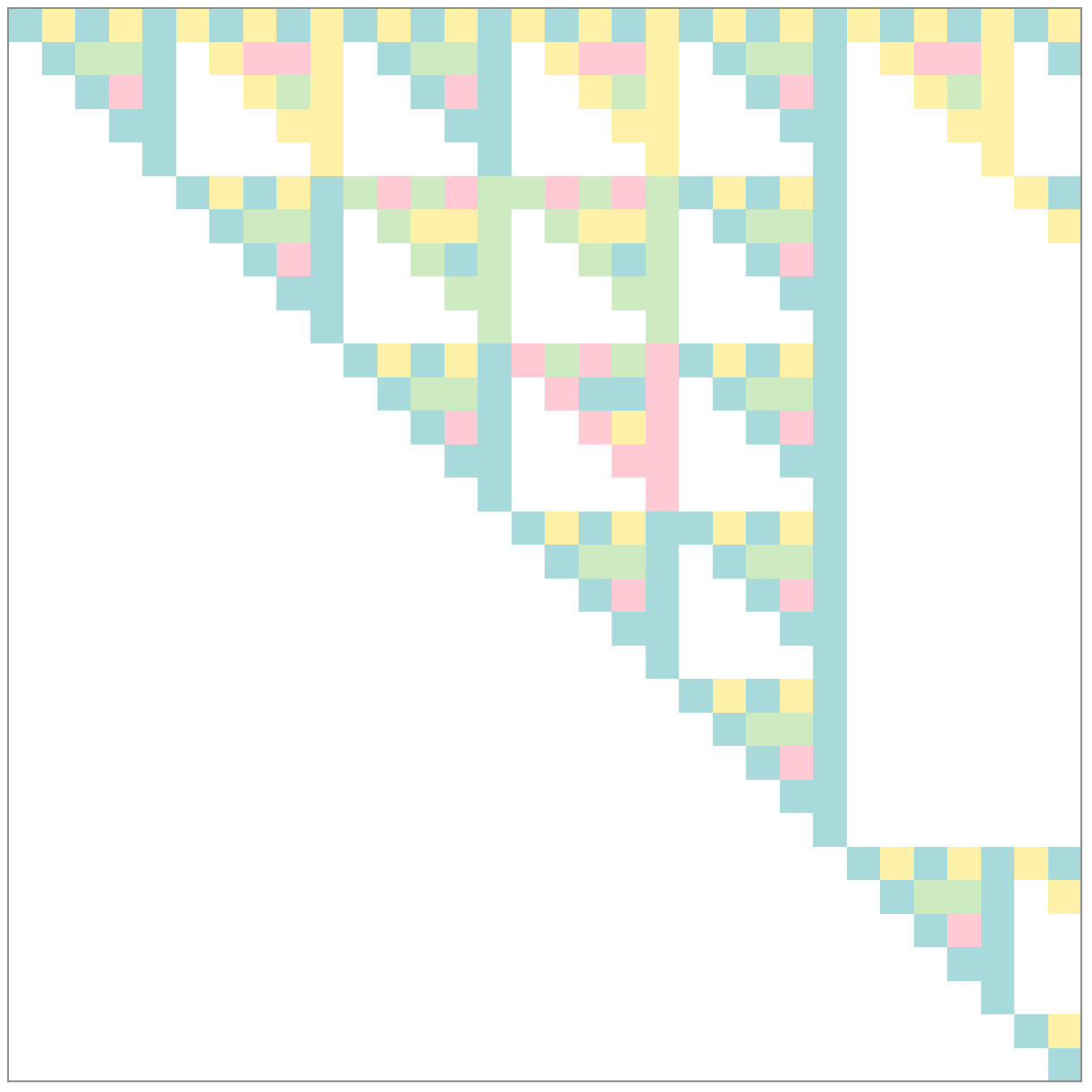}&
    \includegraphics[width=1.5cm]{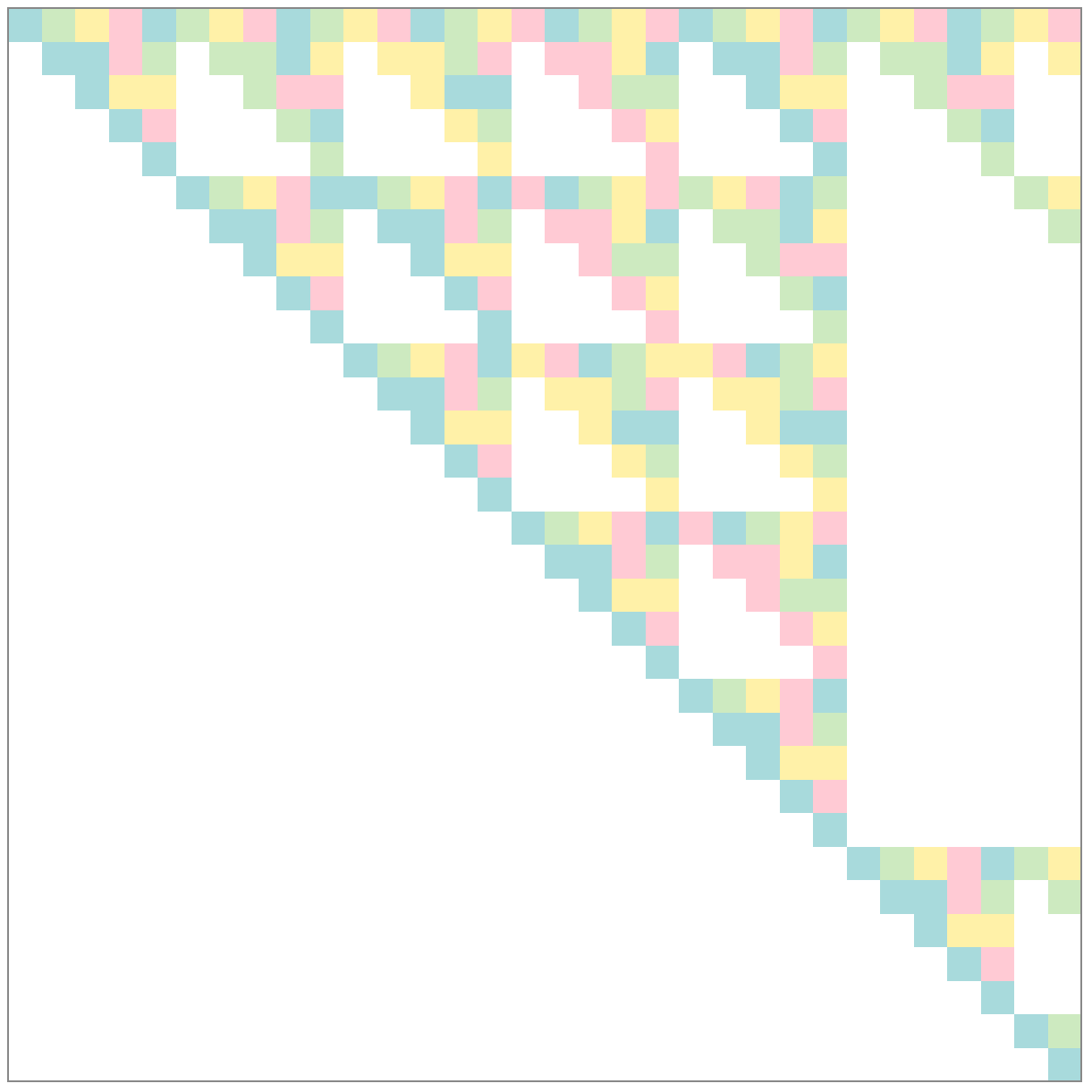}&
    \includegraphics[width=1.5cm]{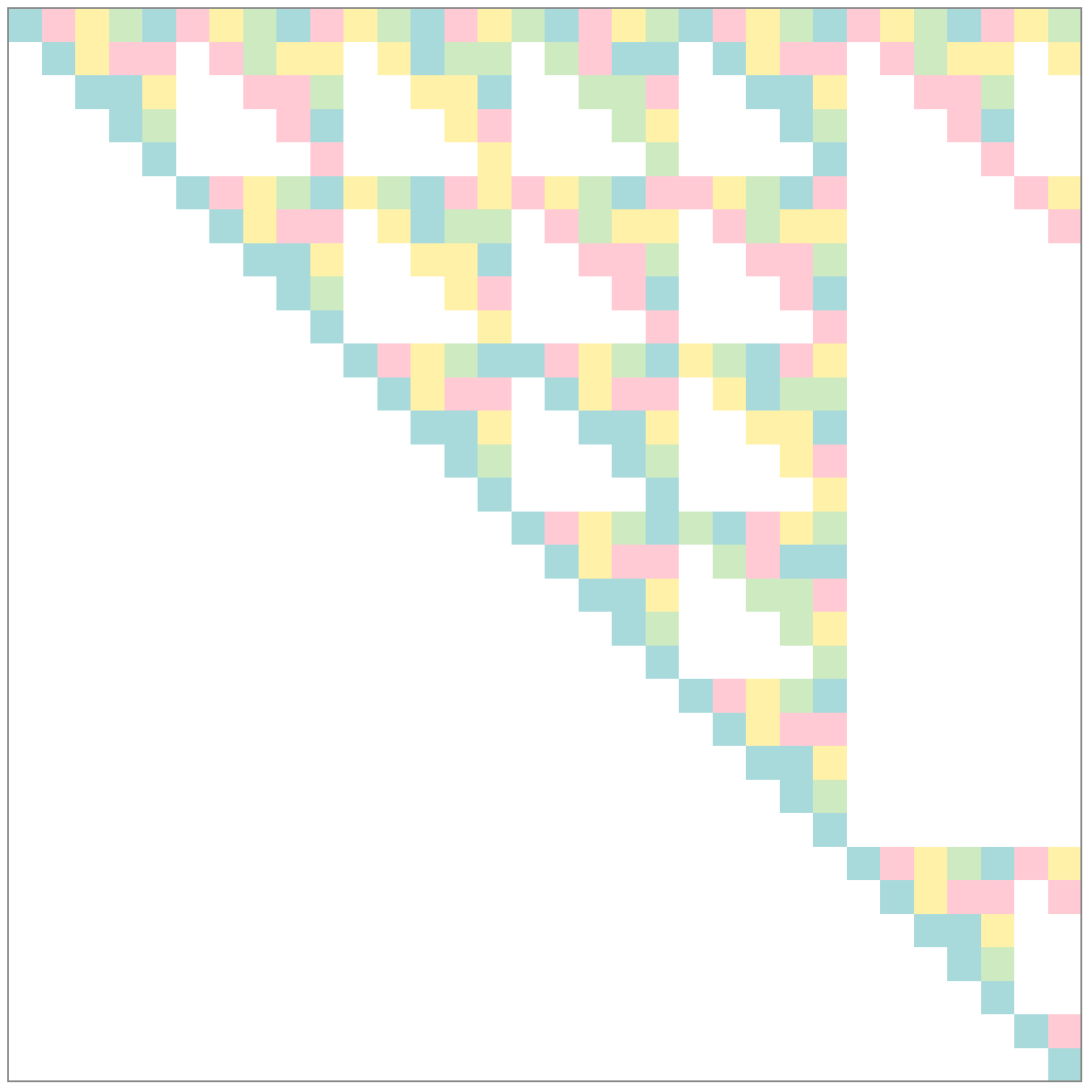}&
    \begin{overpic}[width=1.5cm]{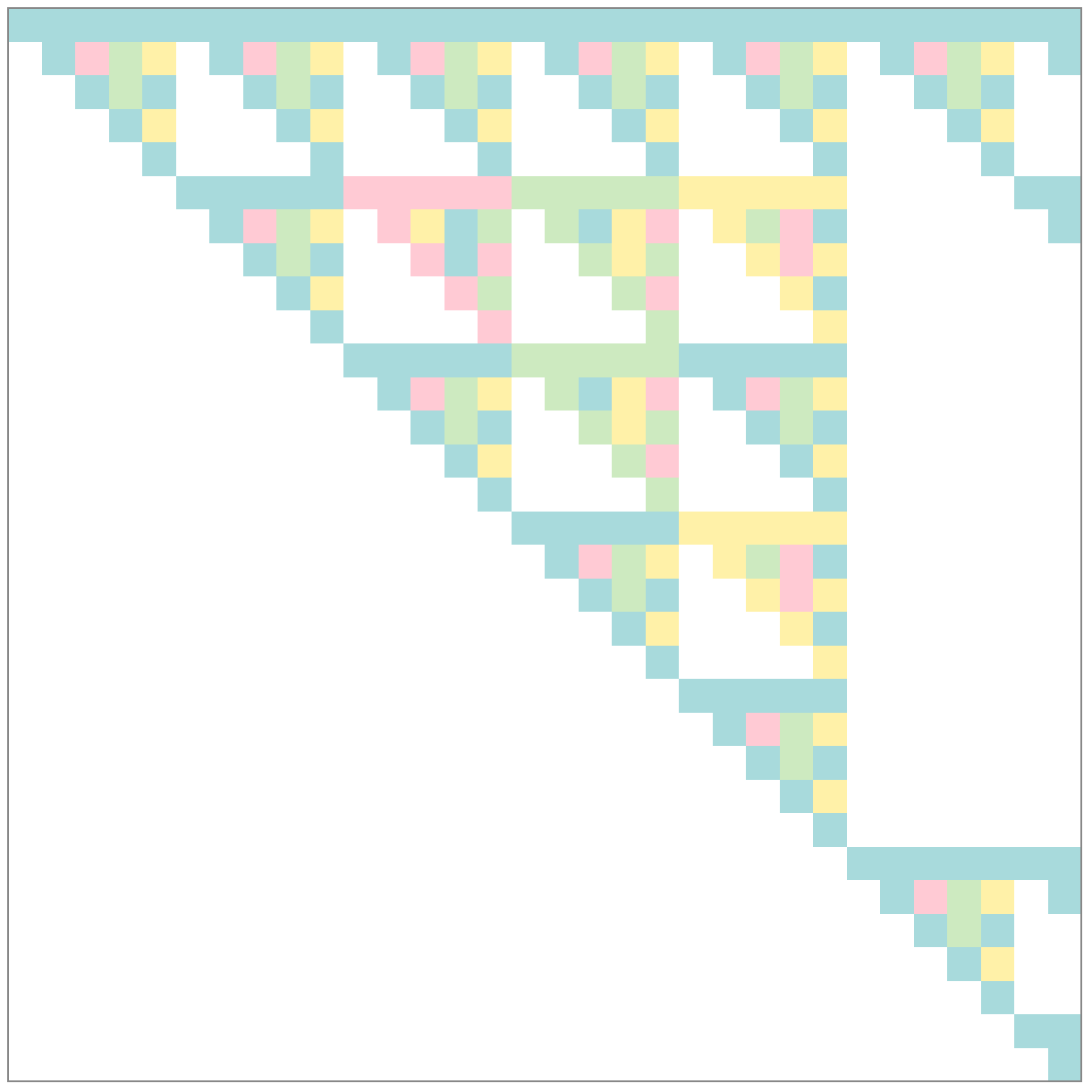}
    \put(600,-100){\includegraphics[width=.4cm]{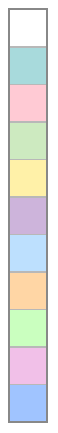}}
    \put(630,-92){10}
    \put(630,-74){9}
    \put(630,-58){8}
    \put(630,-42){7}
    \put(630,-24){6}
    \put(630,-8){5}
    \put(630,8){4}
    \put(630,26){3}
    \put(630,44){2}
    \put(630,62){1}
    \put(630,80){0}  
    \end{overpic}
    &&&&\\
   &\rotatebox{90}{\quad\quad$AS$ } & &
    \includegraphics[width=1.5cm]{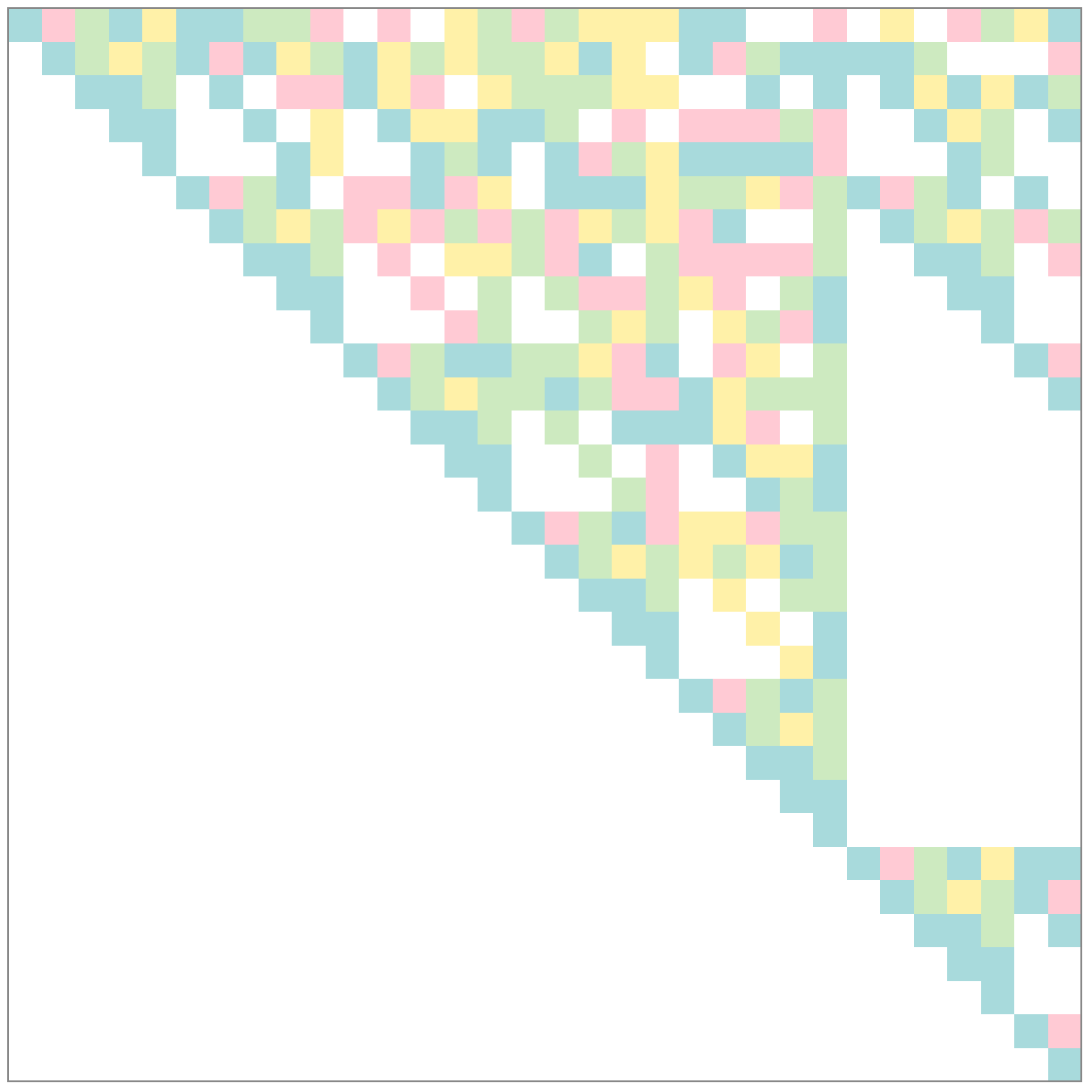}&
    \includegraphics[width=1.5cm]{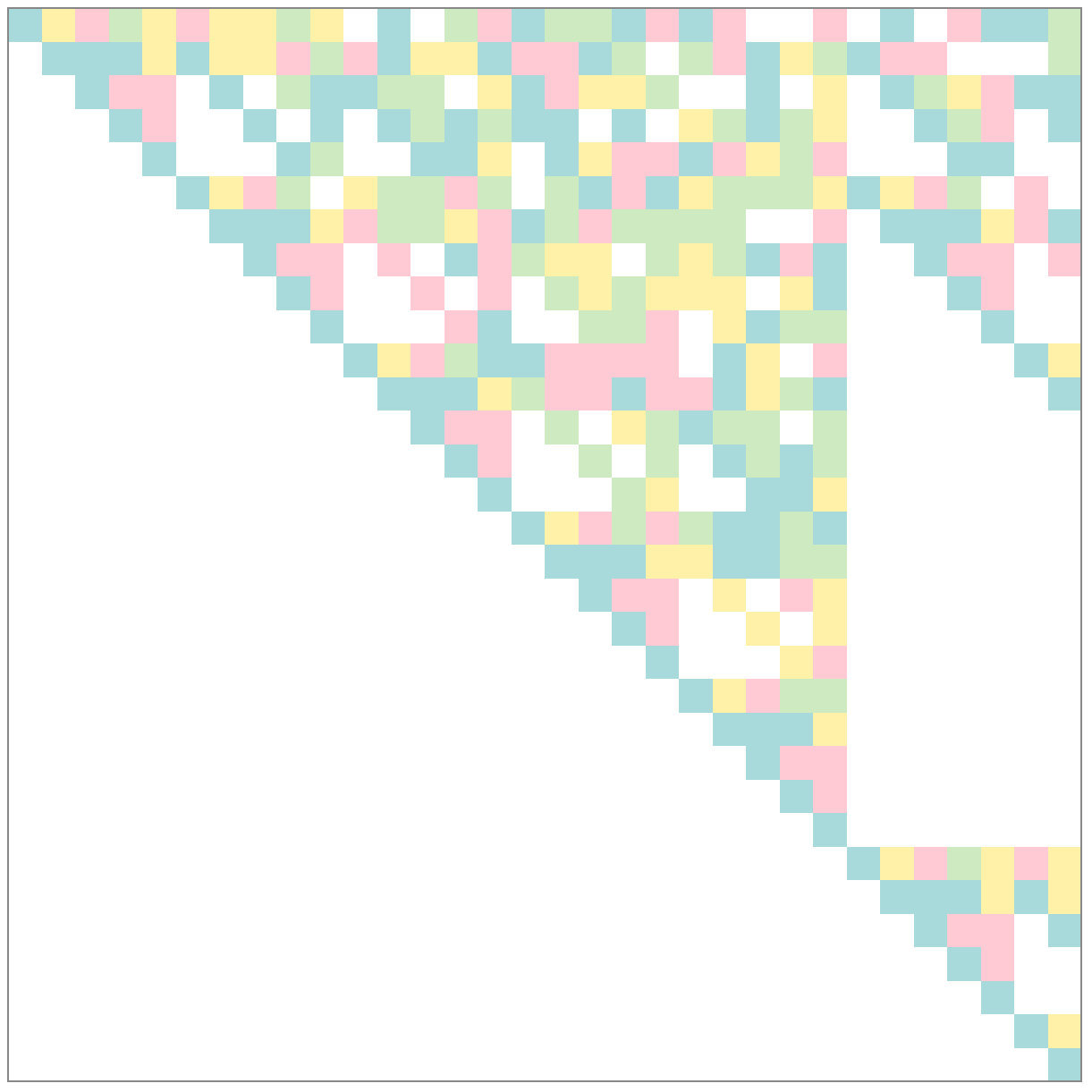}&
    \includegraphics[width=1.5cm]{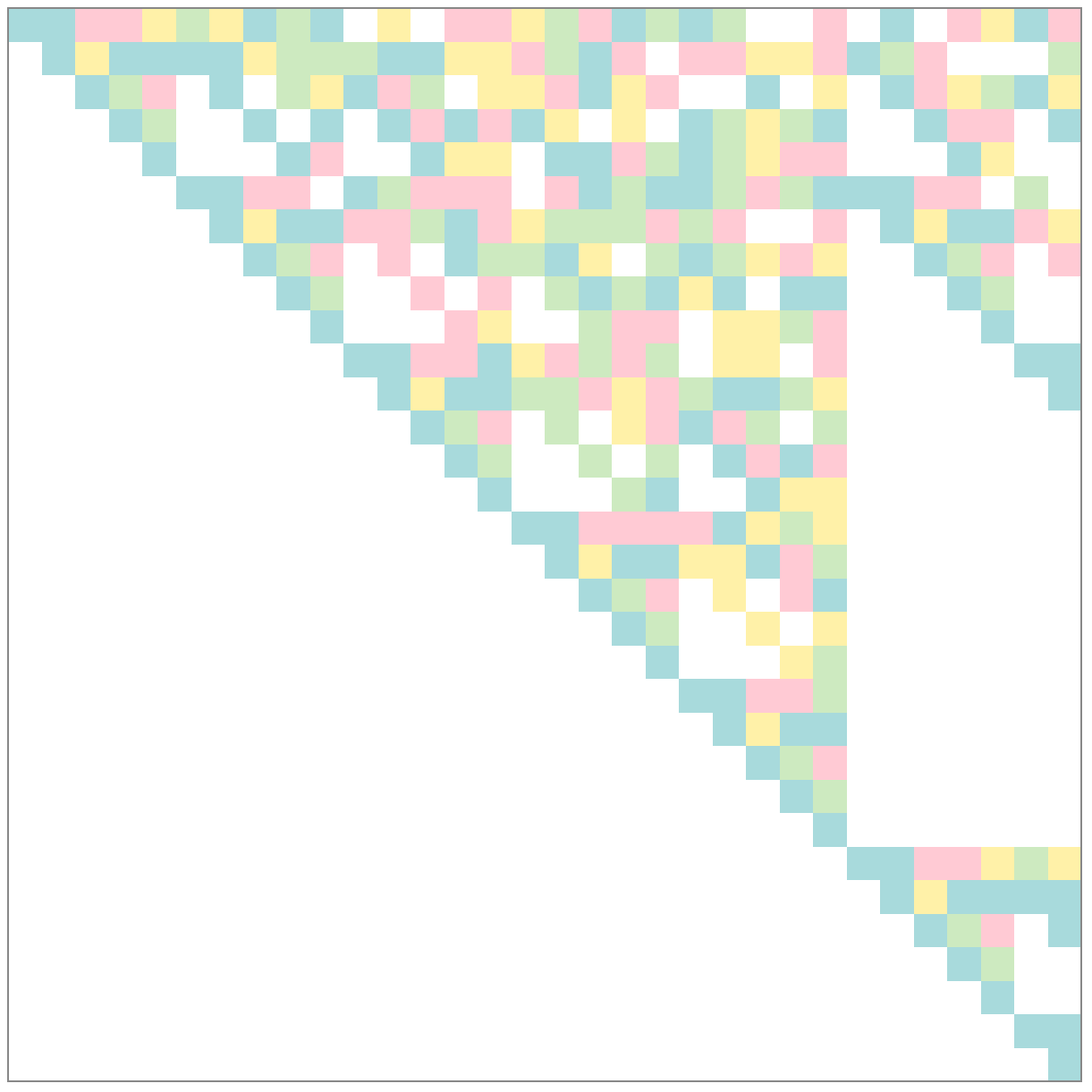}&
    \includegraphics[width=1.5cm]{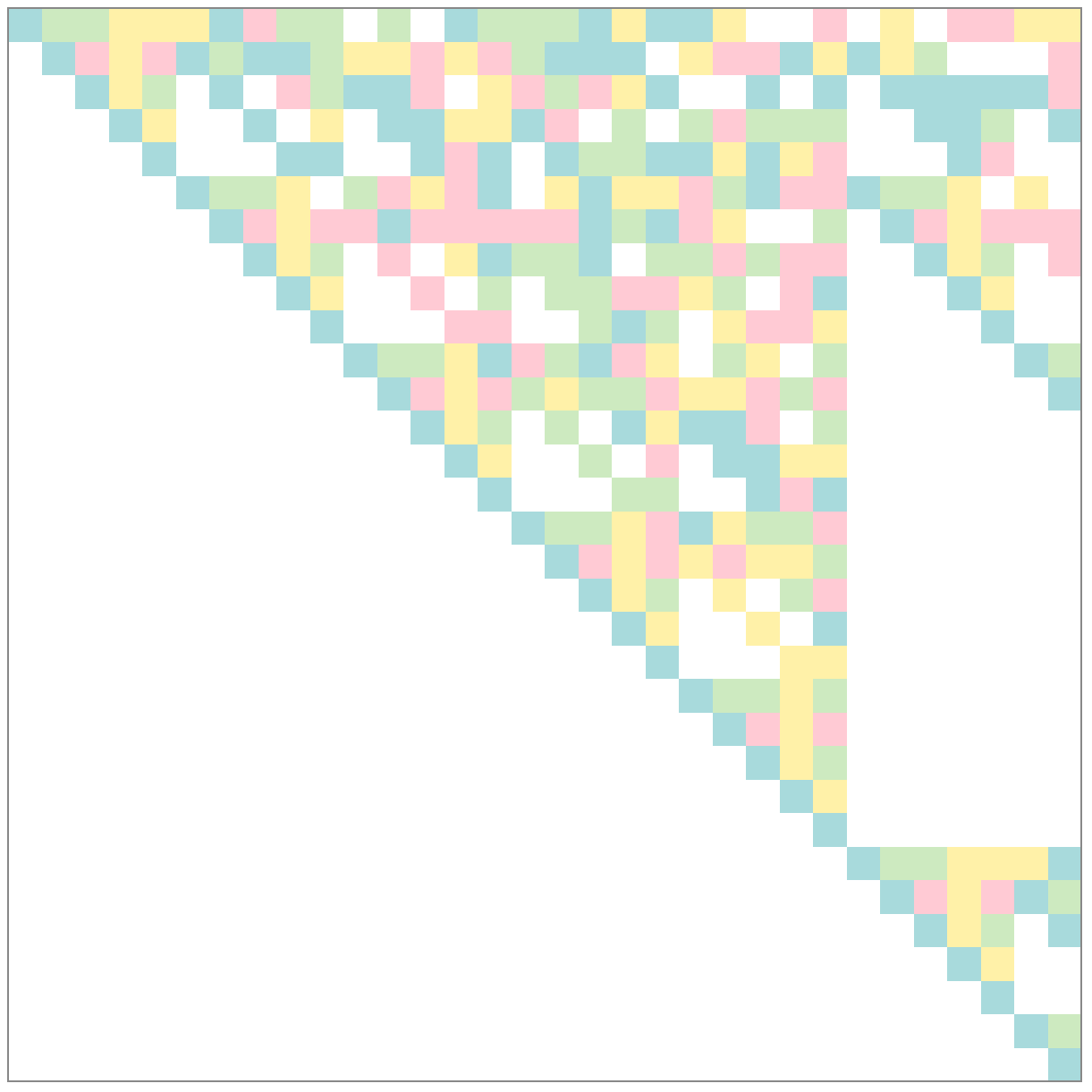}&&&&\\\hline\\
   \multirow{2}{*}{\rotatebox{90}{\quad$GF(7)$}}& \rotatebox{90}{~\quad$\{x+c_i\}$ }&\includegraphics[width=1.5cm]{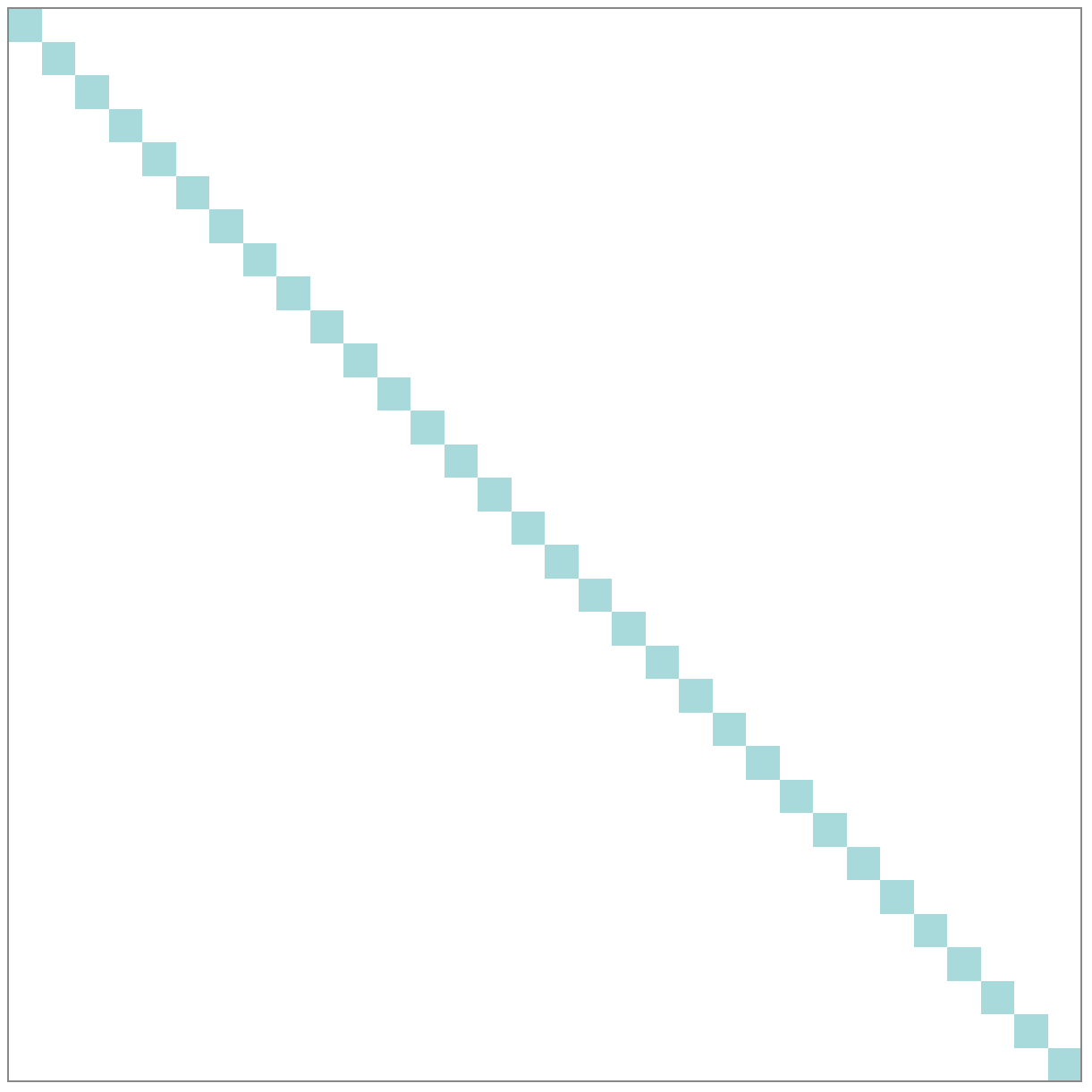}&
    \includegraphics[width=1.5cm]{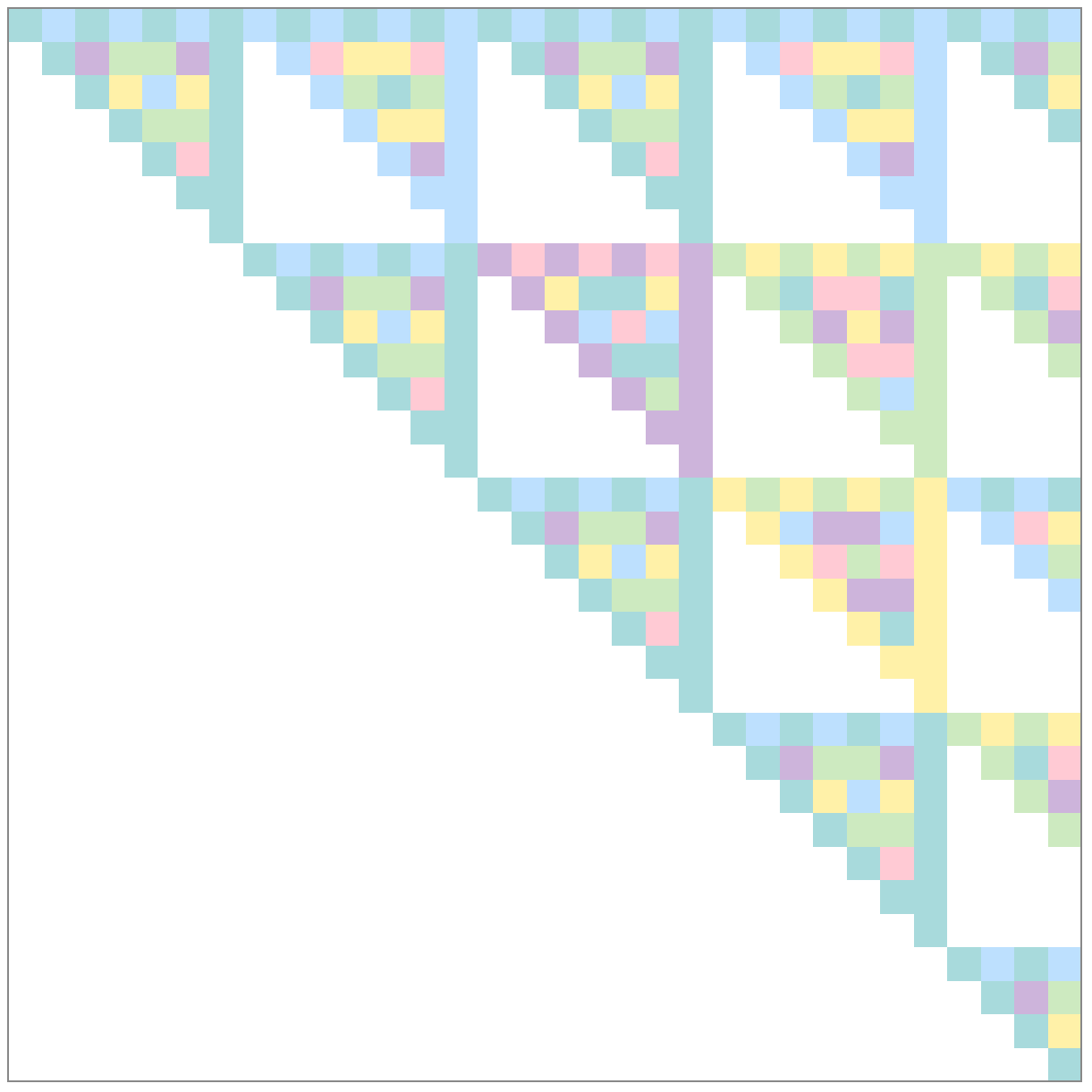}&
    \includegraphics[width=1.5cm]{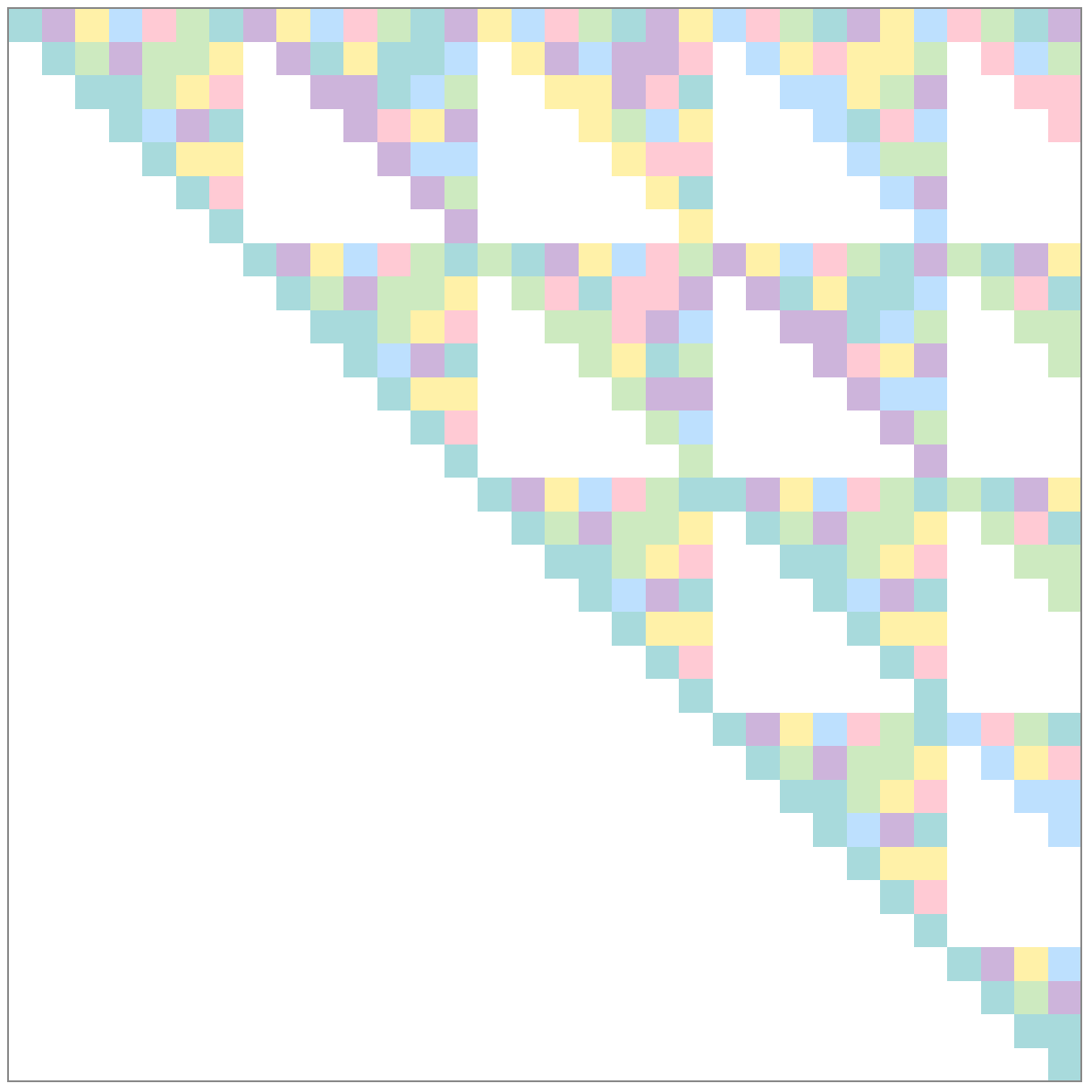}&
    \includegraphics[width=1.5cm]{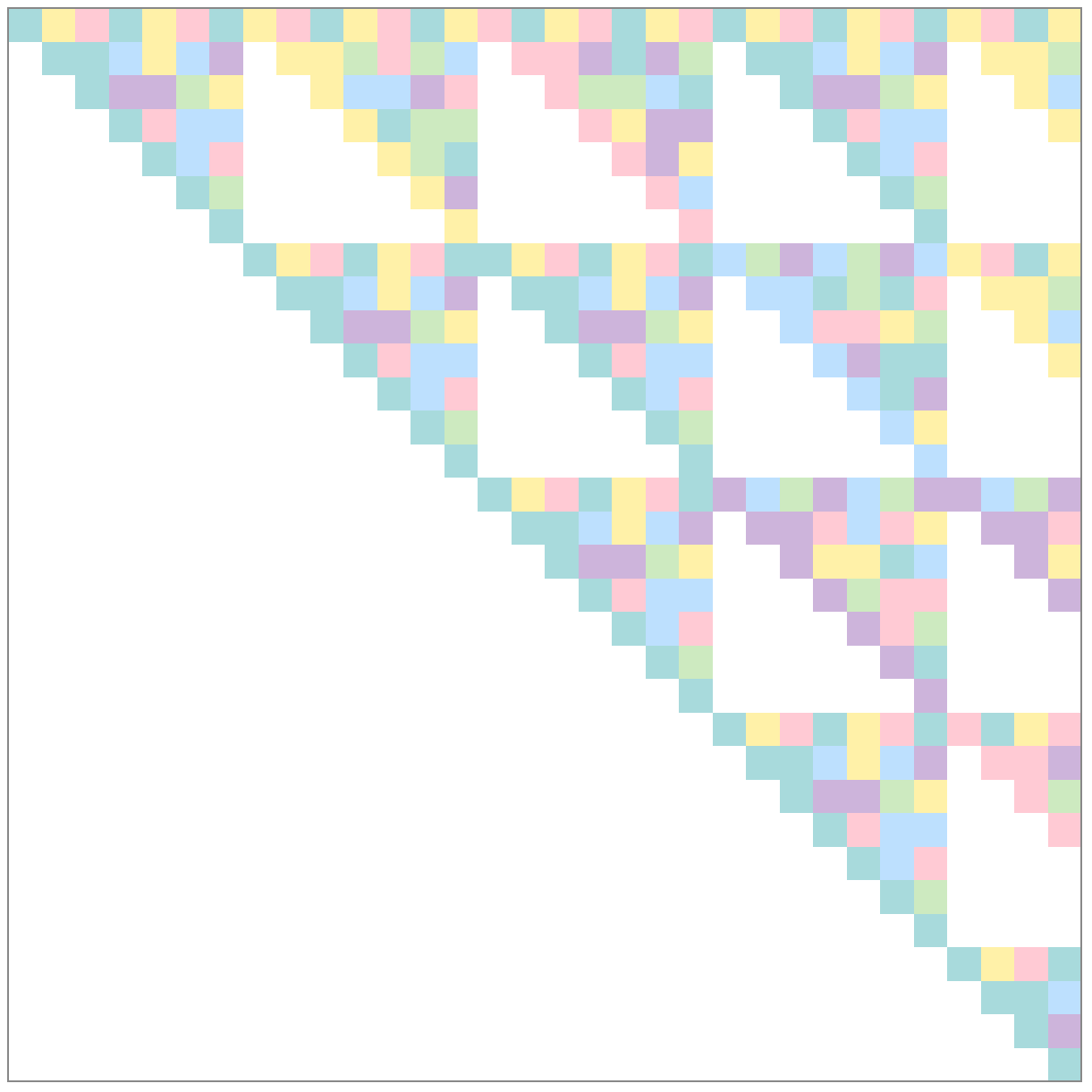}&
    \includegraphics[width=1.5cm]{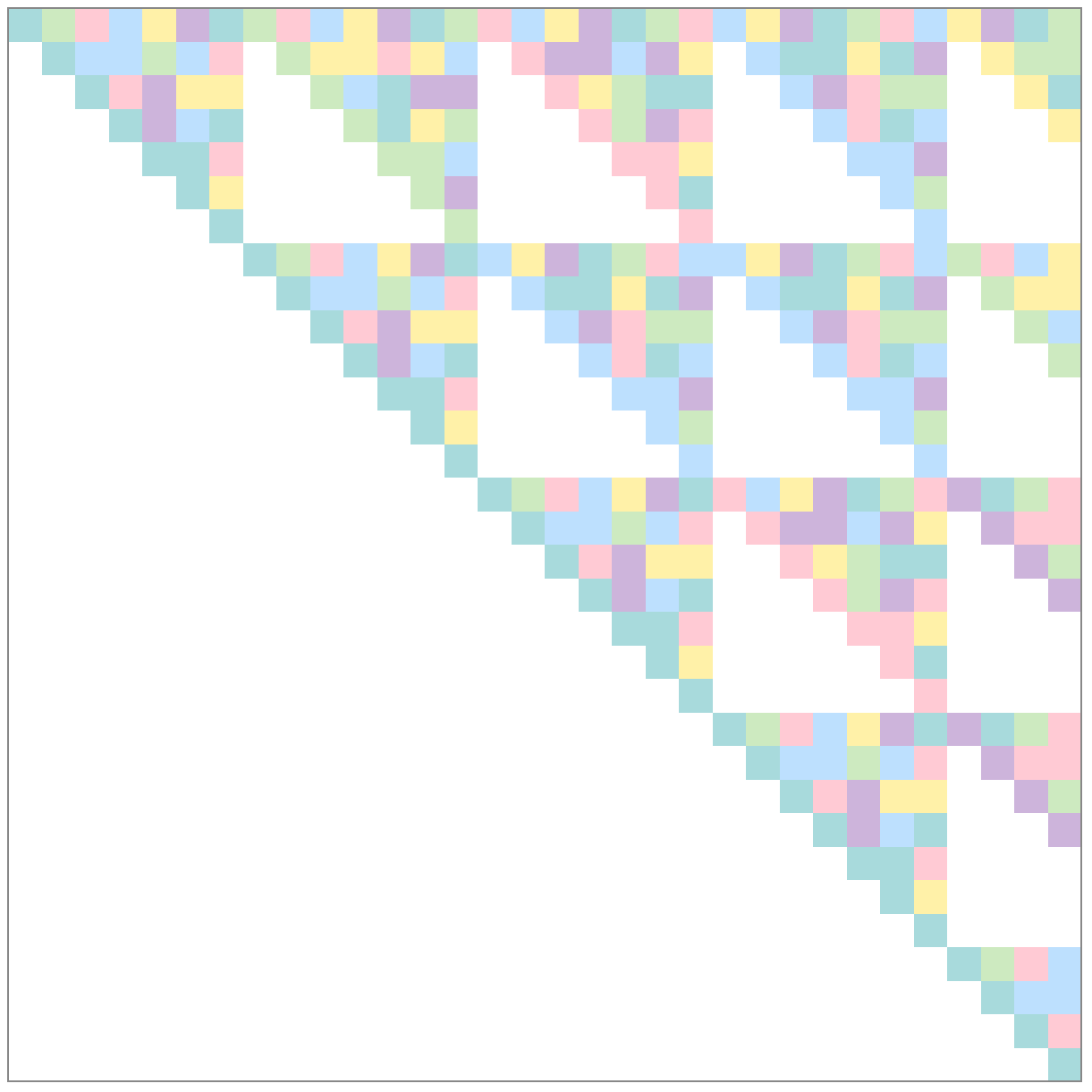}&
    \includegraphics[width=1.5cm]{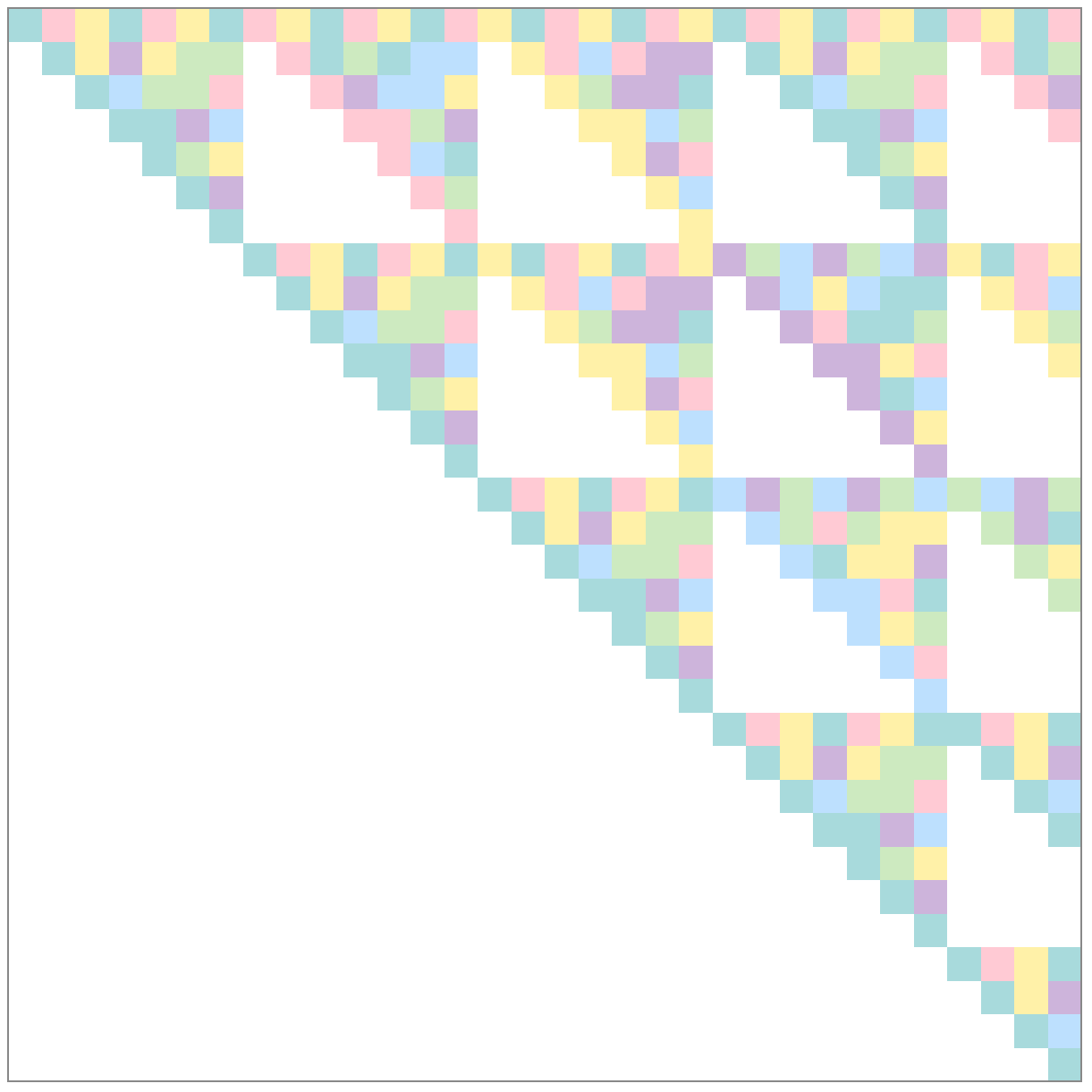}&
    \includegraphics[width=1.5cm]{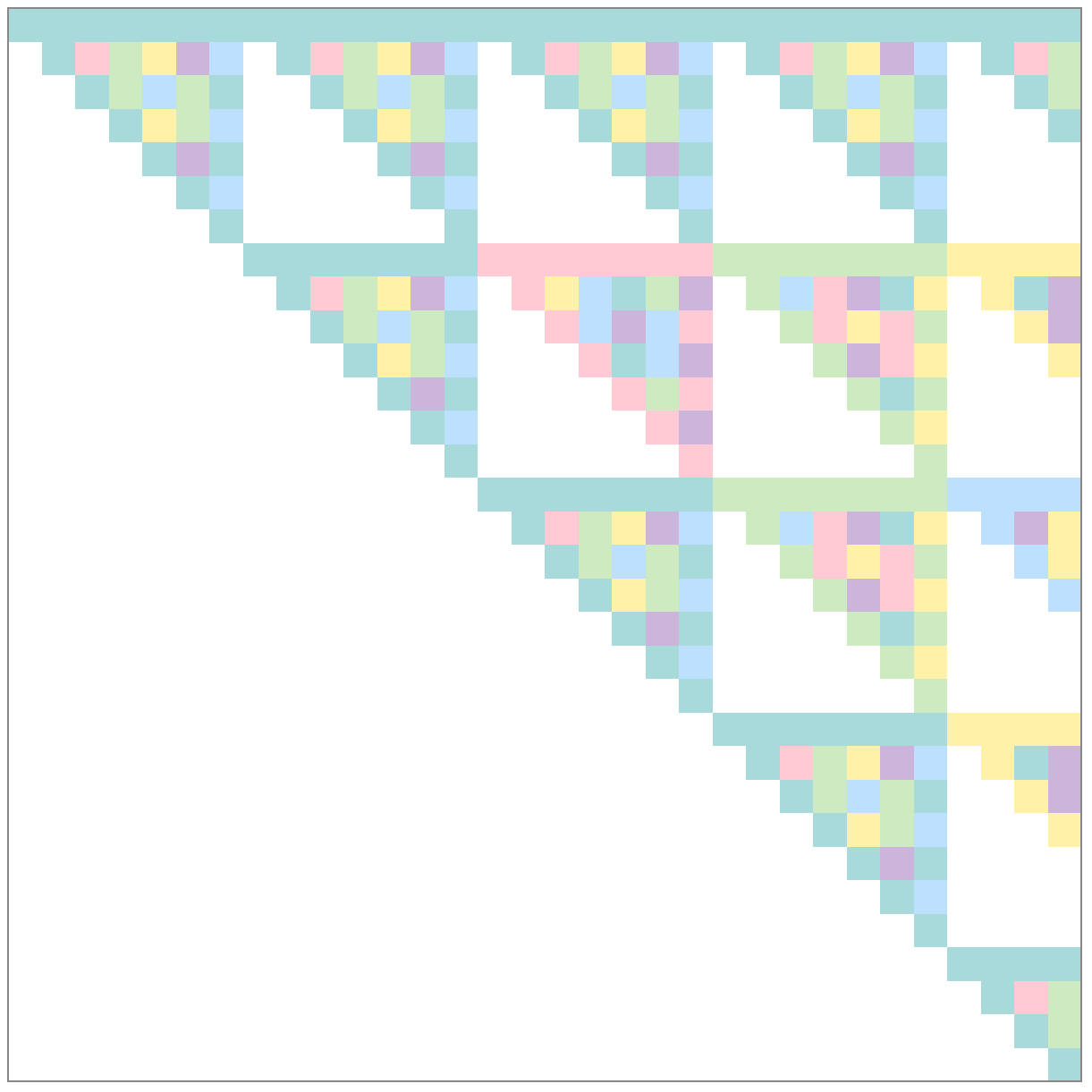}&\\
  & \rotatebox{90}{\quad\quad$AS$ }& & \includegraphics[width=1.5cm]{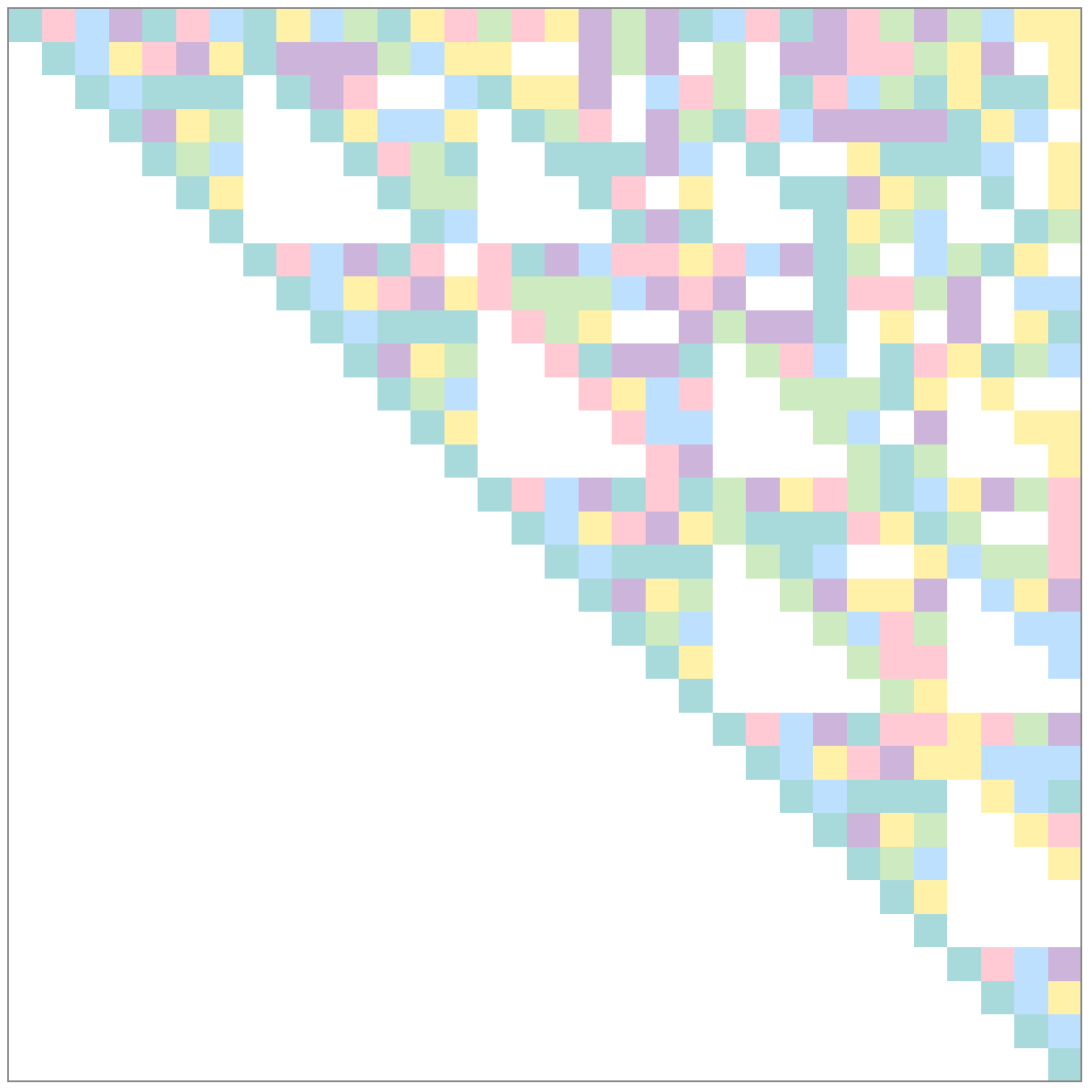}&
    \includegraphics[width=1.5cm]{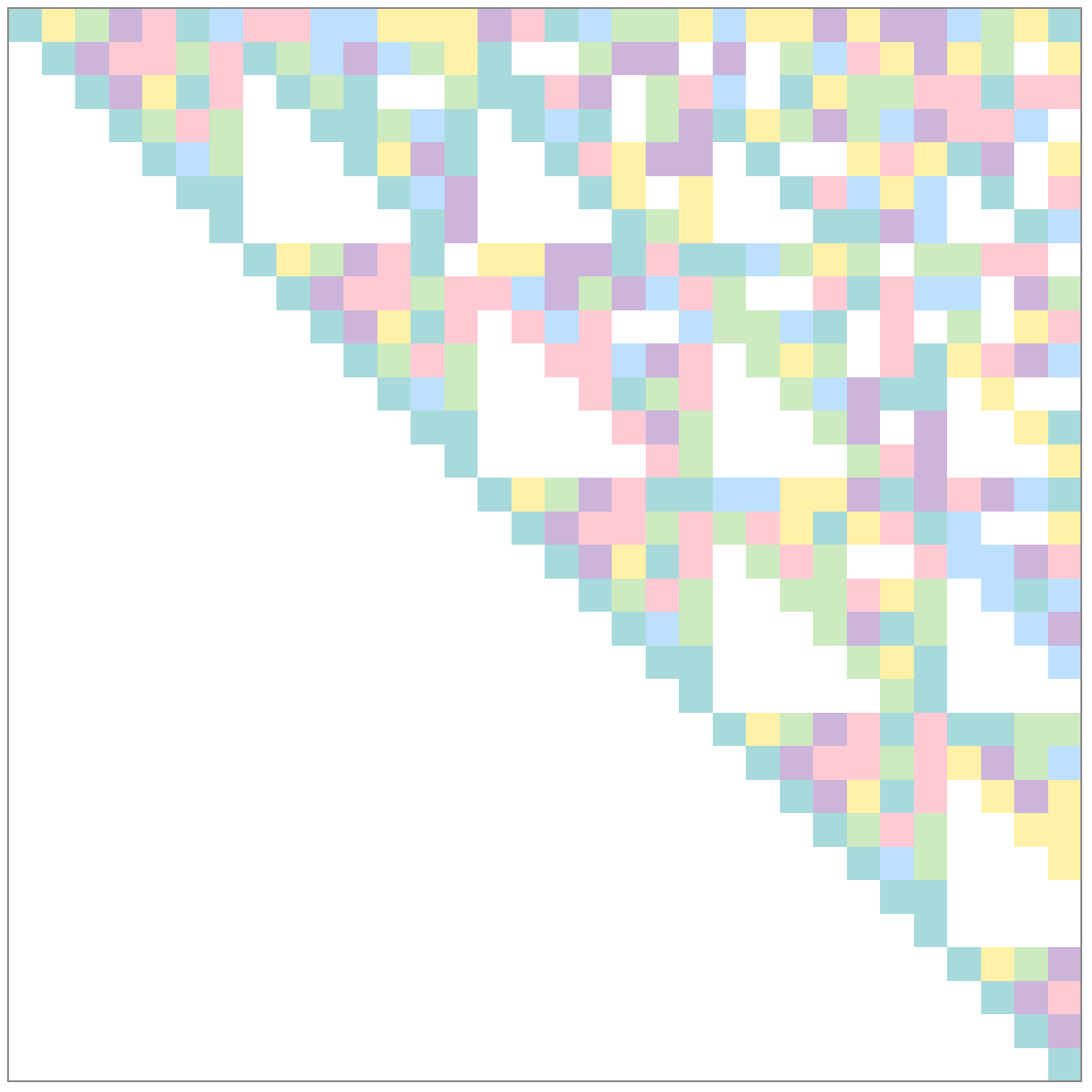}&
    \includegraphics[width=1.5cm]{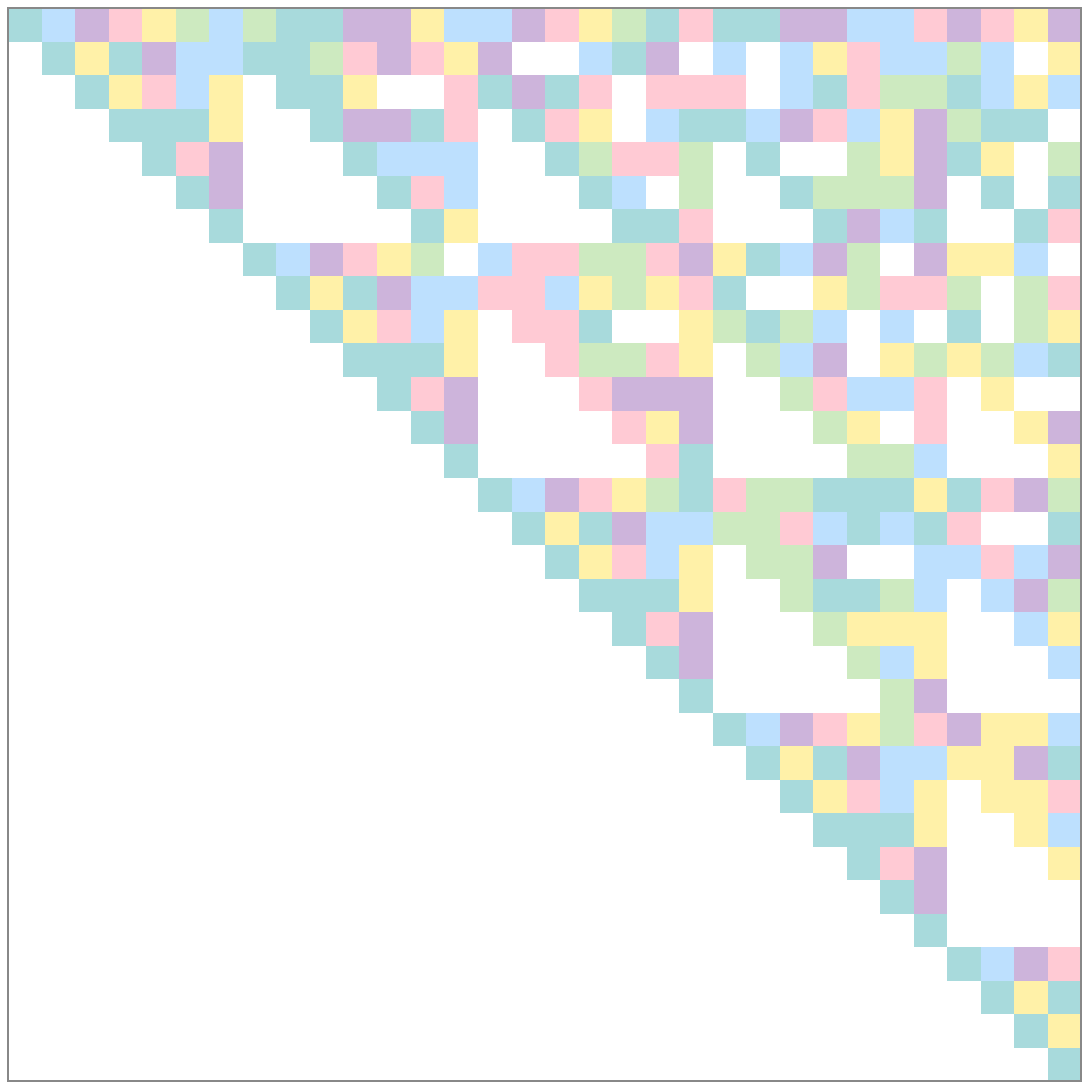}&
    \includegraphics[width=1.5cm]{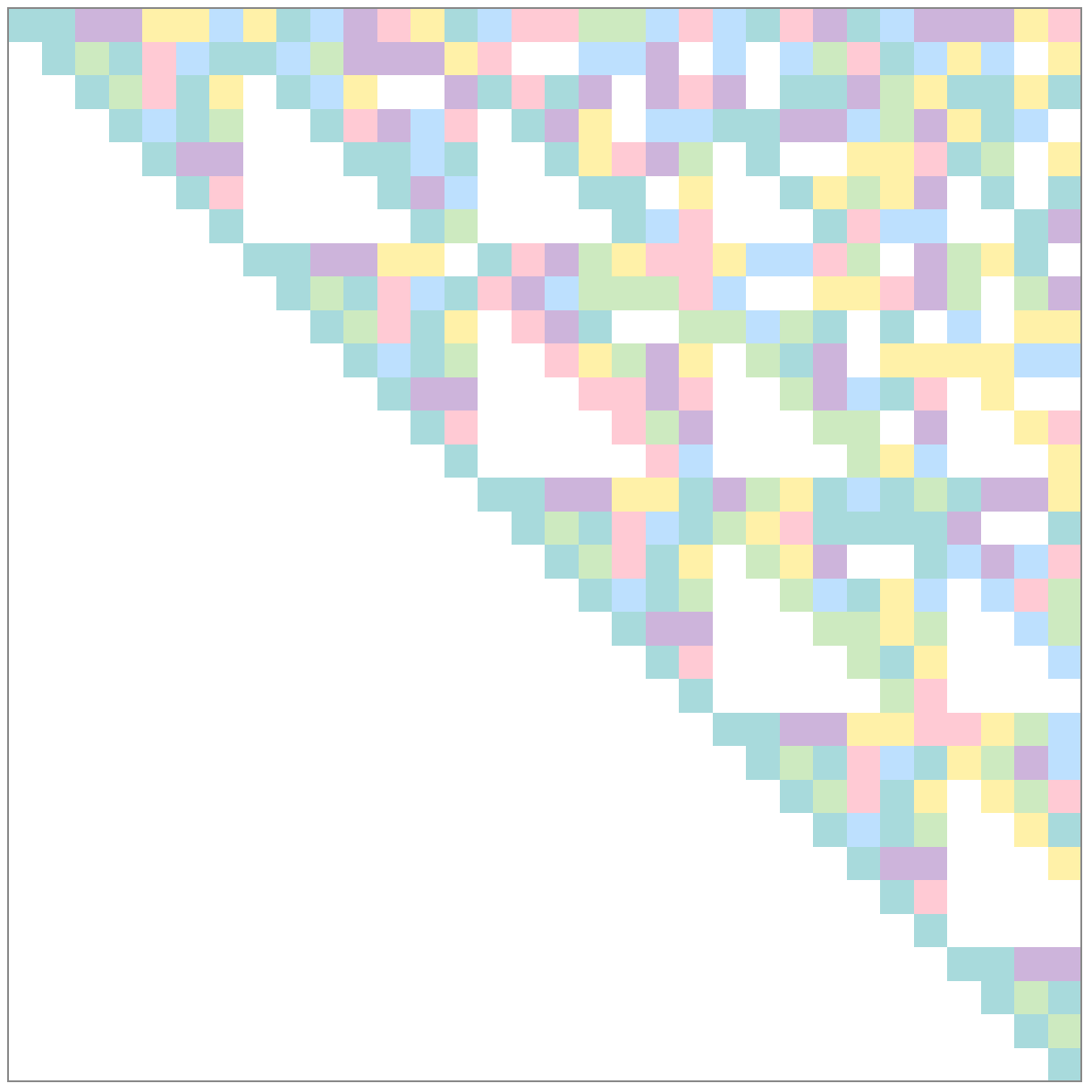}&
    \includegraphics[width=1.5cm]{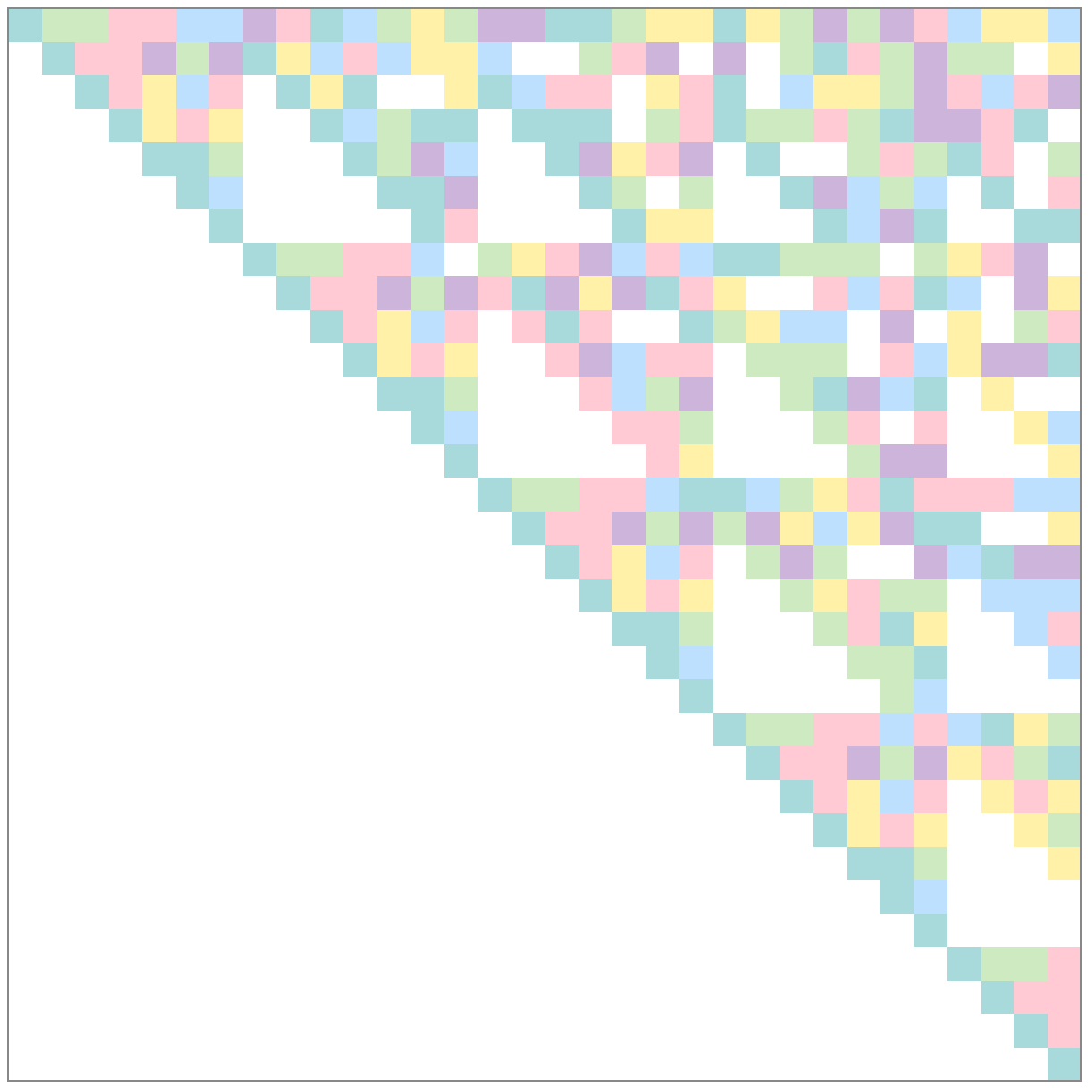}&
    \includegraphics[width=1.5cm]{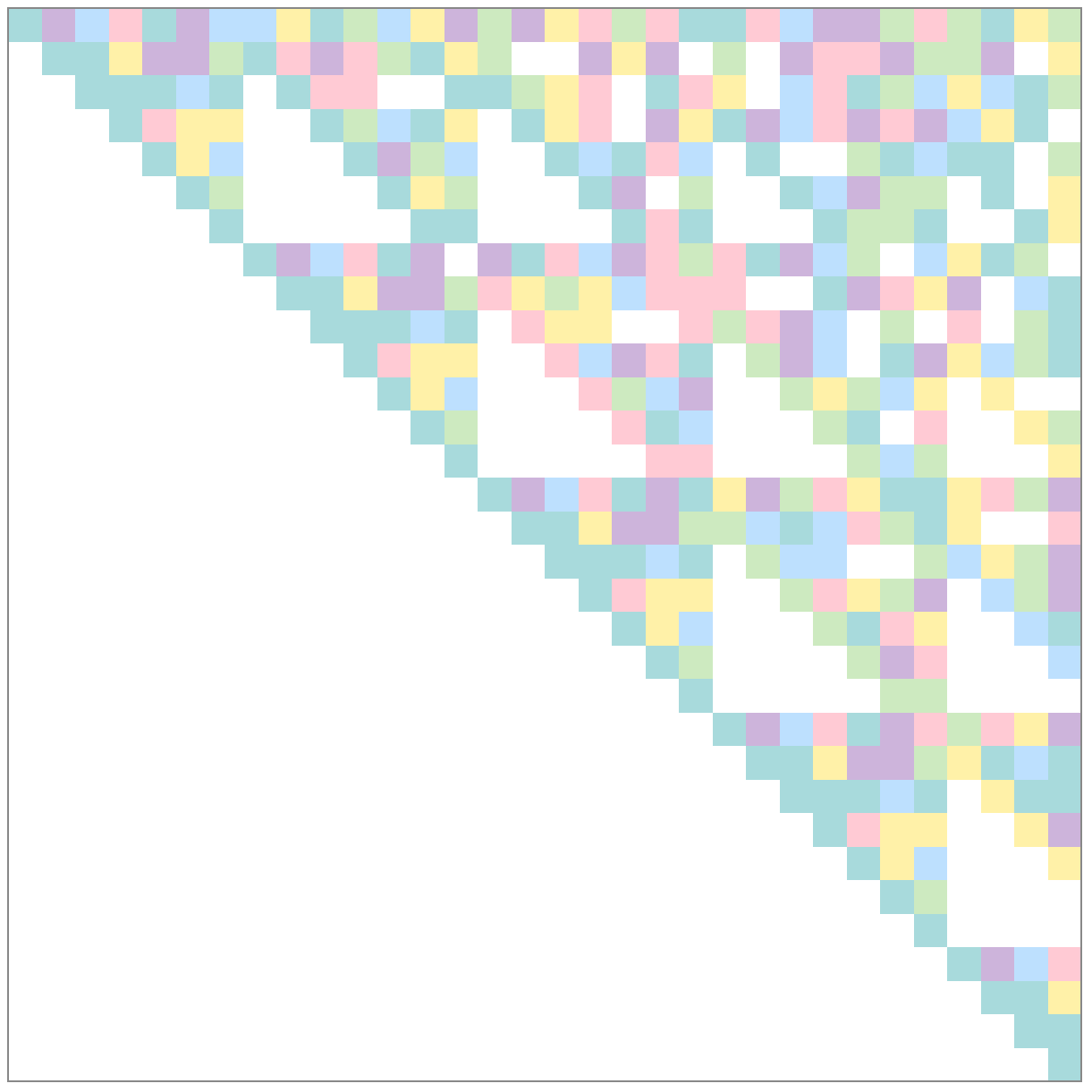}&\\\hline\\
   \multirow{2}{*}{\rotatebox{90}{\quad$GF(11)$}}&  \rotatebox{90}{~\quad$\{x+c_i\}$}&\includegraphics[width=1.5cm]{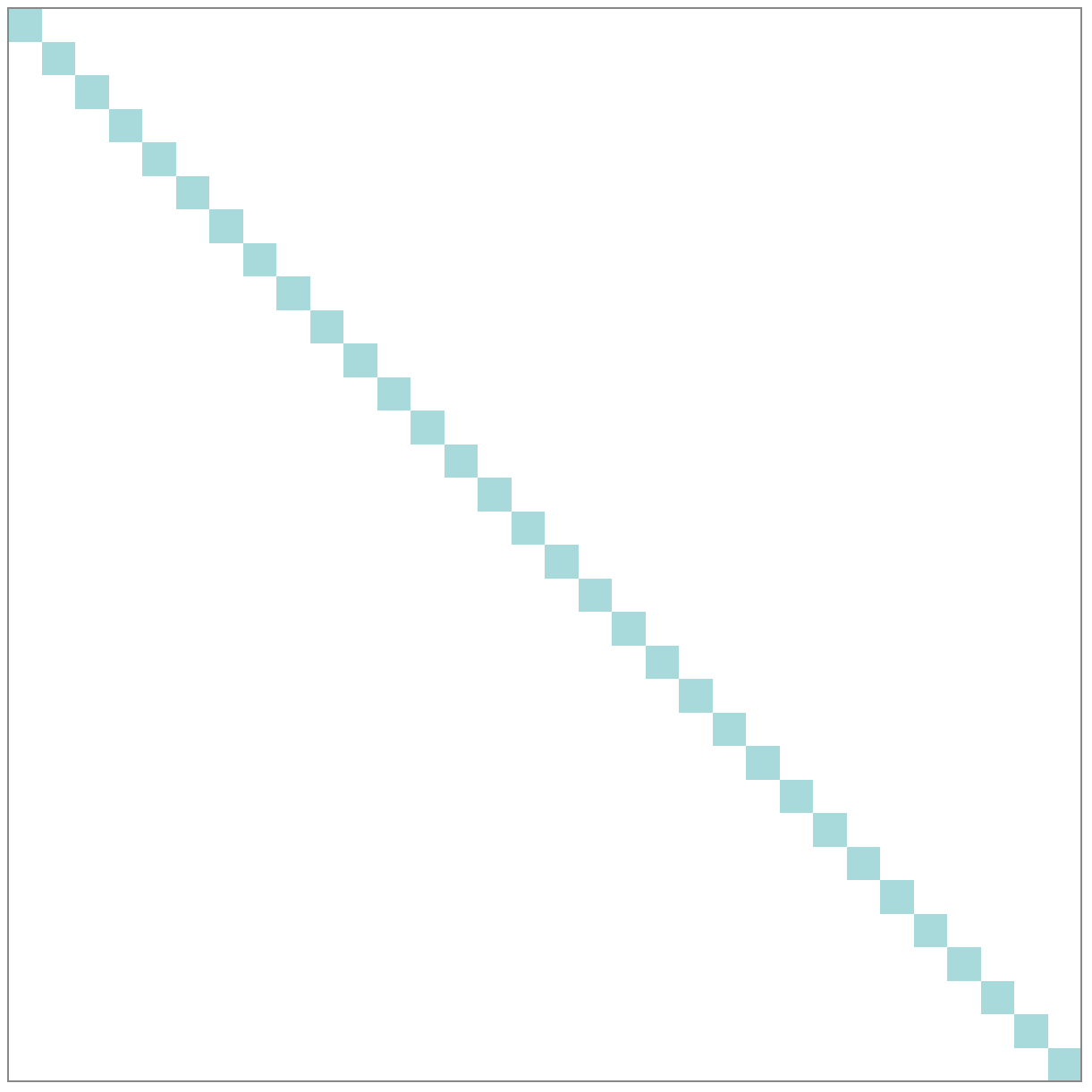}&
    \includegraphics[width=1.5cm]{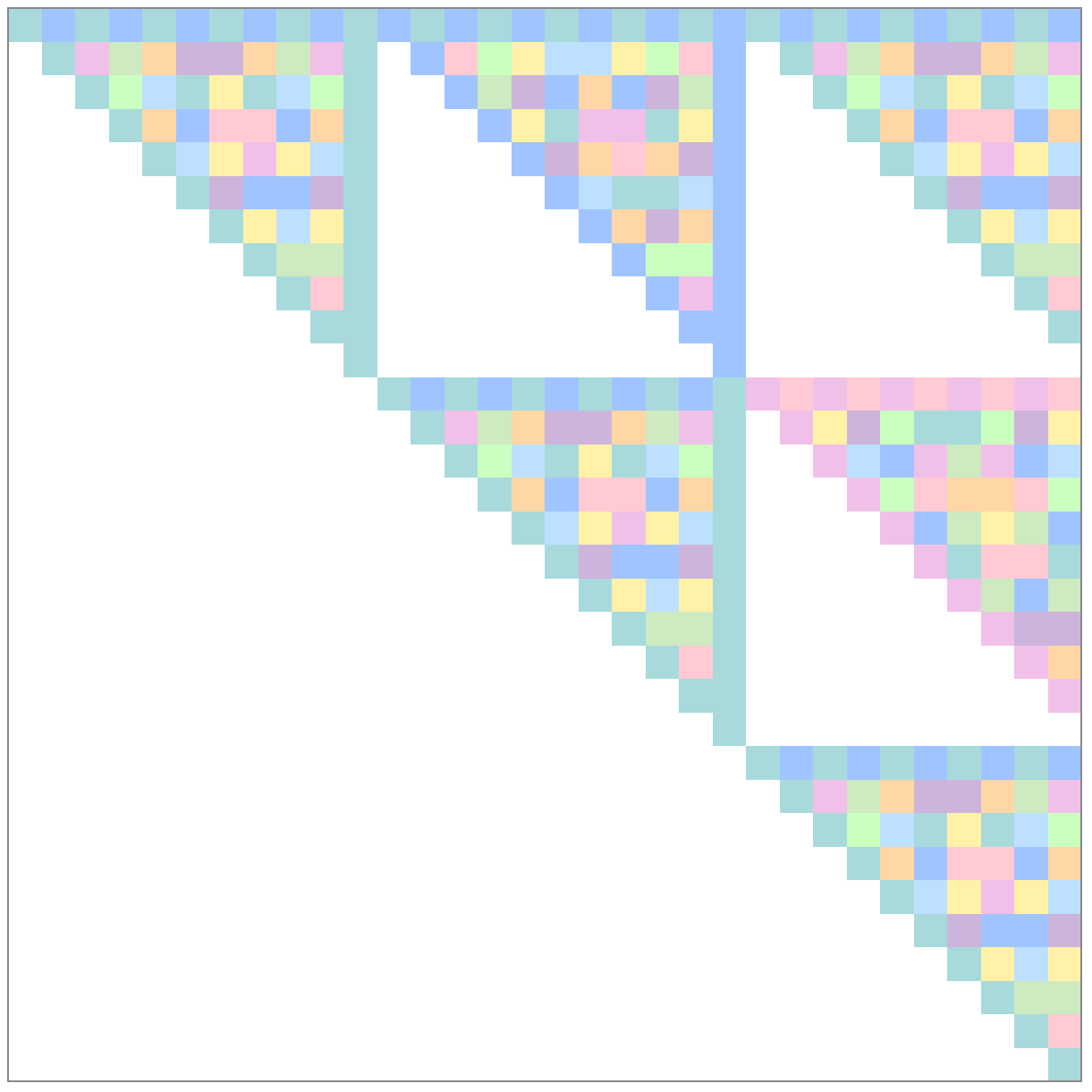}&
    \includegraphics[width=1.5cm]{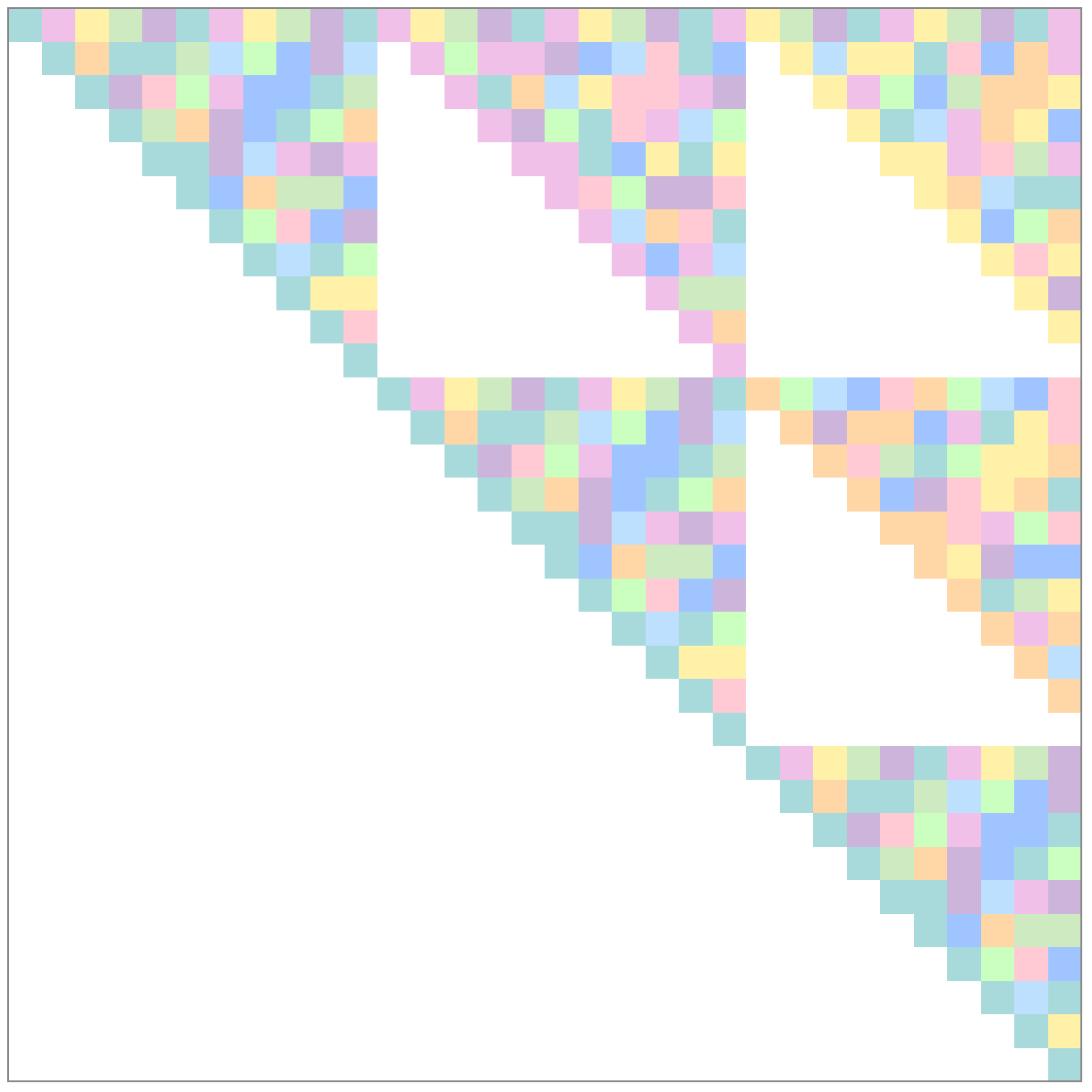}&
    \includegraphics[width=1.5cm]{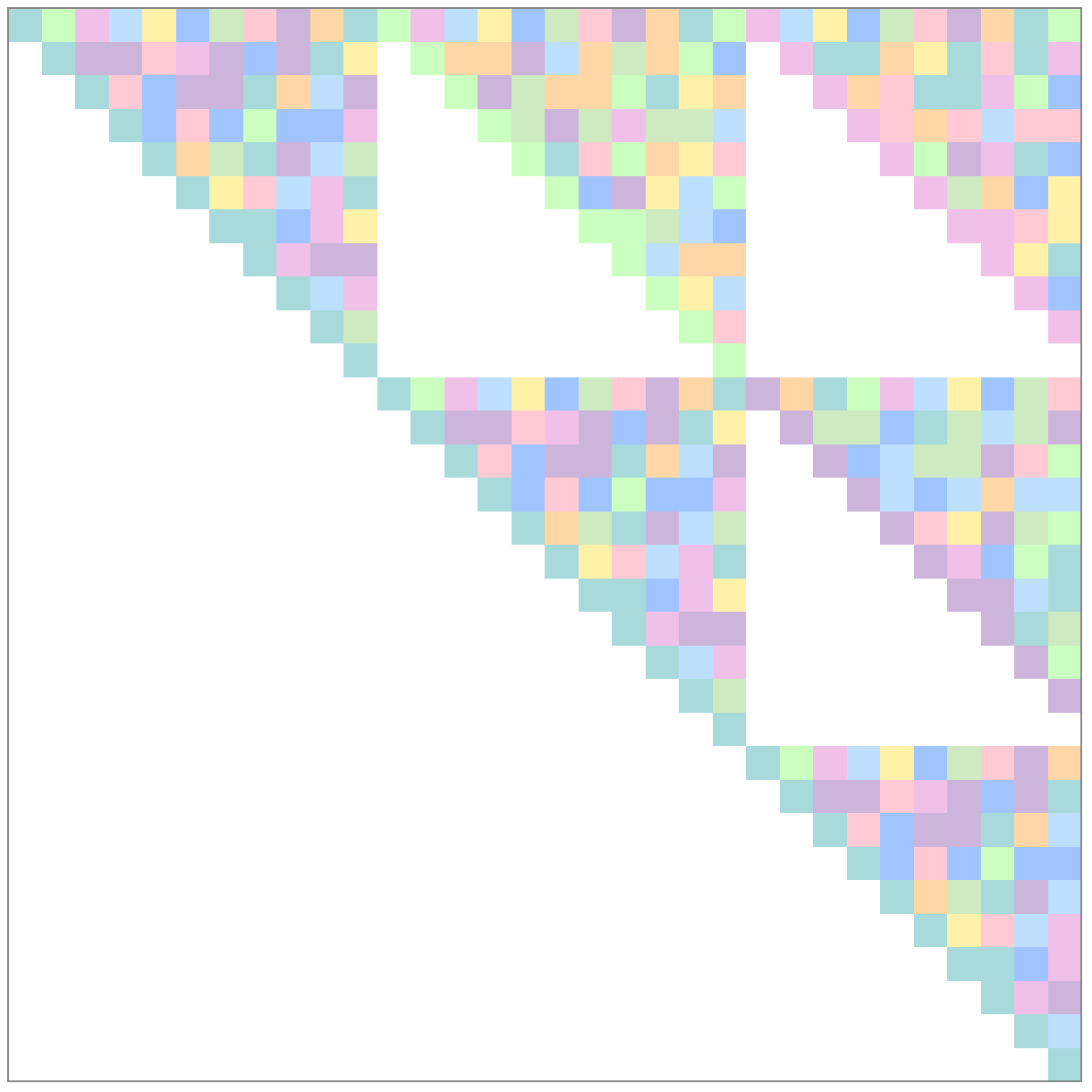}&
    \includegraphics[width=1.5cm]{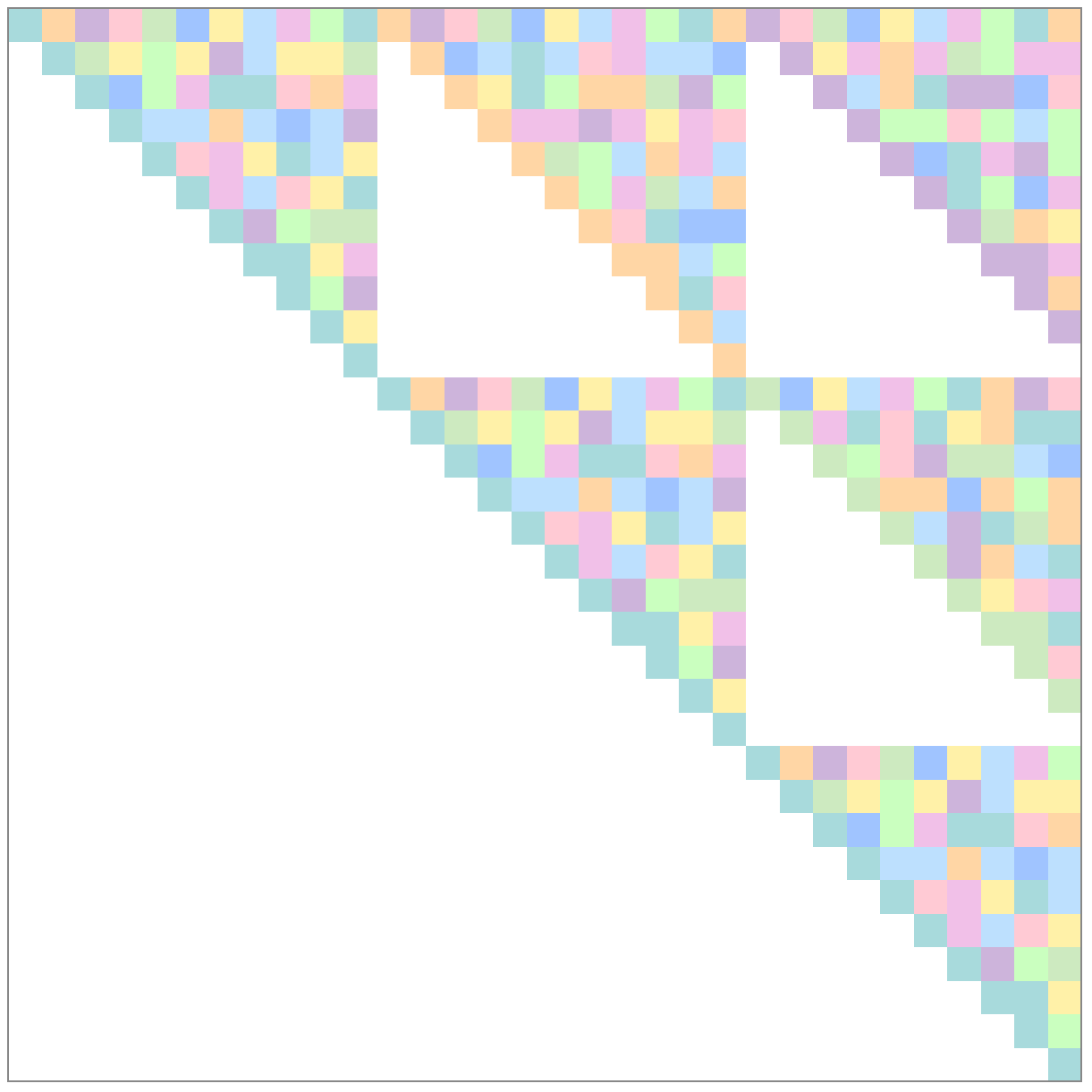}&
    \includegraphics[width=1.5cm]{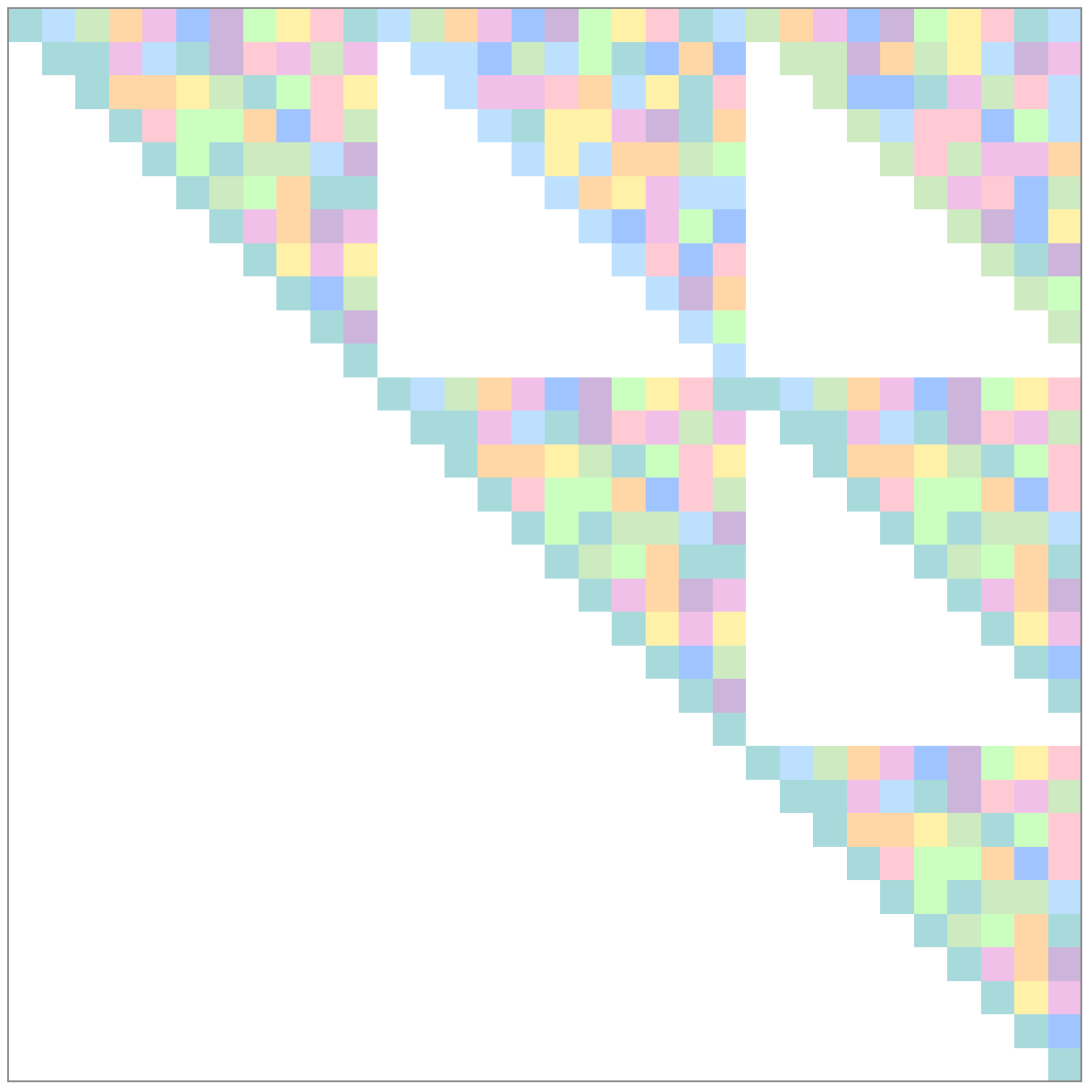}&
    \includegraphics[width=1.5cm]{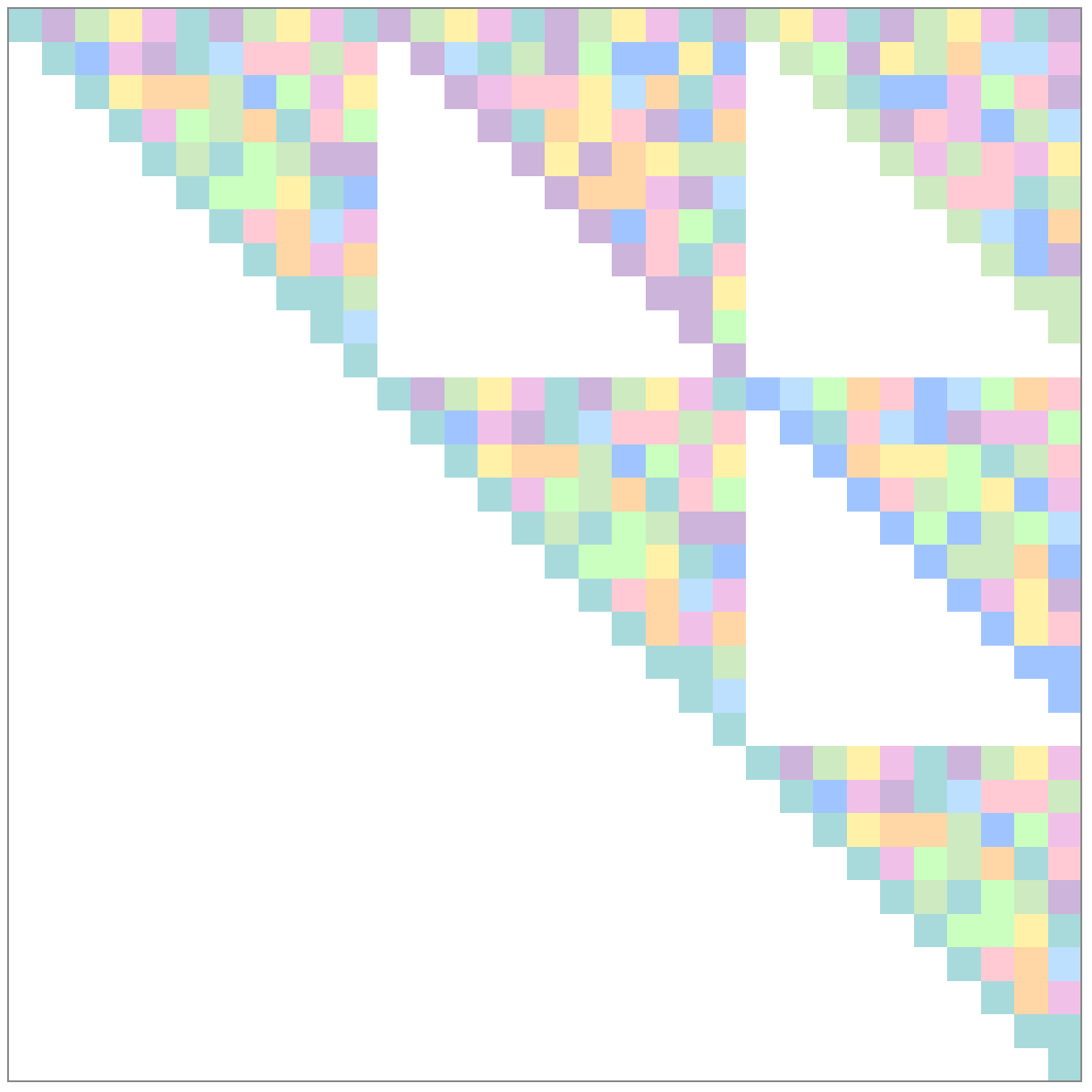}&
    \includegraphics[width=1.5cm]{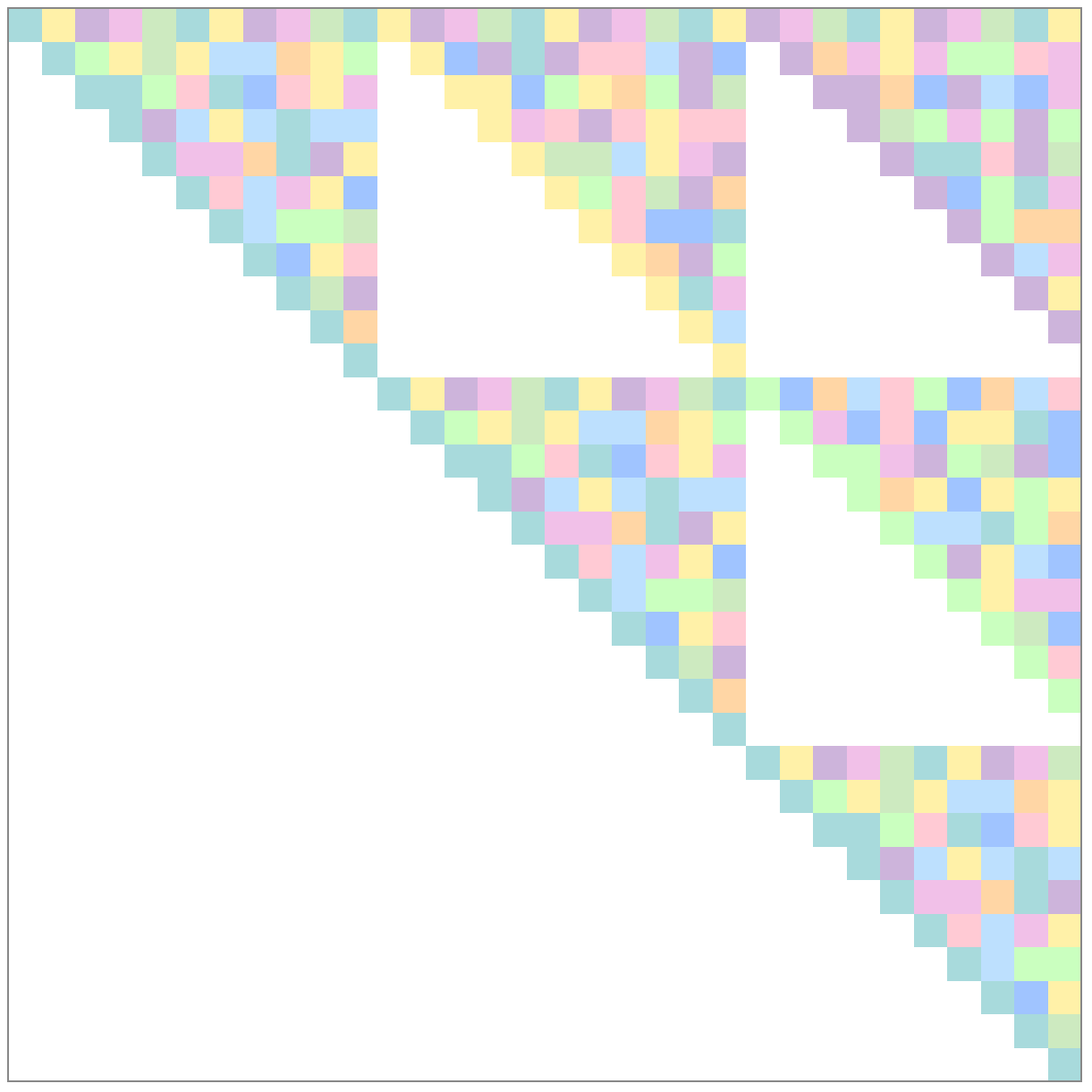}&
    \includegraphics[width=1.5cm]{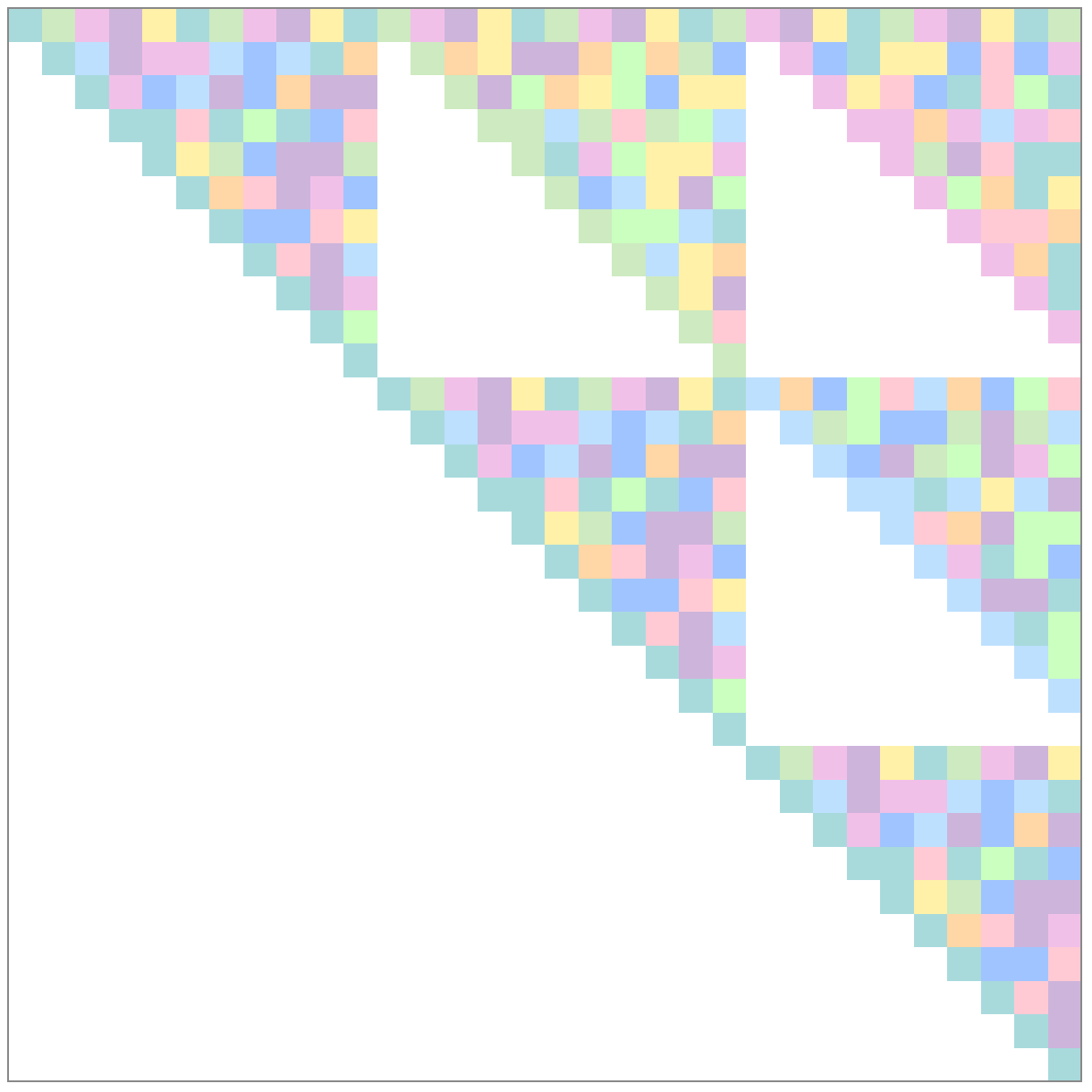}&
    \includegraphics[width=1.5cm]{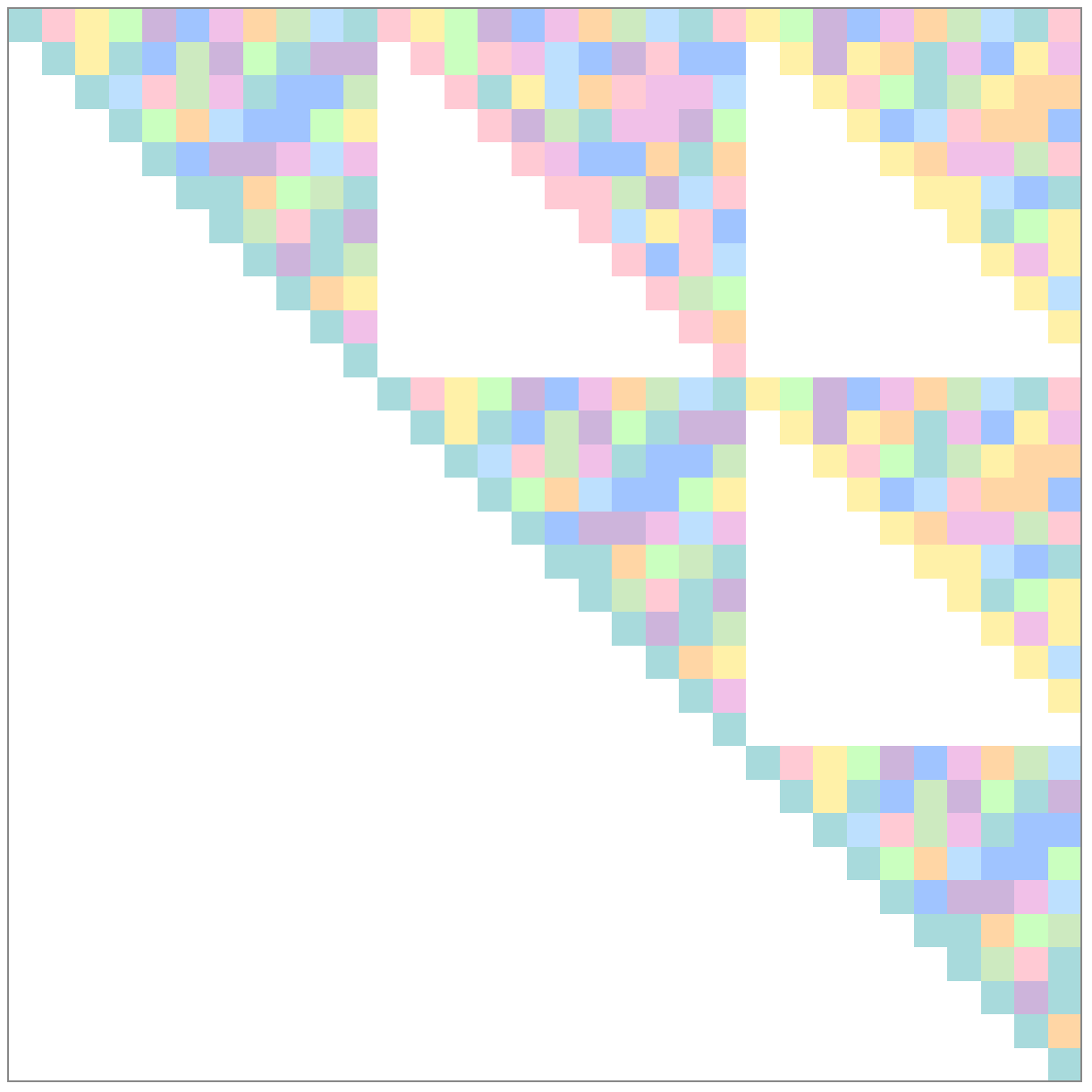}&
    \includegraphics[width=1.5cm]{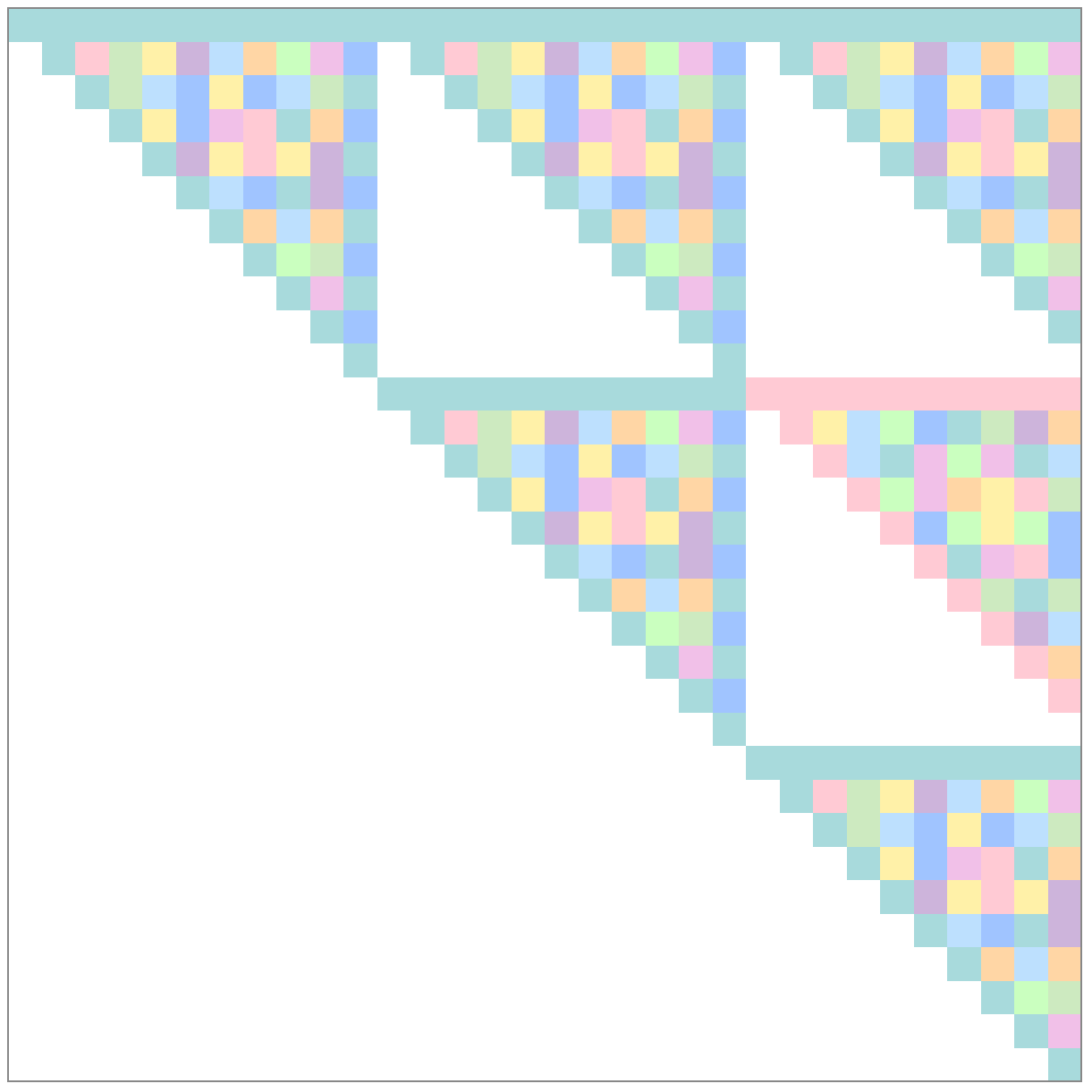}\\
  &   \rotatebox{90}{$\quad\quad$AS$$}&&\includegraphics[width=1.5cm]{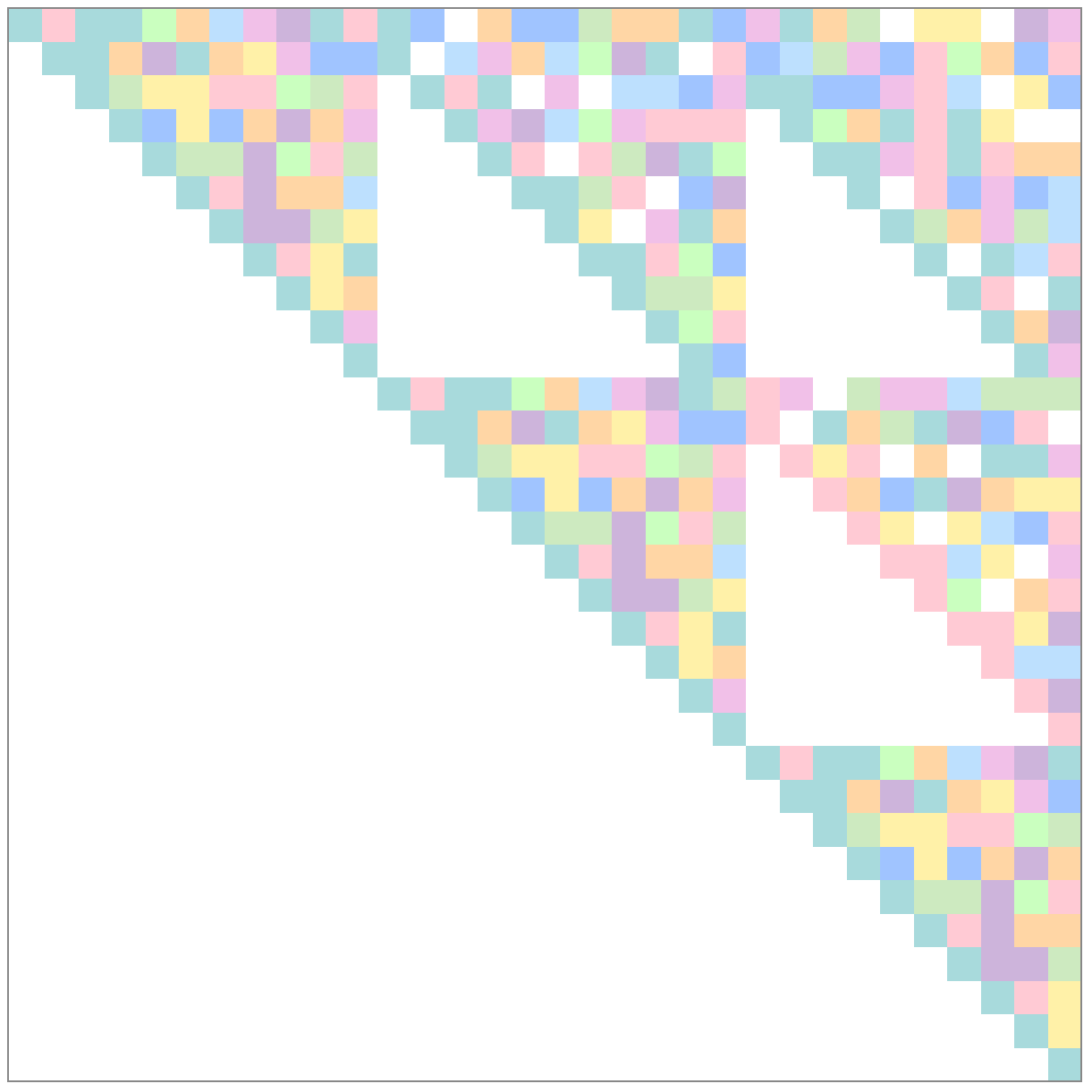}&
    \includegraphics[width=1.5cm]{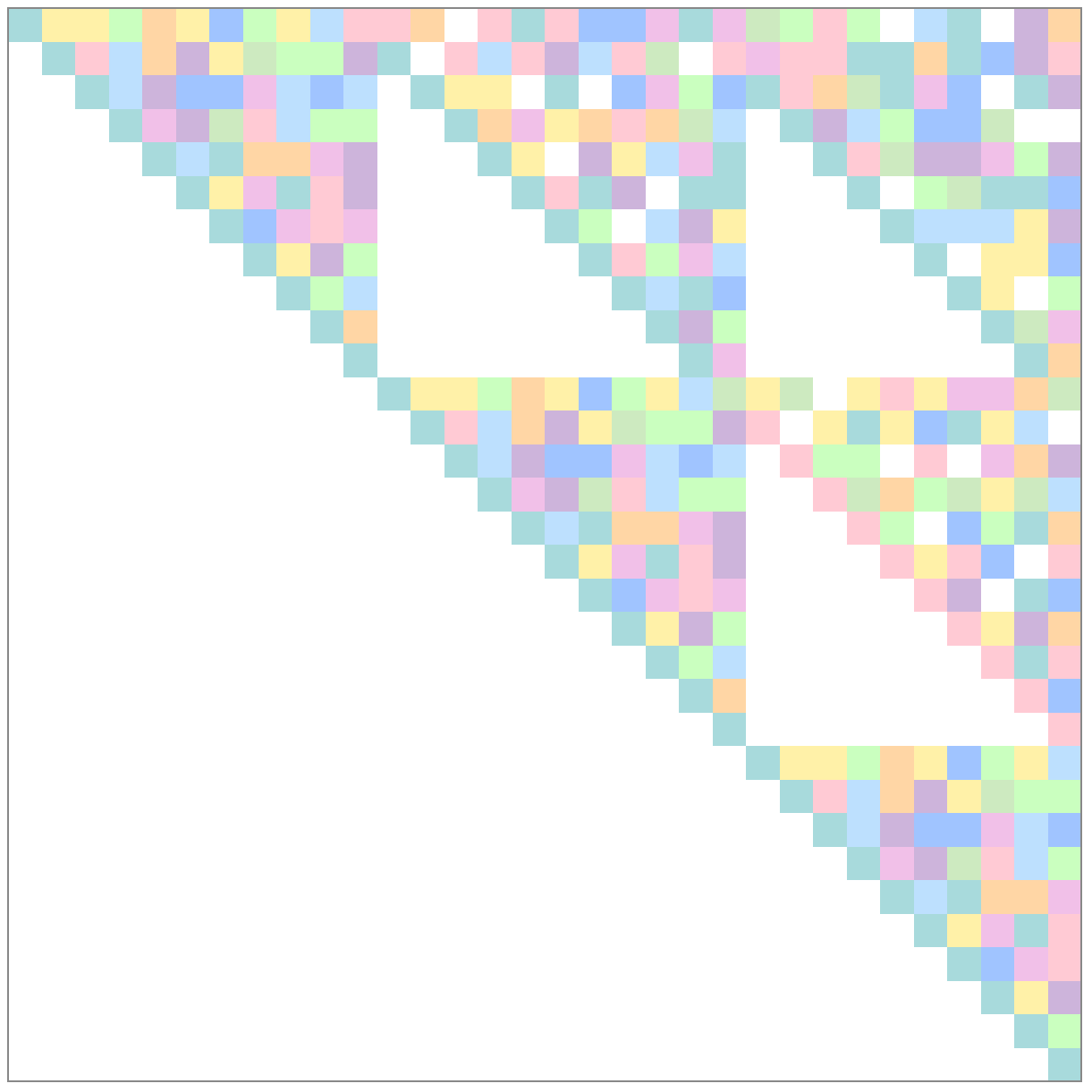}&
    \includegraphics[width=1.5cm]{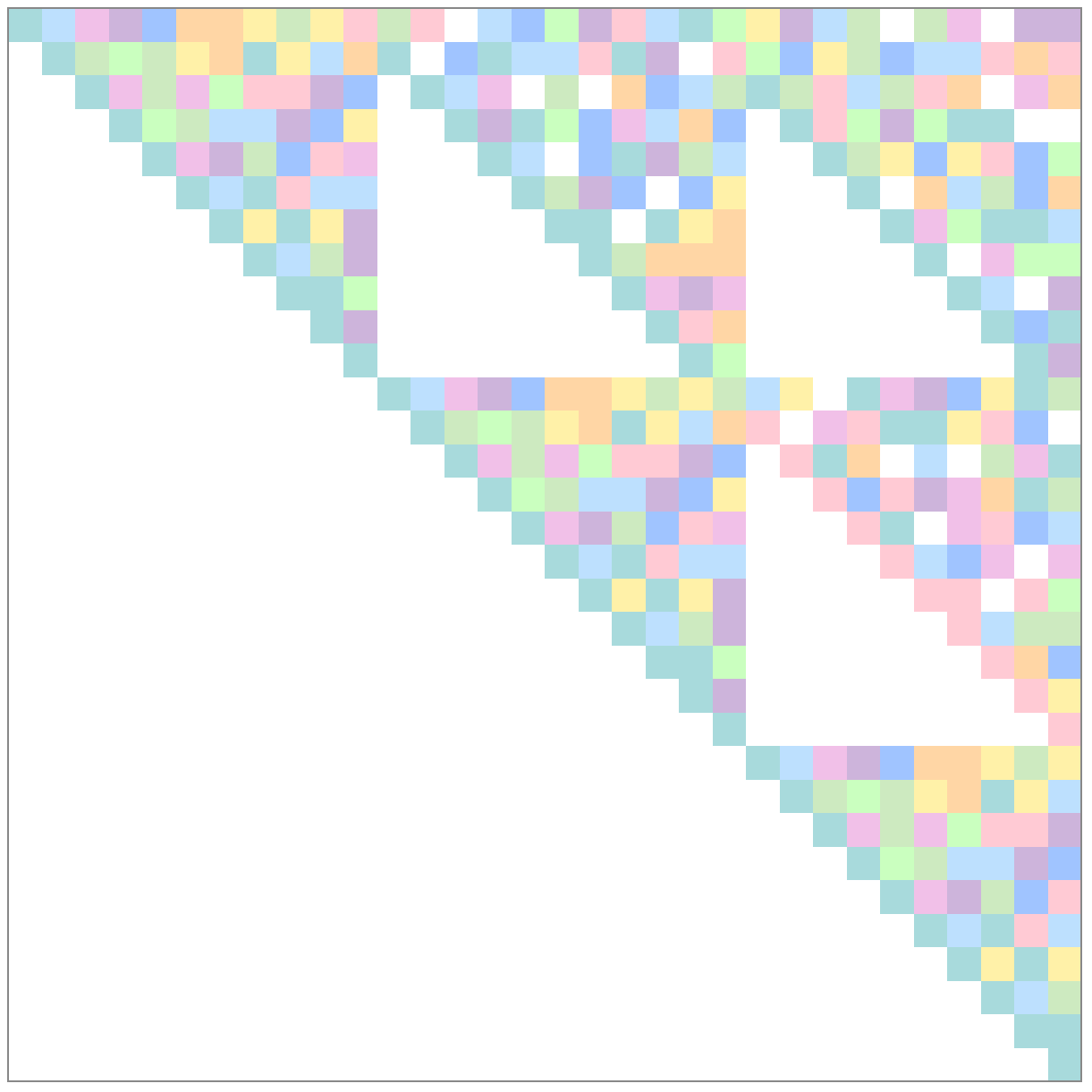}&
    \includegraphics[width=1.5cm]{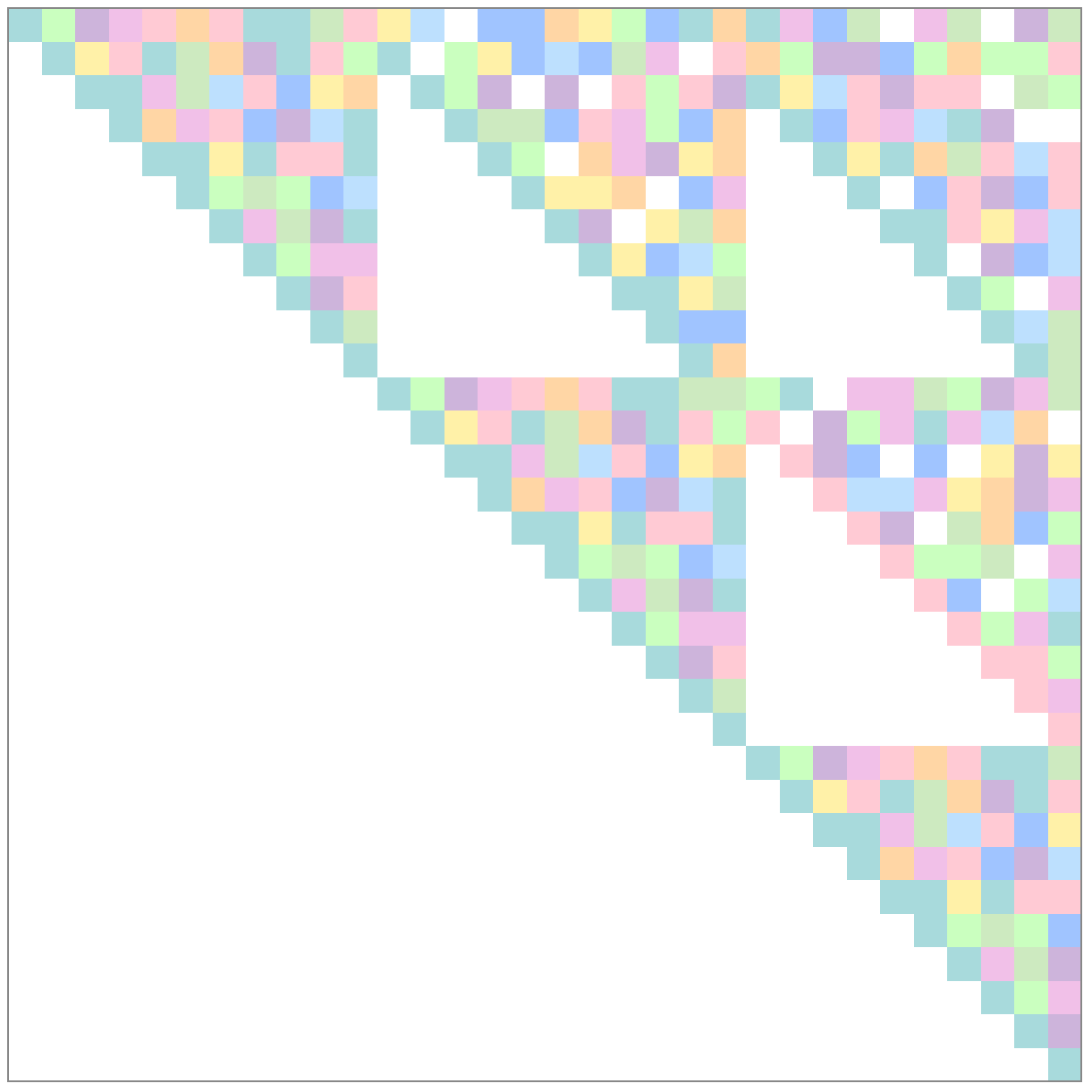}&
    \includegraphics[width=1.5cm]{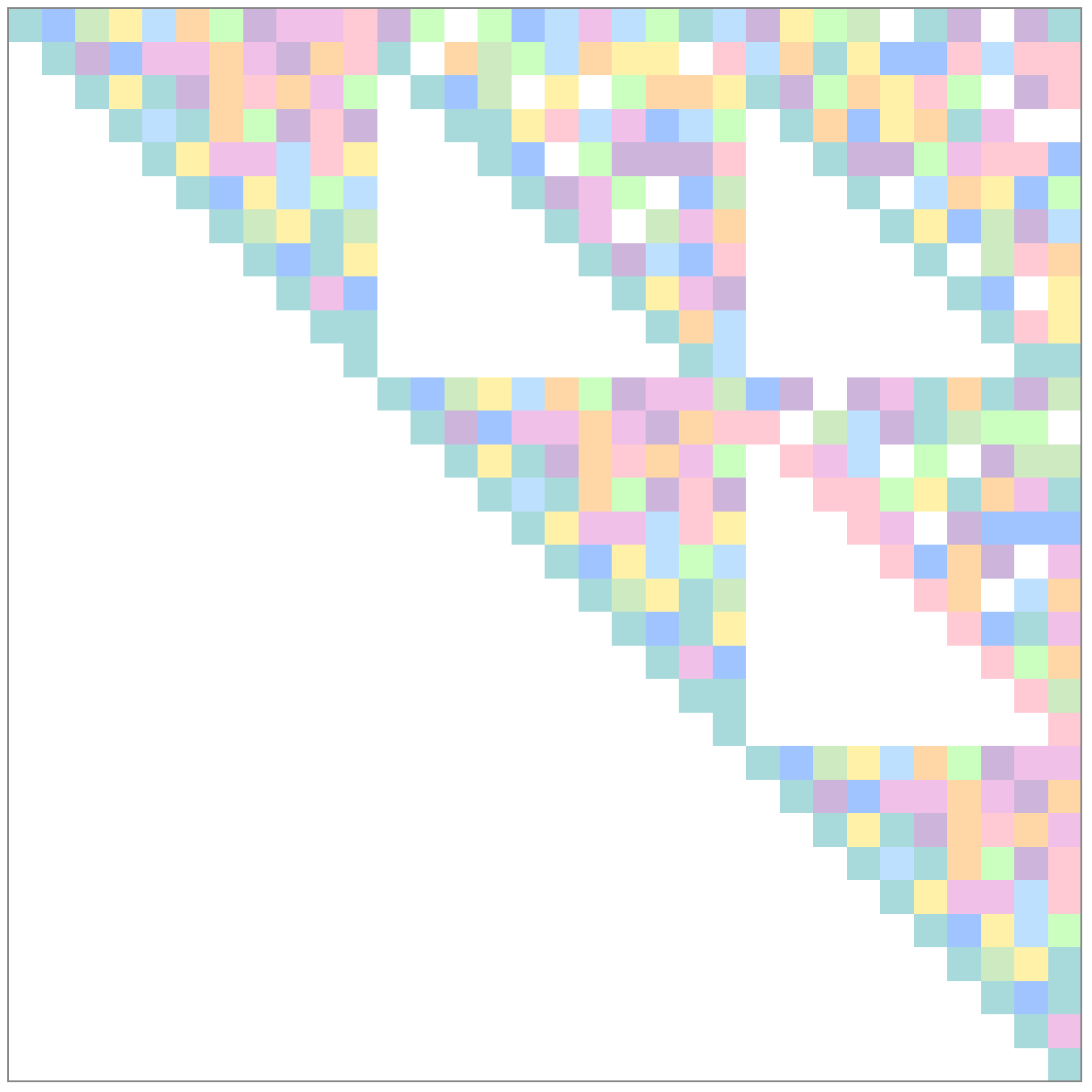}&
    \includegraphics[width=1.5cm]{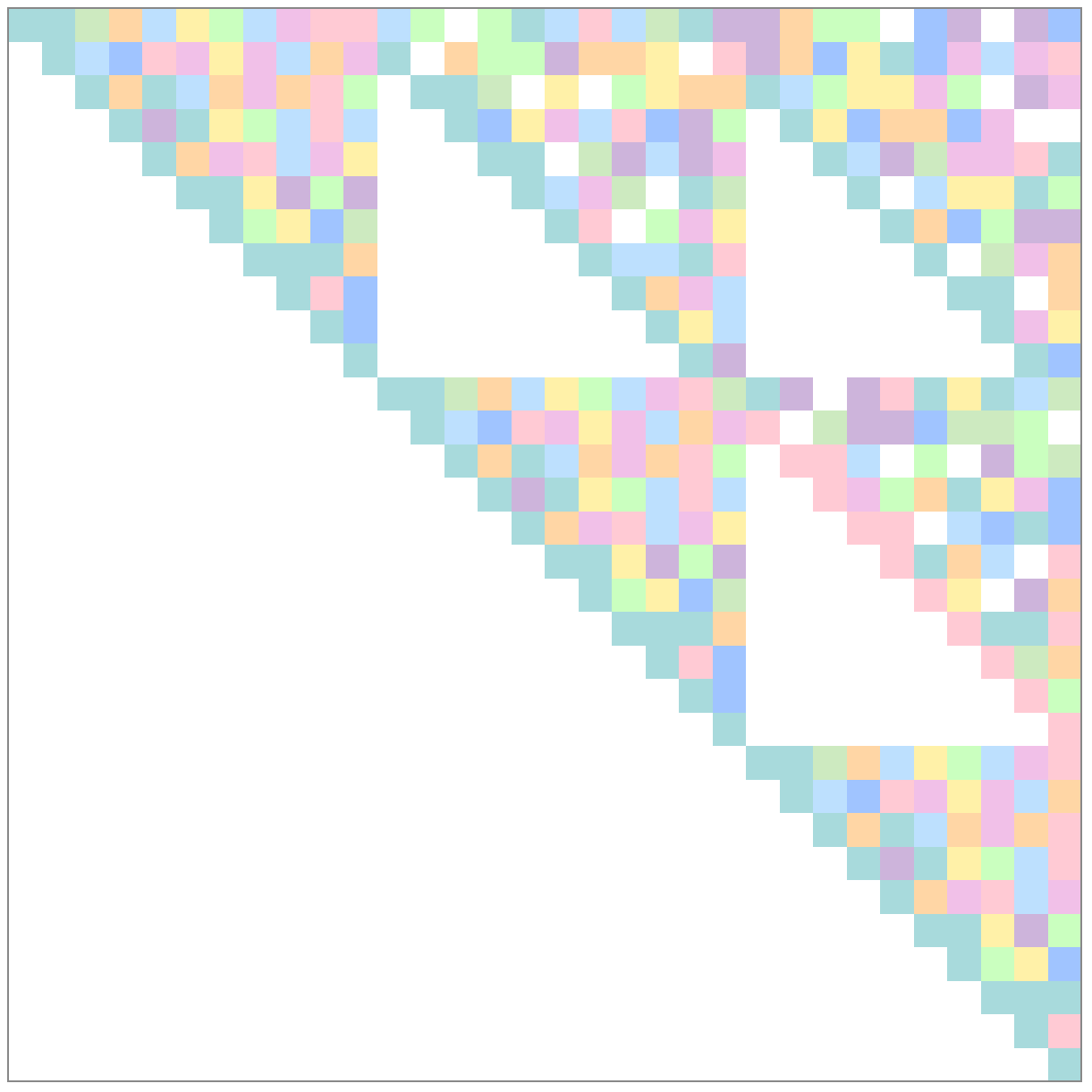}&
    \includegraphics[width=1.5cm]{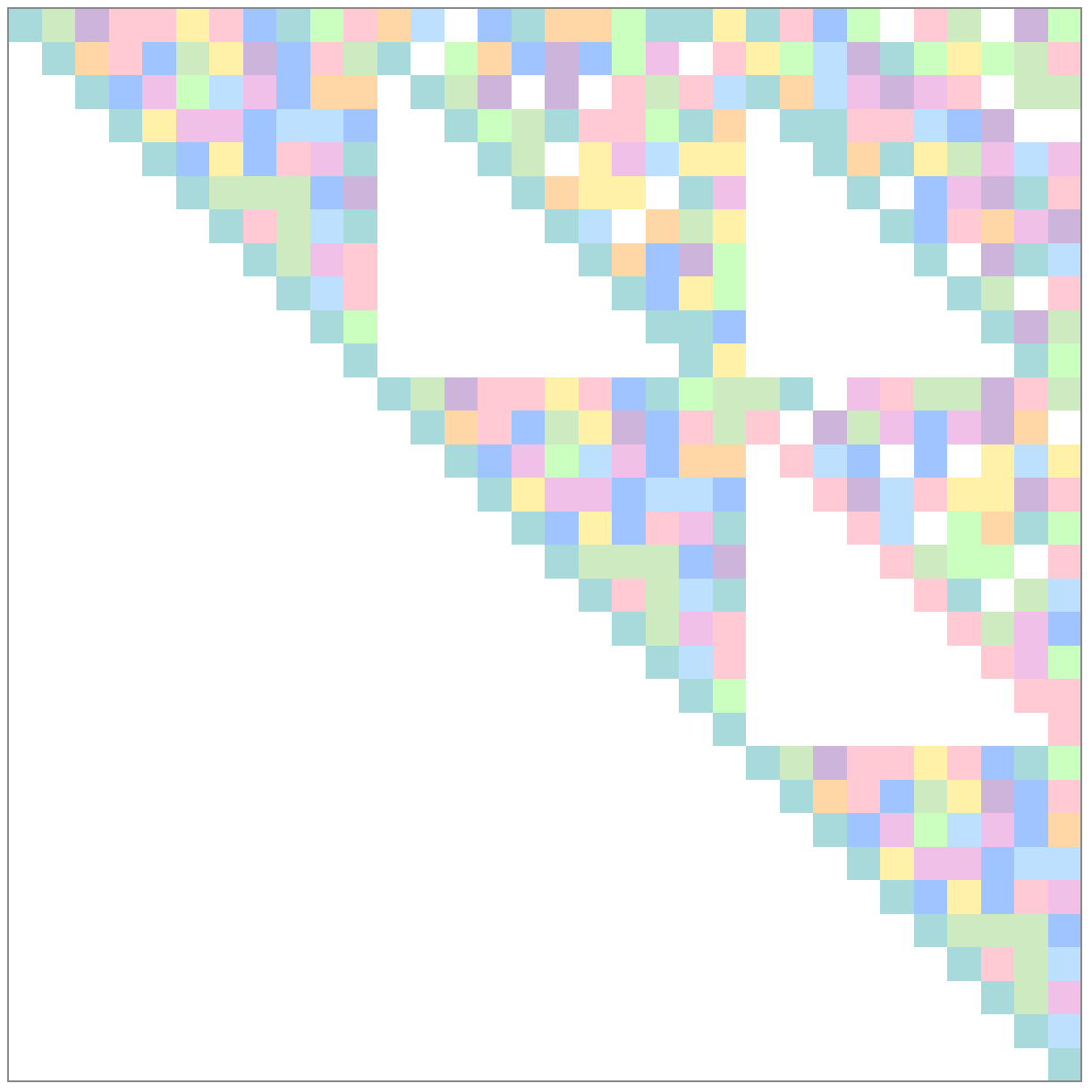}&
    \includegraphics[width=1.5cm]{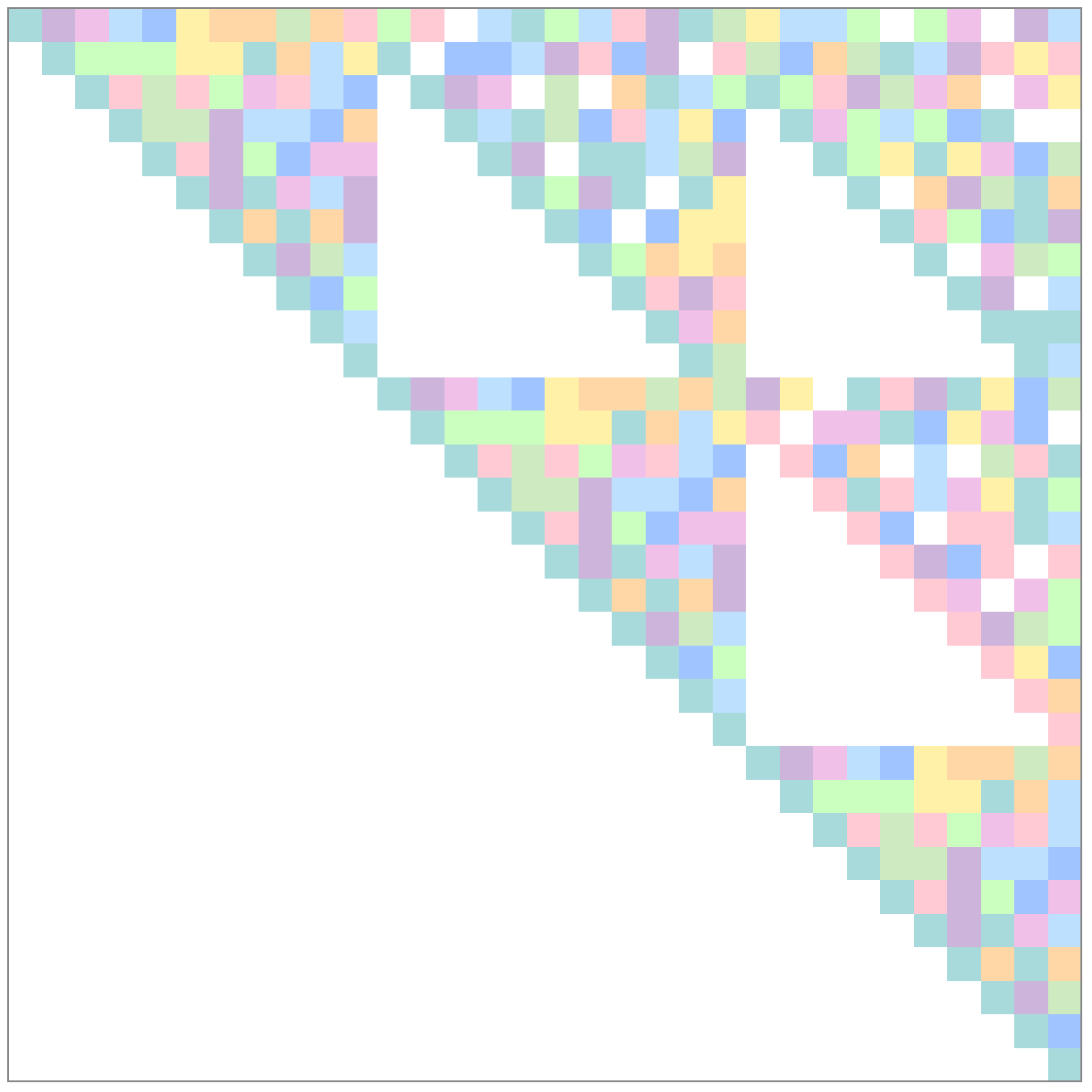}&
    \includegraphics[width=1.5cm]{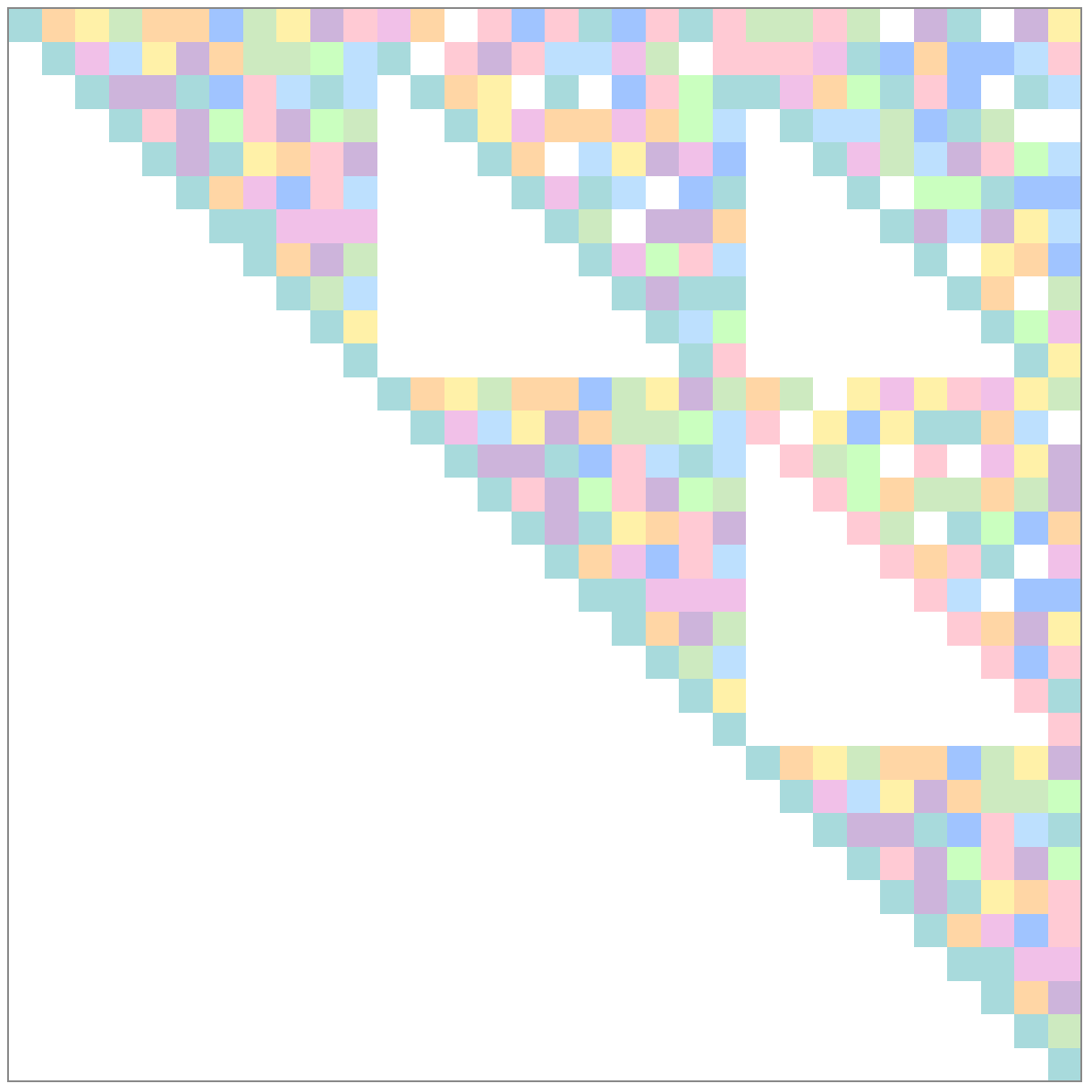}&
    \includegraphics[width=1.5cm]{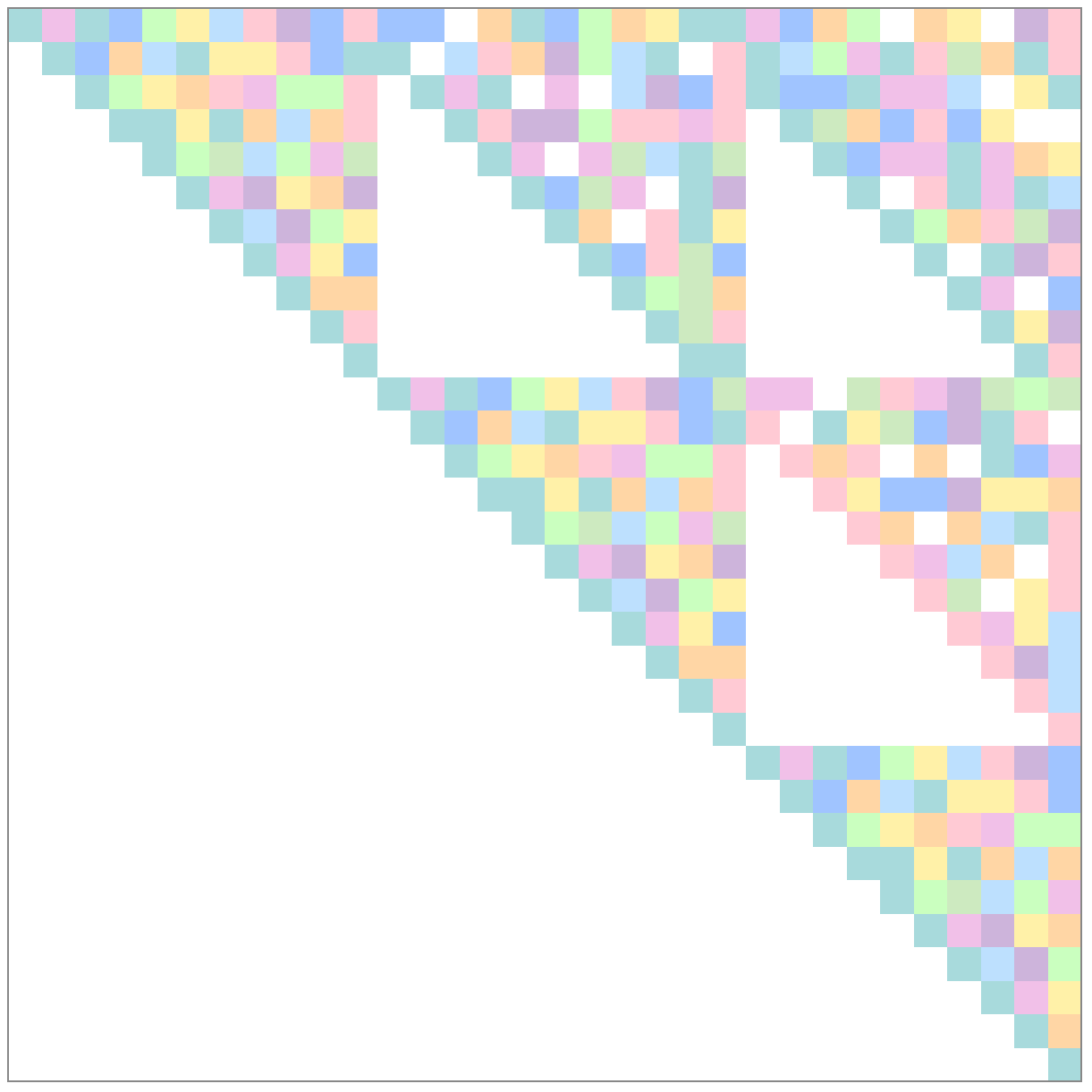}
    \end{tabular}
\caption{Color-coded representation of Sobol' matrices produced by consecutive polynomials $p_i(x) = x + c_i$ (top row) and Artin-Schreier polynomials $p_i(x) = x^b - x + c_i$ (bottom row) in $\mathrm{GF}(5)$, $\mathrm{GF}(7)$ and $\mathrm{GF}(11)$.\label{fig:matrices}}
\end{figure*}

\section{Consecutive polynomials produce Pascal powers}
\label{sec:pascal}

We first focus on the specific case of two-dimensional sequences, produced by Sobol' recurrence with consecutive polynomials, producing matrices $M^{(c_1)}$ and $M^{(c_2)}$. 

\paragraph*{Characteristic matrices}
For two-dimensional sequences, Ahmed et al.~\cite{ahmed2023analysis} define the \textit{characteristic matrix} as  $K = M^{(c_2)}\left(M^{(c_1)}\right)^{-1}$.
They show that this matrix uniquely characterizes a digital net: the 2D point set generated by $(M^{(c_1)},M^{(c_2)})$ is the same as the one given by $(\mathrm{Id},K)$, up to reordering.


We show that consecutive polynomials yield characteristic matrices in the form of tensorized powers of Pascal matrices:

\begin{lemma}\label{lemma:pascal}
Given two consecutive polynomials over $\mathrm{GF}(b)$ of degree $e$ and their corresponding Sobol' matrices $M^{(c_1)}$ and $M^{(c_2)}$, 
the characteristic matrix $K = M^{(c_2)} \left(M^{(c_1)}\right)^{-1}$ restricted to a size $em \times em$ (denoted $K_{[em \times em]}$) for  $m\geq 1$, is a Pascal matrix at the power $\alpha=c_1-c_2$ 
tensorized with an initialization matrix $K_{[e \times e]}$:
$$ K_{[em \times em]} = \cP^{\alpha} \otimes K_{[e \times e]}\,,$$
where $\cP^{\alpha}$ is an $m \times m$ matrix of coefficients in $\mathrm{GF}(b)$:
$$ \cP^{\alpha}_{i,j} = \binom{j-1}{i-1}\alpha^{j-i}\,. $$

Equivalently, the block coefficients of the (infinite) matrix $K$ are given by:
\begin{equation}\label{eq:rec}
  K_{i,j} = \alpha\,K_{i,j-1}+K_{i-1,j-1},
  \qquad K_{i,1}=\delta_{i,1}\,K_{[e \times e]}.
\end{equation}

\end{lemma}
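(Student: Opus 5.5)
The plan is to avoid computing $K$ directly. Instead I will guess $K$ in the form given by (\ref{eq:rec}) and verify that $\widetilde M := K\,M^{(c_1)}$ satisfies the Faure--Lemieux recurrence (\ref{eq:faurelemieux}) for $p_2=p+c_2$ with the correct first block column. Uniqueness of the solution of the recurrence then gives $\widetilde M = M^{(c_2)}$. All matrices are block upper triangular, with invertible diagonal blocks coming from the initialization, so the truncations to size $em\times em$ are compatible. Hence $K_{[em\times em]}=M^{(c_2)}_{[em\times em]}\big(M^{(c_1)}_{[em\times em]}\big)^{-1}$ is the truncation of the infinite $K$, and it suffices to work with infinite block matrices.

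\textbf{Step 1 (only $Q$ sees the shift).} From (\ref{eq:q}) and (\ref{eq:f}), replacing $p$ by $p+c$ changes only $a_0$. Hence $F_{p+c}=F_p=:F$ and $Q_{p+c}=Q_p-c\,\mathrm{Id}=:Q-c\,\mathrm{Id}$. Write the $j$-th block column of $M^{(c)}$ as a block vector $v_j$, and let $S$ be the block shift $(Sv)_i=v_{i-1}$. Then (\ref{eq:faurelemieux}) reads
\[
v_j=\big(v_{j-1}(Q-c\,\mathrm{Id})+S v_{j-1}\big)F .
\]
This is an operation that acts on the right by $e\times e$ matrices and on the left by $S$.

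\textbf{Step 2 (factor $K$).} Write $K=(\cP^{\alpha}\otimes \mathrm{Id}_e)(\mathrm{Id}\otimes K_0)$ with $K_0=K_{[e\times e]}=M^{(c_2)}_{1,1}\big(M^{(c_1)}_{1,1}\big)^{-1}$. Left multiplication by $\mathrm{Id}\otimes K_0$ commutes with $S$ and with right multiplication by $Q$ and $F$. Therefore $N:=(\mathrm{Id}\otimes K_0)M^{(c_1)}$ still satisfies the $c_1$-recurrence. Its first block column is $(M^{(c_2)}_{1,1},0,0,\dots)^T$, which is exactly the first block column of $M^{(c_2)}$, since the initialization only fills the $(1,1)$ block.

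\textbf{Step 3 (Pascal intertwining).} It remains to show that $L:=\cP^{\alpha}\otimes \mathrm{Id}_e$ converts solutions of the $c_1$-recurrence into solutions of the $c_2$-recurrence, and that it preserves the first block column. The latter holds because $L$ is unit upper triangular. Substituting into Step 1, the required identity reduces to a scalar commutator relation of the form $\cP^{\alpha}\Sigma-\Sigma\,\cP^{\alpha}=\pm\alpha\,\cP^{\alpha}$, where $\Sigma$ is the scalar shift matrix (with the sign and the transposition fixed by the orientation of $S$). This relation is checked entrywise from $\cP^{\alpha}_{i,j}=\binom{j-1}{i-1}\alpha^{j-i}$ using Pascal's rule $\binom{j-1}{i-1}=\binom{j-2}{i-1}+\binom{j-2}{i-2}$. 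The same entrywise identity gives the block recurrence $K_{i,j}=\alpha K_{i,j-1}+K_{i-1,j-1}$ with $K_{i,1}=\delta_{i,1}K_0$. This shows that (\ref{eq:rec}) is equivalent to the closed form, and one can equally prove the lemma by verifying (\ref{eq:rec}) for $M^{(c_2)}(M^{(c_1)})^{-1}$.

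The main obstacle is Step 3, and specifically the bookkeeping of orientation. The block row index of $M$ is governed by the shift $S$, while $\cP^\alpha$ is written with column index $j$ carrying the power $\alpha^{j-i}$. I must check carefully whether the intertwining uses $\cP^{\alpha}$ or its transpose, and whether the exponent is $c_1-c_2$ or $c_2-c_1$. I will first test the computation for $m=2$, where $K_{1,2}=\alpha K_0$, and fix the conventions there before running the general binomial induction.
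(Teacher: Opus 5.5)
Your proposal is correct and follows essentially the same route as the paper. The paper likewise verifies that $(\cP^{\alpha}\otimes K_{[e\times e]})\,M^{(c_1)}$ satisfies the Faure--Lemieux recurrence for $p+c_2$ (via $F_{p+c}=F_p$ and $Q_{p+c}=Q_p-c\,\mathrm{Id}$) with first block column $\delta_{i,1}M^{(c_2)}_{[e\times e]}$; splitting off $\mathrm{Id}\otimes K_0$ first is only a cosmetic reorganization.

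Your orientation worry in Step 3 resolves in your favor. With your down-shift $\Sigma$, Pascal's rule gives $(\cP^{\alpha}\Sigma-\Sigma\cP^{\alpha})_{i,j}=\cP^{\alpha}_{i,j+1}-\cP^{\alpha}_{i-1,j}=\alpha\,\cP^{\alpha}_{i,j}$, which is exactly the paper's identity $L_{i,j+1}=\alpha L_{i,j}+L_{i-1,j}$. The intertwining you need is $LS-SL=(c_1-c_2)L$, so the sign is $+$, no transpose enters, and $\alpha=c_1-c_2$ as stated.
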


\begin{proof}

We show that matrix $L := \cP^{\alpha} \otimes K_{[e \times e]}$ satisfies $L M^{(c_1)}_{[em \times em]} = M^{(c_2)}_{[em \times em]}$.
We exploit an identity relating powers of Pascal matrices~\cite{faure1982discrepance}: $\cP^a_{i,j} = a^{j-i} \binom{j-1}{i-1}$, and rewrite it 
for $e \times e$ block $(i,j)$ of matrix $L$: $L_{i,j+1} = \alpha L_{i,j}+L_{i-1,j}$. The matrices of equations~\ref{eq:q} and \ref{eq:f} are such that $F_{p+c_i} = F_p$ and $Q_{p+c_i}=Q_p-c_i \,\text{Id}_{e\times e}$. 
Using these identities, $\alpha = c_1-c_2$ and equation~\ref{eq:faurelemieux}, and denoting $N = L M^{(c_1)}_{[em \times em]}$, we develop the block matrix-vector multiplication as
\begin{align*}
N_{i,j} & =\sum_{k} L_{i,k}\,M^{(c_1)}_{k,j-1}\,(Q_p-c_1 \text{Id}_{e\times e})F_p + L_{i,k}\,M^{(c_1)}_{k-1,j-1}\,F_p\\
          & = N_{i,j-1}(Q_p-c_1 \text{Id}_{e\times e})F_p  + \sum_{k} L_{i,k+1} M^{(c_1)}_{k,j-1} F_p\\
&= N_{i,j-1}(Q_p-c_1 \text{Id}_{e\times e})F_p  +\bigl(\alpha\,N_{i,j-1}+N_{i-1,j-1}\bigr)F_p\\
&=\bigl(N_{i,j-1}(Q_p-c_2 \text{Id}_{e\times e})+N_{i-1,j-1}\bigr)F_p,
\end{align*}
Thus $N$ satisfies the Faure and Lemieux recurrence of equation~\ref{eq:faurelemieux} for polynomial $p+c_2$. The initialization is trivial, with $N_{i,1}=L_{i,1} M^{(c_1)}_{[e \times e]} = \delta_{i,1} K_{[e \times e]} M^{(c_1)}_{[e \times e]} = \delta_{i,1} M^{(c_2)}_{[e \times e]}$, and thus $N=M_{[em\times em]}^{c_2}$. $L$ is thus the characteristic matrix of $M^{(c_1)}$ and $M^{(c_2)}$; in other words $L = K_{[em \times em]}$ which concludes the proof.


\end{proof}

\section{Pascal powers may give $(0,s)$-sequences}
\label{sec:conditions}
We now move to the more general $s$-dimensional setting. 
The two-dimensional observation of Ahmed et al.~\cite{ahmed2023analysis} has been shown in the higher dimensional settings by Faure and Tezuka~\cite{faure2002another}. Given matrices $M^{(c_1)}, M^{(c_2)}, \dots,  M^{(c_{s})}$ producing a $(t,s)$-sequence, right-multiplying these matrices by the same invertible matrix results in the same set of points, up to reordering, and thus they also produce a $(t,s)$-sequence. One may choose right-multiplying by $\left(M^{(c_1)}\right)^{-1}$. This brings the study of Sobol' sequences produced by consecutive polynomials to the simpler study of tensorized powers of Pascal matrices.

We show that consecutive polynomials may lead to $(0,s)$-sequences.
\begin{theorem} 
\label{thm:main}
Given $s$ distinct consecutive polynomials of degree $e$ over $\mathrm{GF}(b)$, 
if the $s-$dimensional Sobol' point set of $b^m$ samples forms a 
$(0,m,s)$-net for all $m\le s\,(e-1)$ then it forms a $(0, s)-$sequence in base $b$.
\end{theorem}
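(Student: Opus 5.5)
The plan is to reduce the $(0,s)$-sequence condition, for an arbitrary number $m$ of rows, to the same condition for a small net with $m'\le s(e-1)$, using a polynomial encoding of kernel vectors.

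\textbf{Reduction to Pascal powers.} By the remark following Lemma~\ref{lemma:pascal} (Faure and Tezuka), I right-multiply all Sobol' matrices by $\bigl(M^{(c_1)}\bigr)^{-1}$. Lemma~\ref{lemma:pascal} then gives generating matrices $C_i=\cP^{\alpha_i}\otimes A_i$, with $\alpha_i=c_1-c_i$ pairwise distinct, $\alpha_1=0$, and $A_i=K^{(i)}_{[e\times e]}$ (so $A_1=\mathrm{Id}$). Fix $m$ and $d_1+\dots+d_s=m$. It suffices to show that any $v\in\mathrm{GF}(b)^m$ killed by the compound matrix $\cK$ (the first $d_i$ rows of each $C_i$, first $m$ columns) is zero.

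\textbf{Polynomial encoding.} I pad $v$ with zeros to length $e\lceil m/e\rceil$, split it into blocks $v_j\in\mathrm{GF}(b)^e$, and set $V(y)=\sum_j v_j y^{j-1}\in\mathrm{GF}(b)^e[y]$. Since $\cP^{\alpha}_{k,j}=\binom{j-1}{k-1}\alpha^{j-k}$, block row $k$ of $C_iv$ is $A_i\,V^{[k-1]}(\alpha_i)$. Here $V^{[k]}$ is the $k$-th Hasse derivative, i.e.\ the coefficient of $z^{k}$ in $V(\alpha_i+z)$; using Hasse derivatives avoids dividing by $(k-1)!$ in characteristic $b$.

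Write $d_i=eq_i+\rho_i$ with $0\le\rho_i<e$. Each $A_i$ is invertible: this is the $(0,e,s)$ condition with $d_i=e$, which is among the hypotheses whenever $e\le s(e-1)$; the degenerate small cases I would treat separately. The kernel conditions then say two things:
\begin{itemize}
\item $V$ vanishes to order $q_i$ at $\alpha_i$, for every $i$;
\item the first $\rho_i$ rows of $A_i\,V^{[q_i]}(\alpha_i)$ vanish.
\end{itemize}
The first point gives $V=\prod_i(y-\alpha_i)^{q_i}\,W(y)$. Because the $\alpha_i$ are distinct,
\[
V^{[q_i]}(\alpha_i)=\lambda_i\,W(\alpha_i),\qquad \lambda_i=\prod_{j\ne i}(\alpha_i-\alpha_j)^{q_j}\neq0 .
\]
So the residual conditions are that the first $\rho_i$ rows of $A_iW(\alpha_i)$ vanish.

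\textbf{Identifying the residual system with a small net.} Let $m'=m-e\sum_i q_i=\sum_i\rho_i\le s(e-1)$. Since the factor $\prod_i(y-\alpha_i)^{q_i}$ is monic, $W$ has $\lceil m'/e\rceil$ vector coefficients. Its top coefficient equals the top coefficient of $V$, so it inherits the same zero pattern from the truncation of the last partial block. Hence the coefficient vector of $W$ is a vector $w\in\mathrm{GF}(b)^{m'}$.

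By the same Hasse-derivative computation read backwards, the residual conditions say exactly that $w$ lies in the kernel of the compound matrix of the first $\rho_i$ rows of each $C_i$, restricted to $m'$ columns. That is the $(0,m',s)$-net condition with $d_i'=\rho_i$, which holds by hypothesis since $m'\le s(e-1)$. Therefore $W=0$, so $V=0$ and $v=0$, giving invertibility for every $m$.

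\textbf{Main obstacle.} The delicate step is the bookkeeping in this identification:
\begin{itemize}
\item checking that the Hasse-derivative description matches the block rows of $\cP^{\alpha}$ in characteristic $b$, including binomials vanishing mod $b$;
\item checking that the partial last block when $e\nmid m$ is preserved under division by the monic factor;
\item checking that the scalars $\lambda_i$ do not change the row spaces involved, which holds since they are nonzero scalars.
\end{itemize}
I would first verify the identity $\sum_j\binom{j-1}{k-1}\alpha^{j-k}v_j=V^{[k-1]}(\alpha)$ coefficientwise, and then the degree count for $W$.
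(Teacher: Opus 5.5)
Your proposal is correct and follows the paper's proof essentially step for step. The paper also pads the kernel vector into a vector polynomial $F$ and reads each block row as $\mathrm{coeff}_{x^q}F(x+\alpha_i)$ (your Hasse derivatives), giving $F=\prod_i(x-\alpha_i)^{q_i}H$ and reducing the leftover partial rows to the compound matrix $\mathcal{K}_{(r_1,\dots,r_s)}$ of size $\sum_i r_i\le s(e-1)$; the only differences are cosmetic (you argue directly rather than by contradiction, and you derive invertibility of the $e\times e$ blocks from the $(0,e,s)$ hypothesis, which the paper simply asserts).
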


\begin{figure}
\scalebox{0.95} {
  \tiny
\begin{overpic}[width=7.5cm]{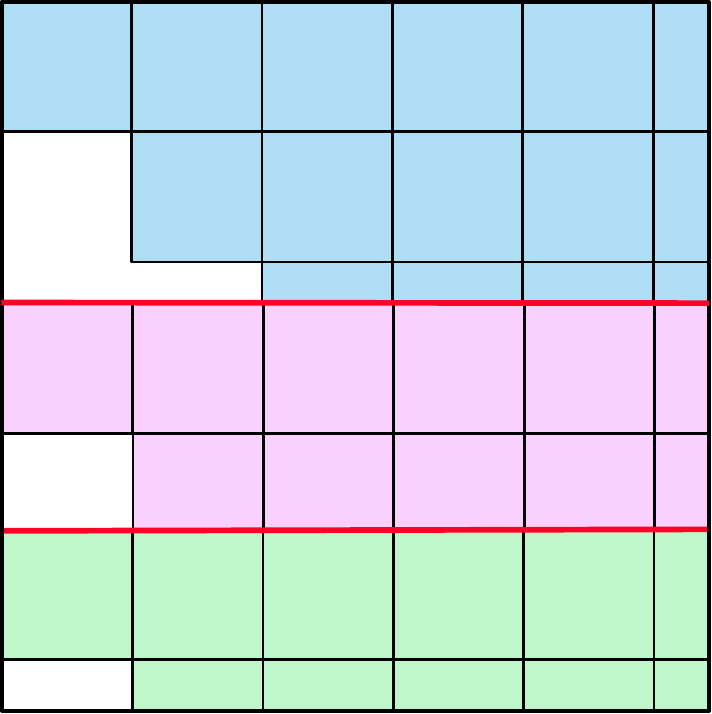}
\put(3,90){$K^{(\alpha_1)}_{[e\times e]}$}
\put(20,90){$\alpha_1K^{(\alpha_1)}_{[e\times e]}$}
\put(22,72){$K^{(\alpha_1)}_{[e\times e]}$}
\put(38,90){$\alpha_1^2K^{(\alpha_1)}_{[e\times e]}$}
\put(38,72){$2\alpha_1K^{(\alpha_1)}_{[e\times e]}$}
\put(56,90){$\alpha_1^3K^{(\alpha_1)}_{[e\times e]}$}
\put(56,72){$3\alpha_1^2K^{(\alpha_1)}_{[e\times e]}$}
\put(80,90){$\ldots$}
\put(74,72){$4\alpha_1^3K^{(\alpha_1)}_{[e\times e]}$}
\put(3,47){$K^{(\alpha_2)}_{[e\times e]}$}
\put(20,47){$\alpha_2K^{(\alpha_2)}_{[e\times e]}$}
\put(45,47){$\ldots$}
\put(3,16){$K^{(\alpha_3)}_{[e\times e]}$}
\put(20,16){$\alpha_3K^{(\alpha_3)}_{[e\times e]}$}
\put(45,16){$\ldots$}
\put(105,80){\small$q_1=2$}
\put(105,45){\small$q_2=1$}
\put(105,15){\small$q_3=1$}
\put(101,60){\small\}\,\,$r_1$}
\put(101,31){\Large\}\small\,\,$r_2$}
\put(101,3){\small\}\,\,$r_3$}
\end{overpic}
}
\caption{Compound matrix $\cK_{\bf d}$ formed by stacking the first $d_i$ rows of $K^{(\alpha_i)}$ for $i=1\dots s$.  $e \times e$ blocks follow the recursive construction of equation~\ref{eq:rec}.}
\label{fig:partition}
\end{figure}

\begin{proof} For $1\leq i\leq s$, with $\alpha_i = c_1-c_{i}$, and $K_{[m \times m]}^{(\alpha_i)}:=M_{[m \times m]}^{(c_i)}\left(M_{[m \times m]}^{(c_1)}\right)^{-1}$, the
  matrices $\left\{K^{(\alpha_i)}_{[m \times m]} \right\}_{i=1\dots s}$  produce the same sequence as matrices $\{M_{[m \times m]}^{(c_i)}\}_{i=1\dots s}$. In the following, we drop the $[m \times m]$ subscript for clarity.
As noted in Sec.~\ref{sec:background}, the first $b^m$ points form a $(0,m,s)$-net if and only if every possible $m \times m$ compound matrix $\cK_{\bf{d}}$, for all ${\bf d} = \{d_i\}_{i=1\dots s}$, formed by stacking the first $d_i$ rows of $K^{(\alpha_i)}$ has $\text{det}(\cK_{{\bf d}}) \neq 0$ (see figure~\ref{fig:partition}).
We assume that this is the case up to $m \leq s (e-1)$ and argue by contradiction. 
Suppose that some compound matrix of total size $m > s (e-1)$ is singular. Thus, for some row counts ${\bf d} = d_1,\dots,d_s$ with
$\sum_i d_i=m$, there exists a nonzero vector $\mathbf{v}\in\mathrm{GF}(b)^m$ annihilated by the selected rows.

We group the entries of $\mathbf{v}$ into blocks of length $e$,
$$ \mathbf{v}=(\mathbf{v}_0,\mathbf{v}_1,\ldots),\qquad \mathbf{v}_\ell\in\mathrm{GF}(b)^e, $$
padding the last block with zeros if necessary, and form the vector polynomial
$$F(x)=\sum_{\ell\ge 0} \mathbf{v}_\ell x^\ell.$$
Thus singularity of $\cK_{\bf d}$ is equivalent to the existence of a nonzero coefficient vector, or equivalently a nonzero vector polynomial $F$, such that $\cK_{\bf d}\mathbf{v}=0$.

We note that powers of Pascal matrices act as polynomial shifts, i.e., for a scalar-valued polynomial $p(x) = \sum_i a_i x^i$, with $\mathbf{a}=(a_i)$, denoting $\mathbf{a}' = \cP^y \mathbf{a}$, then $\sum_{i} a'_i x^i = p(x+y)$.

For our vector-valued polynomial $F$ and tensorized Pascal matrices, the $q$-th block row of one matrix $K^{(\alpha_i)} = \cP^{\alpha_i}\otimes K^{(\alpha_i)}_{[e \times e]}$  hence computes
$$  K^{(\alpha_{i})}_{[e \times e]} \, \text{coeff}_{x^q} F(x+\alpha_i), $$
where $\text{coeff}_{x^q} F$ denotes the coefficient of $x^q$ in $F$. 

Turning to the compound matrix $\cK_{\bf d}$, we write each row count as $d_i=e q_i+r_i$, for $0\le r_i<e$.
The first $e q_i$ rows of $K^{(\alpha_i)}$ consist of $q_i$ complete block rows. Since $K^{(\alpha_i)}_{e \times e}$ is invertible, these rows force
$$  \text{coeff}_{x^q} F(x+\alpha_i)=0,  \qquad q=0,\ldots,q_i-1.$$
Equivalently, $F(x)$ is divisible by $(x-\alpha_i)^{q_i}$.
Since the $\alpha_i$ are distinct, we can write for some vector polynomial $H(x)\neq 0$:
$$  F(x)=\Phi(x)H(x),  \qquad  \Phi(x)=\prod_{i=1}^s (x-\alpha_i)^{q_i}.$$

Let $ Q=\sum_i q_i = \deg \Phi$, and $R=\sum_i r_i$. 
Since $\Phi$ is monic of degree $Q$ and $F$ has at most
$m=eQ+R$ scalar coefficients, $H$ has at most $R$ scalar
coefficients. If $R=0$, this implies $H=0$, contradicting
$H\neq 0$. We may thus assume $R>0$ and handle the
remaining incomplete blocks of rows.

For a fixed $i$, the remaining $r_i$ selected rows of $K^{(\alpha_i)}$ are precisely the first
$r_i$ rows inside the next block row $q_i$. They therefore impose that the first
$r_i$ components of $ K^{(\alpha_i)}_{[e \times e]} \,\text{coeff}_{x^{q_i}}F(x+\alpha_i)$ vanish. 

We factor $\Phi$ by $(x-\alpha_i)^{q_i}$:
$$ \Phi(x)=(x-\alpha_i)^{q_i}\Phi_i(x),  \qquad  \Phi_i(x)=\prod_{j\ne i}(x-\alpha_j)^{q_j}\,,$$
then
$$  F(x+\alpha_i)  =  x^{q_i}\Phi_i(x+\alpha_i)H(x+\alpha_i)\,,$$
and therefore
$$  \text{coeff}_{x^{q_i}}F(x+\alpha_i)  =  \Phi_i(\alpha_i)H(\alpha_i)\,.$$

Since $\Phi_i(\alpha_i)=\prod_{j\ne i}(\alpha_i-\alpha_j)^{q_j}\ne 0$, this is equivalent to the vanishing of
the first $r_i$ components of $K^{(\alpha_i)}_{[e \times e]} H(\alpha_i)$ for every $i$.
In other words, the nonzero coefficient vector of $H$, of length $R=\sum_i r_i\le s(e-1)$, is annihilated by the first $r_i$ rows of each
$K^{(\alpha_i)}$, i.e., by the compound matrix $\cK_{(r_1,\dots,r_s)}$, 
which the hypothesis forbids. Hence every compound matrix is non-singular
and the sequence is a $(0,s)$-sequence.




\end{proof}

\section{Guaranteeing $t=0$}
\label{sec:exist}
The conditions of Sec.~\ref{sec:conditions} would suggest searching for $s$ initialization matrices, each of size $e \times e$, and identifying cases that satisfy conditions of theorem~\ref{thm:main}. While this works in practice for small values of bases $b$,  $e$ and $s$, it becomes intractable even for moderate values. Even focusing on invertible upper triangular initialization matrices, this would lead to $b^{e(e-1) s/2} (b-1)^{e\, s}$ degrees of freedom, such that even 3-dimensional sequences in base $b=5$ using degree $e=5$ polynomials would require a search over $10^{30}$ candidates.

When $e=b$, we formulate the following condition to guarantee that sequences have $t=0$.

\begin{proposition}\label{prop:init}
Considering $s$ distinct consecutive polynomials of degree $e=b$ over  $\mathrm{GF}(b)$, initializing their $s$ Sobol' matrices with 
$$M^{(c_i)}_{[e \times e]} = D \cP^{c_i}_{[e\times e]}  D^{-1}$$
with $D$ a diagonal invertible matrix, and $\cP_{[e\times e]}$ the $e\times e$ Pascal matrix always produces $(0,s)$-sequences.
\end{proposition}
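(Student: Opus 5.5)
The plan is to apply Theorem~\ref{thm:main} and to use the reduction already contained in its proof. There, any singular compound matrix of size $m>s(e-1)$ was reduced to a singular compound matrix $\cK_{(r_1,\dots,r_s)}$ built from the first $r_i<e$ rows of the $e\times e$ initialization blocks $K^{(\alpha_i)}_{[e\times e]}$, with $R=\sum_i r_i$. The same argument applies verbatim to all $m\le s(e-1)$: take $q_i=0$ when $d_i<e$ and peel off the full block rows otherwise. So it suffices to show one fact. For every choice of $0\le r_i<e$, the $R\times R$ matrix formed by stacking the first $r_i$ rows of the blocks $K^{(\alpha_i)}_{[e\times e]}$, restricted to their first $R$ columns, is nonsingular.

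First I compute these blocks. By Lemma~\ref{lemma:pascal}, the initialization block of the characteristic matrix is $M^{(c_i)}_{[e\times e]}\bigl(M^{(c_1)}_{[e\times e]}\bigr)^{-1}$. Since the matrices are upper triangular, this restriction to the first $e$ columns is consistent. With the proposed initialization it equals
\[
D\,\cP^{c_i}D^{-1}\,D\,\cP^{-c_1}D^{-1}=D\,\cP^{\beta_i}D^{-1},\qquad \beta_i=c_i-c_1,
\]
using the group law $\cP^{a}\cP^{b}=\cP^{a+b}$ of Pascal powers. The $\beta_i$ are pairwise distinct because the $c_i$ are. Stacking the first $r_i$ rows of each $D\cP^{\beta_i}D^{-1}$ gives a matrix of the form $\Delta\,\mathcal{S}\,D^{-1}$. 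Here $\Delta$ is a diagonal matrix made of the corresponding nonzero entries of $D$, and $\mathcal{S}$ stacks the first $r_i$ rows of $\cP^{\beta_i}$. Since $D$ is upper triangular (diagonal), restricting to the first $R$ columns commutes with this factorization. Hence $D$ does not affect invertibility, and we only need $\mathcal{S}_{[R\times R]}$ to be nonsingular.

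For this last step I reuse the polynomial-shift interpretation from the proof of Theorem~\ref{thm:main}, now in the scalar setting. A vector $\mathbf{v}\in\mathrm{GF}(b)^R$ corresponds to a polynomial $f$ of degree $<R$. Row $q$ of $\cP^{\beta_i}$ returns $\mathrm{coeff}_{x^q}f(x+\beta_i)$, which is the identity $\cP^y\mathbf{a}\leftrightarrow p(x+y)$ quoted there. So $\mathcal{S}\mathbf{v}=0$ means that $(x-\beta_i)^{r_i}$ divides $f$ for every $i$. The roots are distinct, so $\prod_i(x-\beta_i)^{r_i}$, of degree $R$, divides $f$, which has degree $<R$. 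Hence $f=0$. This is the classical Faure argument, and it holds over $\mathrm{GF}(b)$ without any division by binomial coefficients.

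The main obstacle I expect is making sure that the bookkeeping of Lemma~\ref{lemma:pascal} is compatible with a nontrivial block. There are two points to check. The first is the sign convention: the lemma uses $\alpha=c_1-c_2$, whereas here the block is $\cP^{c_i-c_1}$, so the orientation of the Pascal action must be checked for consistency. The second is that columns beyond $e$ inside a partial block do not break the reduction. Here the choice $e=b$ gives a cleaner alternative that I would use as a cross-check. By Lucas' theorem, $\binom{bj'+j_0}{bi'+i_0}\equiv\binom{j'}{i'}\binom{j_0}{i_0}$, and by Fermat, $\alpha^{b}=\alpha$. Together these give $\cP^{\alpha}_{[bm\times bm]}=\cP^{\alpha}_{[m\times m]}\otimes\cP^{\alpha}_{[b\times b]}$. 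So whenever the exponents match, the whole characteristic matrix is a single Pascal power conjugated by $\mathrm{Id}\otimes D$. The $(0,s)$ property then follows directly from Faure's construction, by the same divisibility argument applied at every size $m$.
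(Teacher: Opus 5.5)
\textbf{The gap is in your reduced ``one fact''.} In the proof of Theorem~\ref{thm:main}, the residual matrix $\cK_{(r_1,\dots,r_s)}$ is $R\times R$, with $R=\sum_i r_i$ as large as $s(e-1)$. Its rows are the first $r_i$ rows of the \emph{full} characteristic matrix $K^{(\alpha_i)}=\cP^{\alpha_i}\otimes K^{(\alpha_i)}_{[e\times e]}$, not of the $e\times e$ block.

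Your argument is correct when $R\le e$. But once $R>e$ (the generic case, e.g.\ $s=b-1$, $e=b\ge 3$), the columns run into the next blocks $\alpha_i K^{(\alpha_i)}_{[e\times e]},\alpha_i^2K^{(\alpha_i)}_{[e\times e]},\dots$. ``The first $R$ columns of the blocks'' then has no meaning, and treating $\cP^{\beta_i}$ as an $R\times R$ Pascal matrix silently replaces the true matrix by a different one.

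A telltale sign is that your main argument never uses $e=b$, and without that hypothesis the claim is false. Take $b=7$, $e=2$ and the irreducible polynomials $p_i=x^2+c_i$ with $c_i\in\{1,2,4\}$.
\begin{itemize}
\item Here $Q_{p_i}F_{p_i}=-c_i\,\mathrm{Id}$, so the first block row of $M^{(c_i)}$ is $[W_i,-c_iW_i,\dots]$.
\item Its first scalar row therefore begins $(1,\lambda c_i,-c_i)$, with $\lambda=D_{1,1}/D_{2,2}$.
\item For ${\bf d}=(1,1,1)$ and $m=3\le s(e-1)$, the compound matrix has proportional second and third columns, so it is singular for every diagonal $D$.
\end{itemize}
Your $\mathcal{S}$, by contrast, would be the nonsingular Vandermonde matrix with rows $(1,\beta_i,\beta_i^2)$.

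\textbf{How to close the gap.} What is missing is the step you relegate to a ``cross-check'', completed with one more identity. This is the paper's proof, and it makes Theorem~\ref{thm:main} unnecessary.
\begin{itemize}
\item Lemma~\ref{lemma:pascal} together with your block computation gives $K^{(\alpha_i)}=\cP^{\alpha_i}\otimes D\cP^{-\alpha_i}_{[b\times b]}D^{-1}$, with $\alpha_i=c_1-c_i$.
\item The exponents do \emph{not} match; this is exactly the sign issue you flag and leave open. It is resolved by $\mathrm{diag}(\lambda^u)_u\,\cP^{a}\,\mathrm{diag}(\lambda^{-u})_u=\cP^{a/\lambda}$ with $\lambda=-1$. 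This gives $D\cP^{-\alpha_i}D^{-1}=D'\cP^{\alpha_i}D'^{-1}$ with $D'=D\,\mathrm{diag}((-1)^u)_u$.
\item Only then does Lemma~\ref{lemma:tensor} (Lucas and Fermat, which is where $e=b$ enters) merge the two factors. The result is $K^{(\alpha_i)}=\Delta\,\cP^{\alpha_i}\Delta^{-1}$, with the same $\Delta=\mathrm{diag}(D',D',\dots)$ for every $i$ (not $\mathrm{Id}\otimes D$).
\item A common diagonal conjugation preserves the nonsingularity of every compound matrix.
\end{itemize}
Your scalar divisibility argument (Faure's theorem) then applies directly at every size $m$, with no restriction to $m\le s(e-1)$.
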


To show proposition~\ref{prop:init},  we first exhibit the self-similar structure of powers of Pascal matrices.

\begin{lemma}[Self-similarity of Pascal powers]\label{lemma:tensor}
Let $b$ be prime and $a\in\mathrm{GF}(b)$. Over $\mathrm{GF}(b)$,
\begin{equation}\label{eq:tensor-fine}
\cP^{a}_{[bm\times bm]}
=\cP^{a}_{[m\times m]}\otimes\cP^{a}_{[b\times b]}
\qquad\text{for every }m\ge1,
\end{equation}
and, when $m$ is a power of $b$,
\begin{equation}\label{eq:tensor-coarse}
\cP^{a}_{[bm\times bm]}
=\cP^{a}_{[b\times b]}\otimes\cP^{a}_{[m\times m]}.
\end{equation}
\end{lemma}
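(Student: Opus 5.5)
The natural tool is Lucas' theorem, applied entrywise. With 0-based indices $i,j\in\{0,\dots,bm-1\}$ we have $\cP^a_{i,j}=\binom{j}{i}a^{j-i}$. We write
$$i=b\,i_1+i_0,\qquad j=b\,j_1+j_0,\qquad 0\le i_0,j_0<b,\quad 0\le i_1,j_1<m,$$
so that the Kronecker product $\cP^{a}_{[m\times m]}\otimes\cP^{a}_{[b\times b]}$ has $(i,j)$ entry
$$\binom{j_1}{i_1}a^{j_1-i_1}\binom{j_0}{i_0}a^{j_0-i_0}.$$
The coarse index $i_1$ selects the block and the fine index $i_0$ the position inside it, matching the block-row convention already used in Lemma~\ref{lemma:pascal}.

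For \eqref{eq:tensor-fine}, we use the factorization form of Lucas' theorem:
$$\binom{bj_1+j_0}{bi_1+i_0}\equiv\binom{j_1}{i_1}\binom{j_0}{i_0}\pmod b.$$
This holds for all $i_1,j_1$, not only single digits, because it follows from $(1+x)^{bj_1+j_0}=(1+x^b)^{j_1}(1+x)^{j_0}$ in $\mathrm{GF}(b)[x]$. Comparing coefficients of $x^{bi_1+i_0}$ gives the identity, since $0\le i_0,j_0<b$ forces a unique decomposition of the exponent.

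For the powers of $a$, note that $j-i=b(j_1-i_1)+(j_0-i_0)$. We need
$$a^{j-i}=a^{j_1-i_1}a^{j_0-i_0}$$
whenever the binomial product is nonzero. In that case $i_1\le j_1$ and $i_0\le j_0$, so every exponent is nonnegative and no negative power issue arises. By Fermat, $a^b=a$ in $\mathrm{GF}(b)$ (this also covers $a=0$), so $a^{b(j_1-i_1)}=a^{j_1-i_1}$, which gives the required equality. When the binomial product vanishes, both sides are zero. Here the convention $a^0=1$ is needed even for $a=0$ on the diagonal, and one should check that $0^{j-i}$ with $j>i$ is consistent on both sides: it vanishes on the left exactly when $j_1>i_1$ or $j_0>i_0$, which matches the right. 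This proves \eqref{eq:tensor-fine}.

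For \eqref{eq:tensor-coarse}, let $m=b^k$ and write instead $i=m\,i_1+i_0$ with $0\le i_1<b$ and $0\le i_0<m$. The generating-function argument still works: $(1+x)^{mj_1+j_0}=(1+x^{m})^{j_1}(1+x)^{j_0}$, because $m$ is a power of $b$ and the Frobenius map gives $(1+x)^{b^k}=1+x^{b^k}$. Since $j_0<m$, the coefficient extraction again splits uniquely, yielding $\binom{j}{i}\equiv\binom{j_1}{i_1}\binom{j_0}{i_0}$. For the scalar factors we need $a^{m(j_1-i_1)}=a^{j_1-i_1}$, which follows from $a^{b^k}=a$ by iterating Fermat.

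The only step needing care, and the main obstacle, is the bookkeeping of the ordering convention in the Kronecker product: which factor indexes the coarse digit. This is also why the coarse version requires $m$ to be a power of $b$, since only then is $x\mapsto x^m$ additive over $\mathrm{GF}(b)$. Alternatively, \eqref{eq:tensor-coarse} follows from \eqref{eq:tensor-fine} by iterating it $k$ times and using associativity of $\otimes$:
$$\cP^{a}_{[b^{k+1}]}=\cP^a_{[b]}\otimes\cdots\otimes\cP^a_{[b]},$$
which can then be regrouped as $\cP^a_{[b]}\otimes\cP^a_{[b^k]}$.
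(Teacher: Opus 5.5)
Your proposal is correct and takes the same route as the paper: Lucas' congruence after splitting off the last base-$b$ digit (or the leading digit when $m=b^k$), combined with Fermat's $a^b=a$ (iterated to $a^m=a$) for the scalar factors. Your additions are minor and self-contained: the paper simply cites the block form of Lucas, whereas you derive it from $(1+x)^{b}=1+x^{b}$ in $\mathrm{GF}(b)[x]$, and you also observe that the coarse identity follows from the fine one by associativity of $\otimes$.
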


\begin{proof}
We exploit the identity $\cP^a_{i,j} = a^{j-i} \binom{j-1}{i-1}$. 
For equation~\ref{eq:tensor-fine}, split off the last base-$b$ digit of the
(0-based) indices: $i=\hat\imath\,b+i_0$, $j=\hat\jmath\,b+j_0$ with
$0\le i_0,j_0<b$. Lucas' congruence \cite{lucas1891theorie,fine1947binomial} gives
$\binom{j}{i}\equiv\binom{\hat\jmath}{\hat\imath}\binom{j_0}{i_0}\pmod b$, and
Fermat's little theorem gives
$a^{\,j-i}=\bigl(a^{\,b}\bigr)^{\hat\jmath-\hat\imath}a^{\,j_0-i_0}
=a^{\,\hat\jmath-\hat\imath}\,a^{\,j_0-i_0}$; the product of the two
congruences is the $(i,j)$ entry of
$\cP^{a}_{[m\times m]}\otimes\cP^{a}_{[b\times b]}$. For
\eqref{eq:tensor-coarse}, split off the \emph{leading} digit instead,
$i=I\,m+i'$, $j=J\,m+j'$ with $0\le I,J<b$: this is a single base-$b$ digit exactly when
$m$ is a power of $b$, and Fermat iterates to $a^{\,m}=a$; the same two
congruences conclude.
\end{proof}

We now turn to the proof of proposition~\ref{prop:init}.

\begin{proof}
As noted by Faure and Lemieux~\cite{faure2016irreducible}, a Sobol' matrix obtained using an initialization $M_{[e\times e]}$ can also be obtained by initializing the recurrence with an identity matrix and left-multiplying the result by a block-diagonal matrix with $M_{[e\times e]}$ as the diagonal blocks. Denoting $W_i = M^{(c_i)}_{[e \times e]} = D\,\cP^{c_i}_{[e\times e]}D^{-1}$ the initialization matrix: $M^{(c_i)}=\mathrm{diag}(W_i,W_i,\dots)\,M^{(c_i)}_{\mathrm{Id}}$. Using lemma~\ref{lemma:pascal} for the identity-initialized matrices,
\begin{align}
K^{(\alpha_i)} &=\mathrm{diag}(W_i,\dots)\,\bigl(\cP^{\alpha_i}\otimes \mathrm{Id}_e\bigr)\,  \mathrm{diag}(W_1,\dots)^{-1} \\
                          &=\cP^{\alpha_i}\otimes W_iW_1^{-1}.
\end{align}
Pascal powers compose as the coordinate shifts they encode,
$\cP^{a}\cP^{a'}=\cP^{a+a'}$, and conjugating by a geometric diagonal rescales
the power, $\mathrm{diag}(\lambda^u)_u\,\cP^{a}\,\mathrm{diag}(\lambda^{-u})_u
=\cP^{a/\lambda}$ (since $\lambda^{u}\binom{v}{u}a^{v-u}\lambda^{-v}
=\binom{v}{u}(a/\lambda)^{v-u}$); hence, with $S=\mathrm{diag}\bigl((-1)^u\bigr)_u$ and
$D'=DS$,
$$
W_iW_1^{-1}=D\,\cP^{\,c_i-c_1}_{[e\times e]}\,D^{-1}
=D'\,\cP^{\,\alpha_i}_{[e\times e]}\,D'^{-1},
~~~ \alpha_i=c_1-c_i .
$$
Because $e=b$, lemma~\ref{lemma:tensor} gives 
$\cP^{\alpha}\otimes\cP^{\alpha}_{[b\times b]}=\cP^{\alpha}$, and therefore
$$
K^{(\alpha_i)}=\Delta\;\cP^{\alpha_i}\;\Delta^{-1},
\qquad \Delta=\mathrm{diag}(D',D',\dots),
$$
with the \emph{same} diagonal $\Delta$ for every $i$. Diagonal conjugation
preserves the nonsingularity of compound matrices: the left factor rescales each selected row by
a nonzero scalar, and the right factor is a common non-singular triangular
factor. The compound matrices of the family $\{\cP^{\alpha_i}\}$, with
pairwise-distinct powers $\alpha_i$  are all
non-singular by Faure's theorem \cite{faure1982discrepance}. All
compound determinants are therefore nonzero, and the sequence is a
$(0,s)$-sequence.
\end{proof}

\paragraph*{Artin-Schreier polynomials}
While our proofs have not relied on polynomial irreducibility, enforcing the use of irreducible polynomials allows to maintain compatibility with Sobol' sequences. By using $s_1$ irreducible consecutive polynomials and $s_2$ other polynomials commonly used for Sobol' sequences \cite{joe2008constructing}, one may obtain samples of $s_1+s_2$ dimensions with $t=0$ guarantees in the $s_1$ subspace, and preserving a bounded $t$ value for the full $(s_1+s_2)$-dimensional sequence. The question arises of whether $s$ consecutive irreducible Sobol' sequences always exist. The answer lies within Artin-Schreier theory~\cite{artin1927kennzeichnung}. This theory states that for prime $b$, the $b-1$ polynomials in  $\mathrm{GF}(b)$ of the form $p_i(x) = x^b - x +c_i$ are always irreducible. Such polynomials perfectly fit our framework since proposition~\ref{prop:init} requires $e=b$, which also results from Artin-Schreier theory, and since $s=b-1$ is the maximum number of possible different consecutive irreducible polynomials in $\mathrm{GF}(b)$, apart from the $b$ linear polynomials $p_i(x) = x +c_i$.
In practice, we use Artin-Schreier polynomials to define our sequences and for our numerical experiments. Illustrations of Sobol' matrices produced by $\{x+c_i\}$ and Artin-Schreier polynomials are shown in figure~\ref{fig:matrices} for various bases.

\section{Numerical experiments}
\label{sec:numerics}

\subsection{Artin-Schreier sequences}

\begin{figure*}[h]
  \setlength{\tabcolsep}{0pt}
  \begin{tabular}{cc}
  \begin{overpic}[width=9cm]{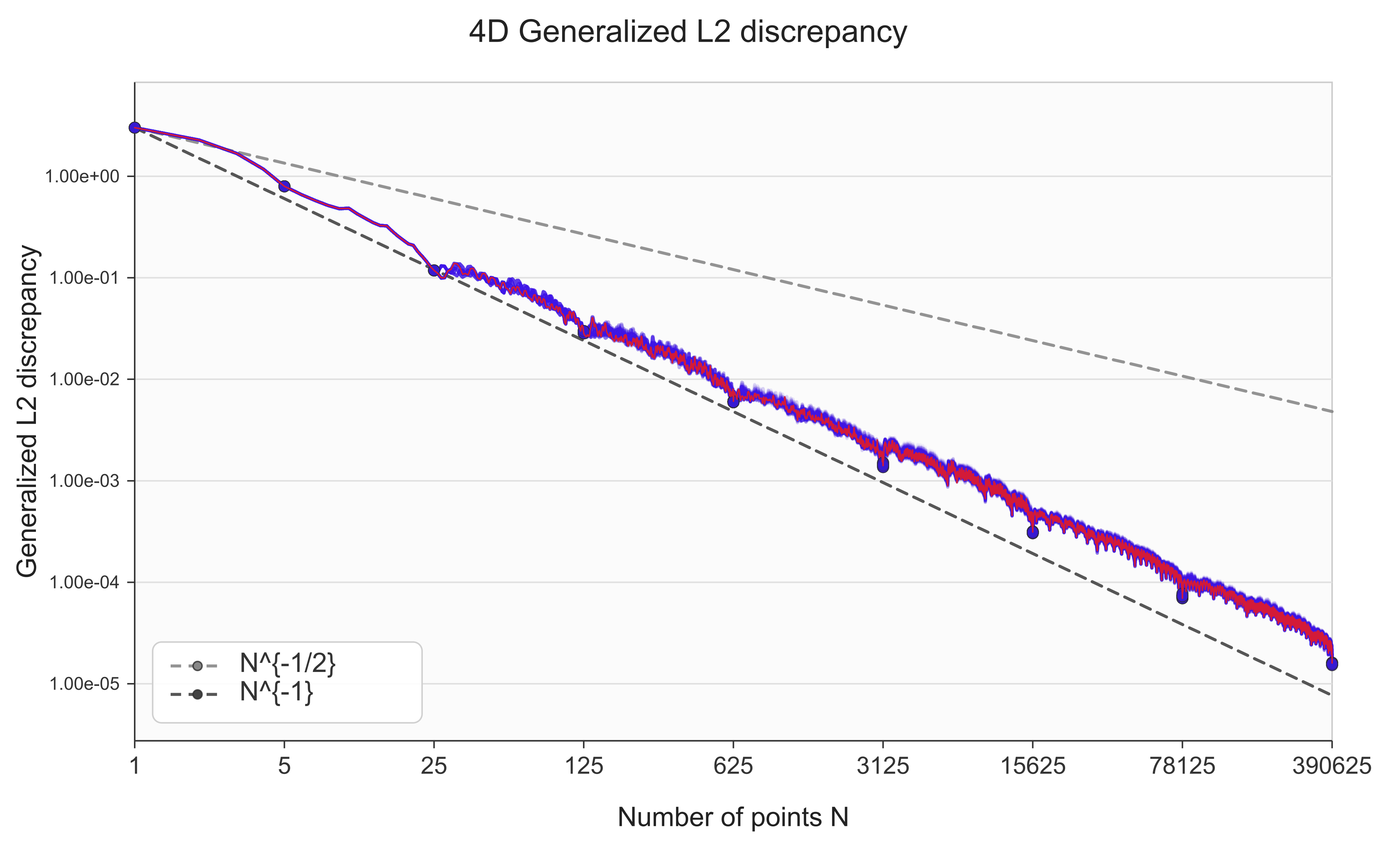}
    \put(85,53){\fcolorbox{black}{white}{$\mathrm{GF}(5)$}}
  \end{overpic}
  & \begin{overpic}[width=9cm]{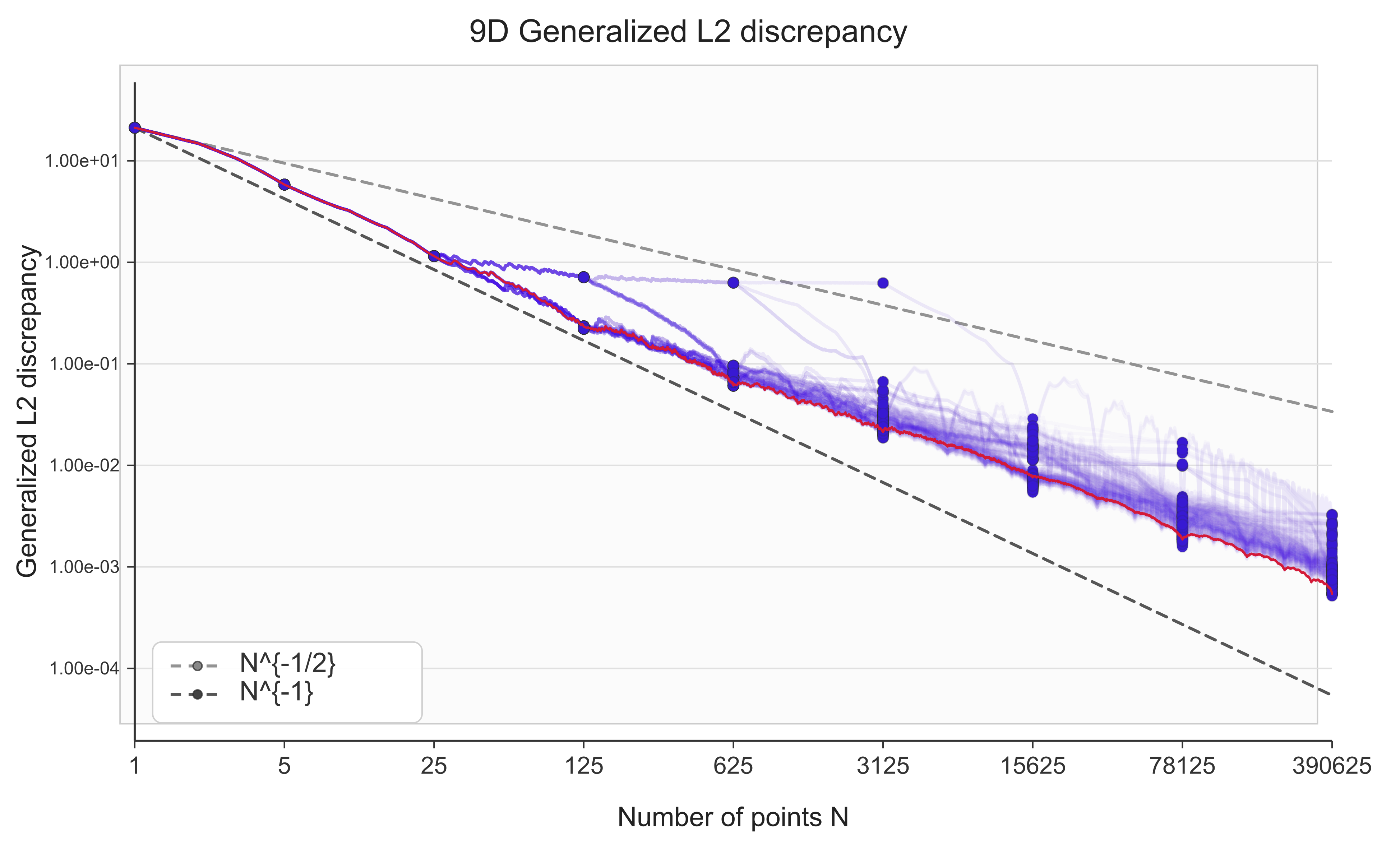}
    \put(85,53){\fcolorbox{black}{white}{$\mathrm{GF}(5)$}}
  \end{overpic}
  \end{tabular}
\caption{In $\mathrm{GF}(5)$, we illustrate the behavior of the discrepancies of all $256$ initializations of Artin-Schreier polynomials that produce $(0,4)$-sequences in 4D (left), and their concatenations with affine polynomials that produce $(0,5)$-sequences to produce 9-dimensional sequences (right). While initialization does not affect the 4D behavior much, it significantly matters when considering the combined 9D sequence.}
\label{fig:allinits}
\end{figure*}

In figure~\ref{fig:allinits} (left), we evaluate a uniformity measure called generalized $\ell^2$ discrepancy, which relates Monte Carlo integration error to point set uniformity~\cite{Lemieux2009} (lower is better) for our sequences with all $4^4 = 256$ possible initialization $D$ matrices in $\mathrm{GF}(5)$,  using the \texttt{tms\_lib}   library~\cite{tmslib}. 
For a chosen $D$, we show in figure~\ref{fig:as} the pairwise 2D projections of point sets with $b^m \approx 1200$ samples (for the nearest perfect power of $b$) corresponding to Sobol' sequences with Artin-Schreier polynomials in bases $b=5, 7$ and $11$ for up to 5 dimensions (the case $b=3$ corresponds dimensions 13--14 of the $\mathrm{GF}(3)$ sequence of Ostromoukhov et al.~\cite{OBCI24}). We also show other Sobol'-based sequences, such as the improvement of Faure and Lemieux~\cite{faure2016irreducible} using irreducible polynomials in $\mathrm{GF}(2)$, the sequence of Bonneel et al.~\cite{BCIO25} that uses polynomials of the form $p$ and $p^2+p+1$ to guarantee $t=1$ in consecutive pairs of dimensions in $\mathrm{GF}(2)$, and the sequence of Ostromoukhov et al.~\cite{OBCI24} that numerically optimizes quadruples of Sobol' dimensions in $\mathrm{GF}(3)$.
In figure~\ref{fig:discrepancies} (top), we compare the generalized $\ell^2$ discrepancy of our sequences to these alternatives, as well as Joe and Kuo's Sobol' sequence~\cite{joe2008constructing}, where discrepancy is computed after randomization by a base-$b$ Owen scrambling~\cite{owen1997monte} and averaged over 32 realizations.

\begin{figure*}[h]
\begin{tabular}{c@{}c@{}c}
\hspace*{-2cm}\raisebox{2.1cm}{\includegraphics[width=3.3cm]{"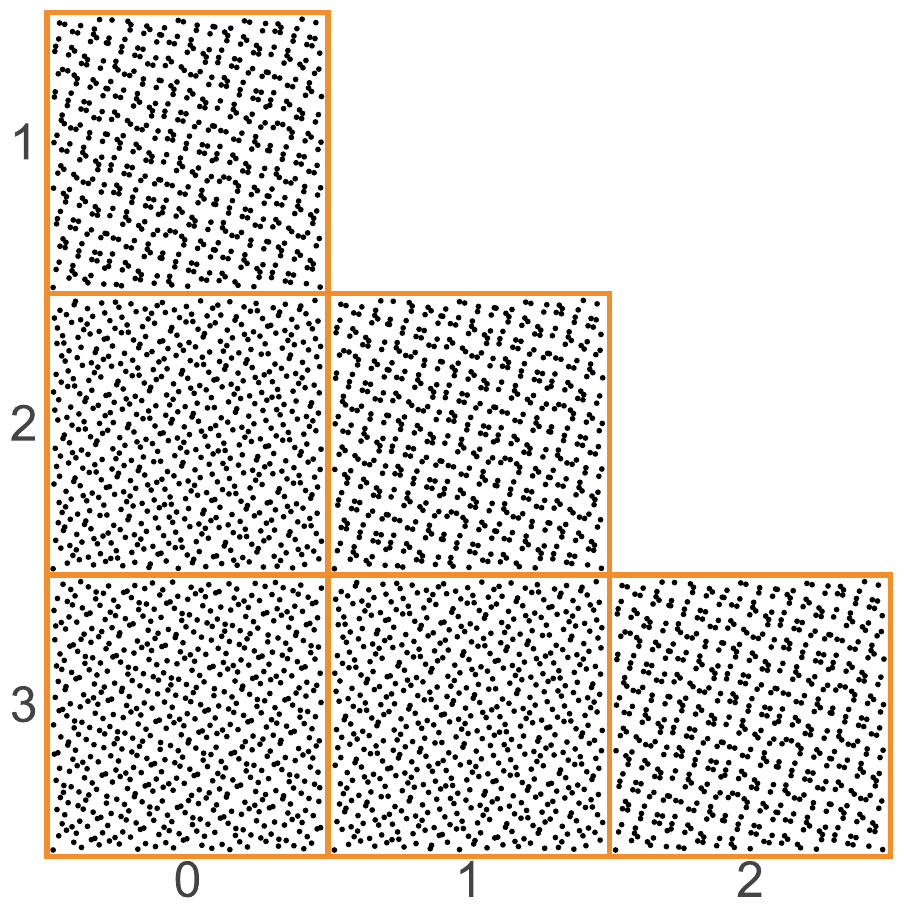"}} &
\includegraphics[width=5.5cm]{"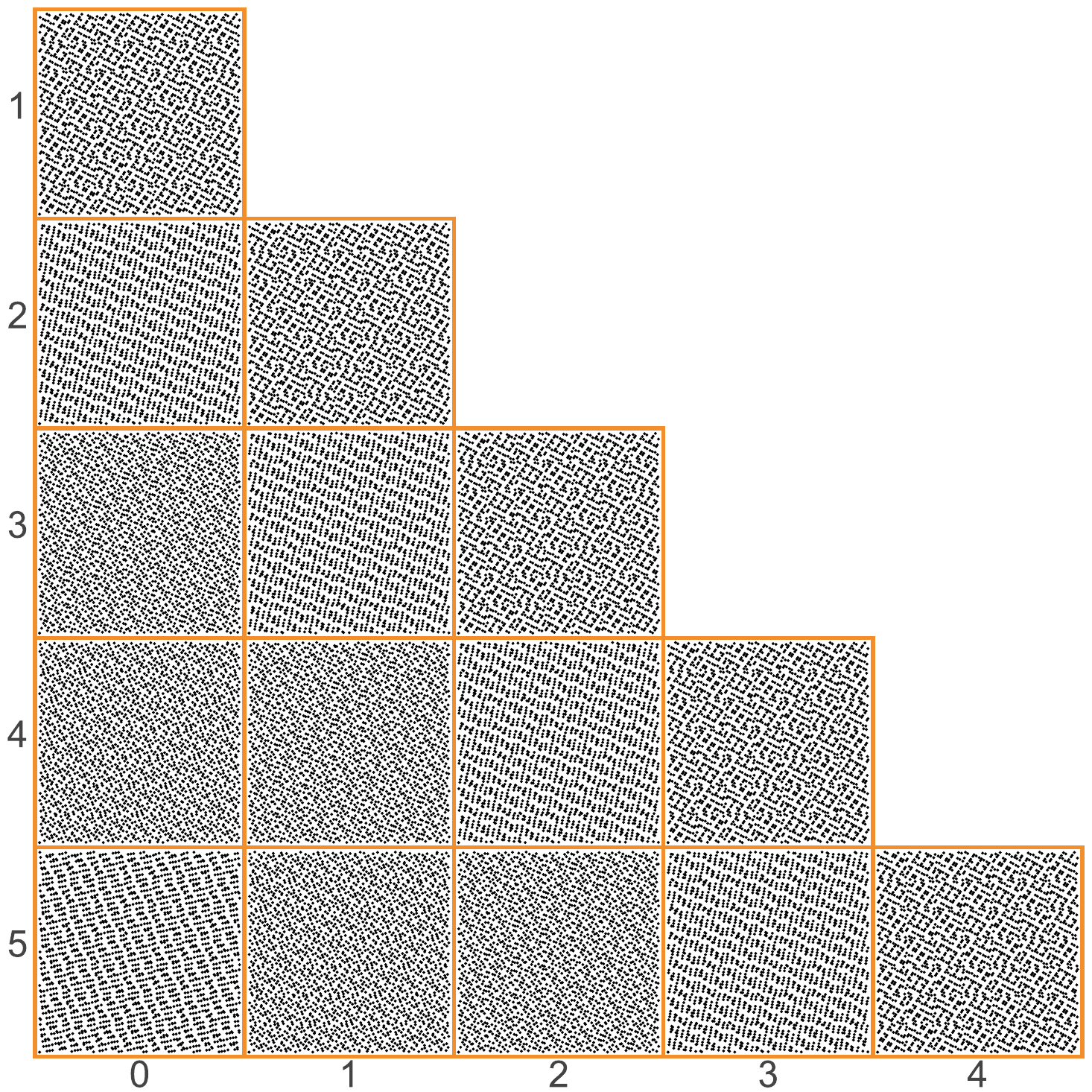"} &
\includegraphics[width=5.5cm]{"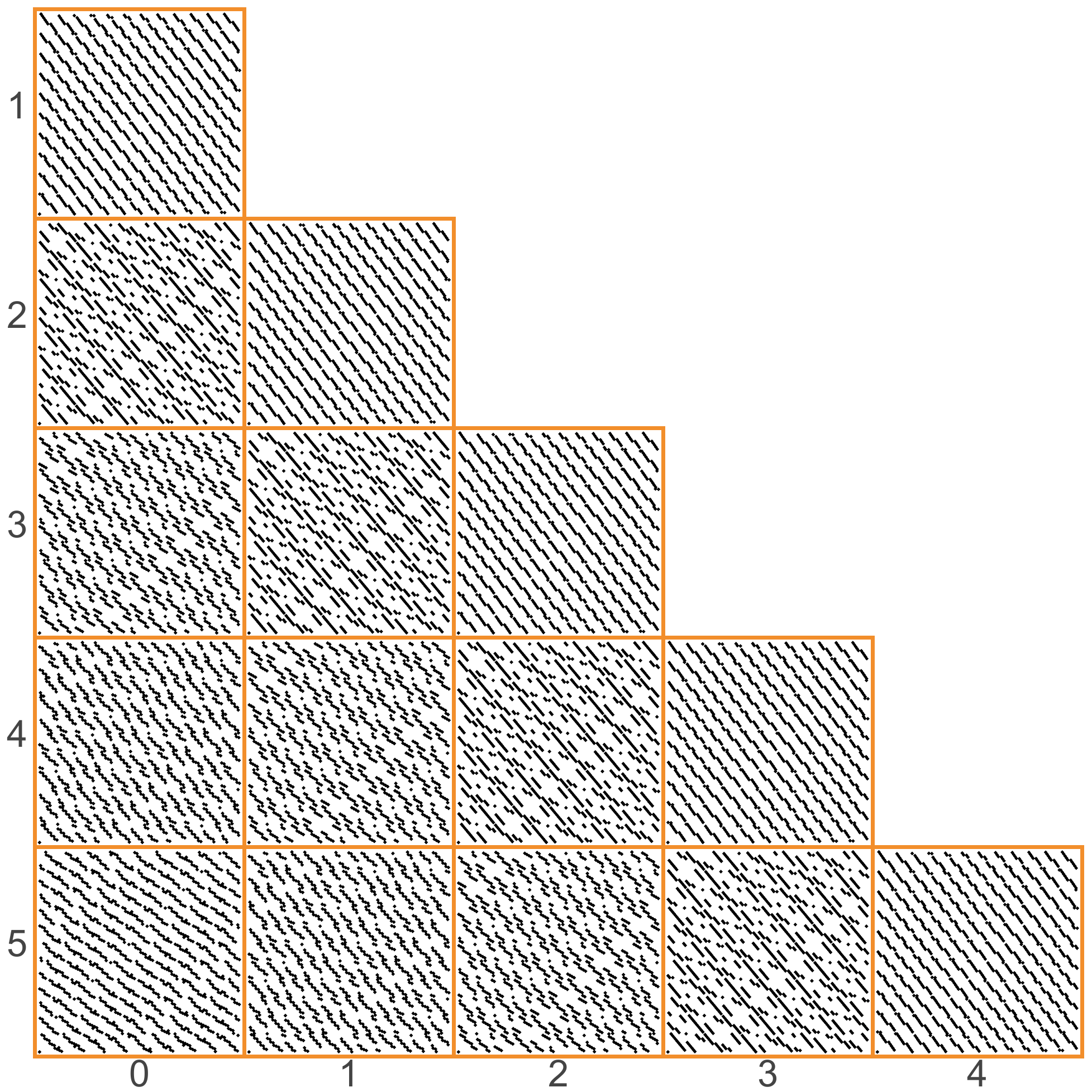"} \\
AS $\mathrm{GF}(5)$ & AS $\mathrm{GF}(7)$ & AS $\mathrm{GF}(11)$\\
\includegraphics[width=5.5cm]{"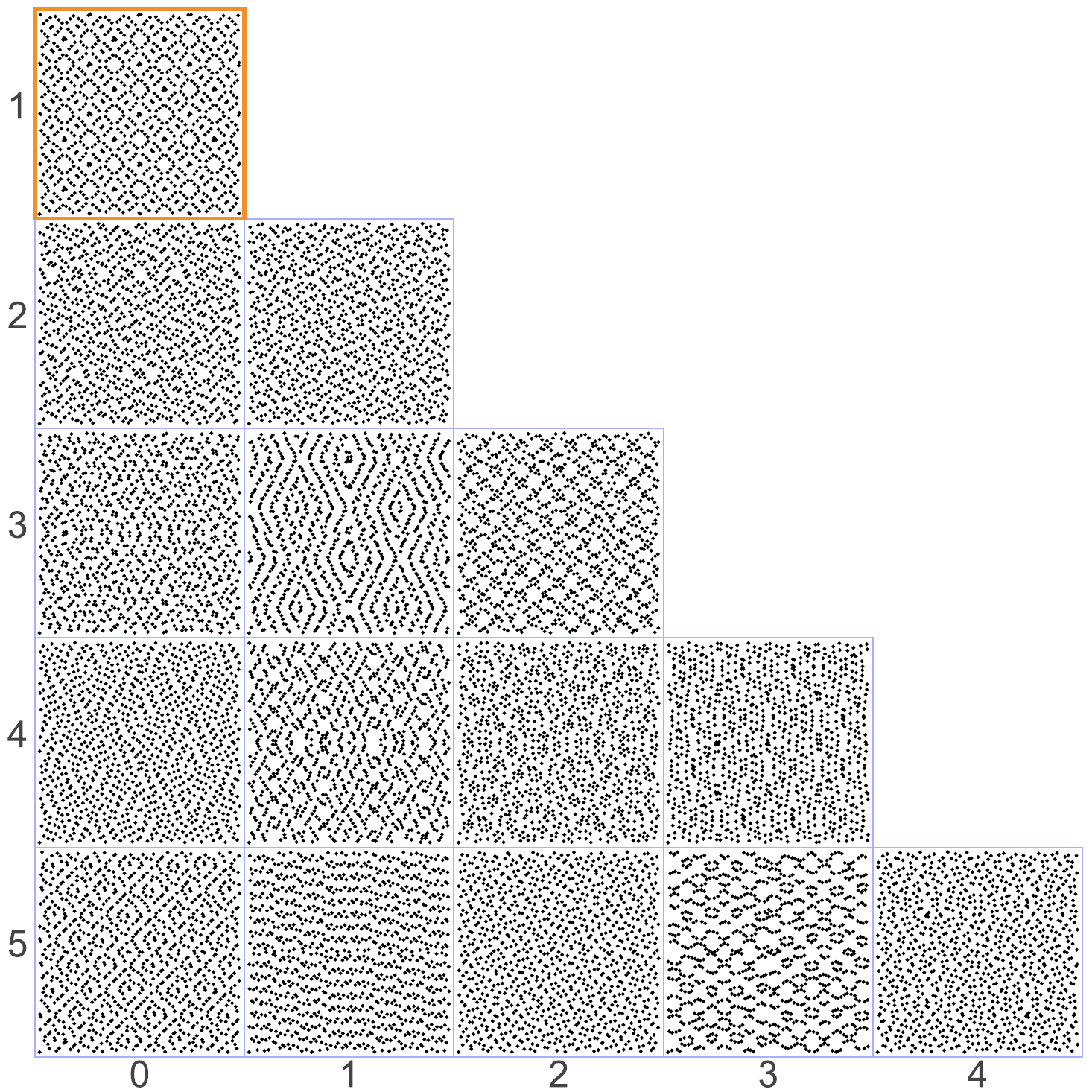"} &
\includegraphics[width=5.5cm]{"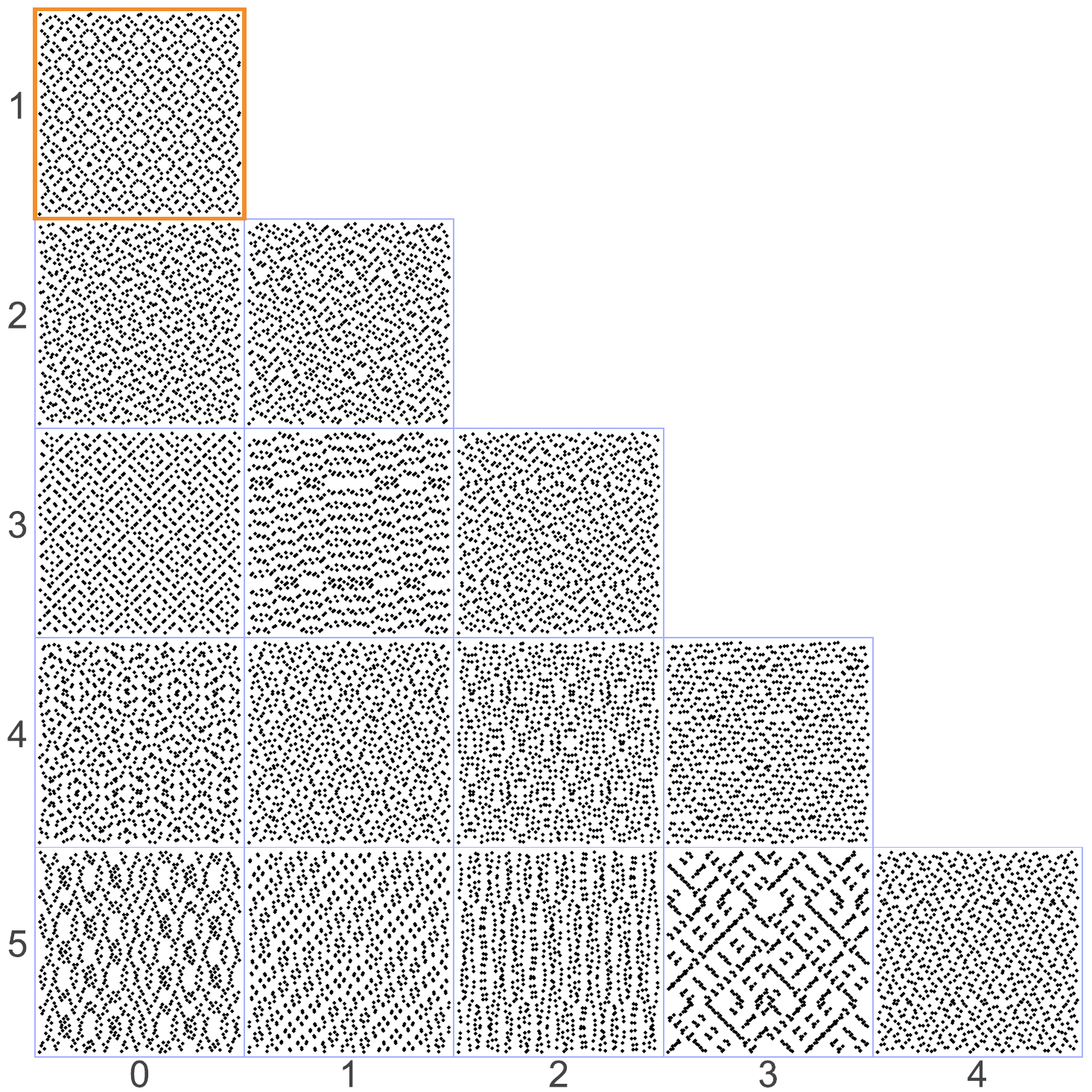"} &
\includegraphics[width=5.5cm]{"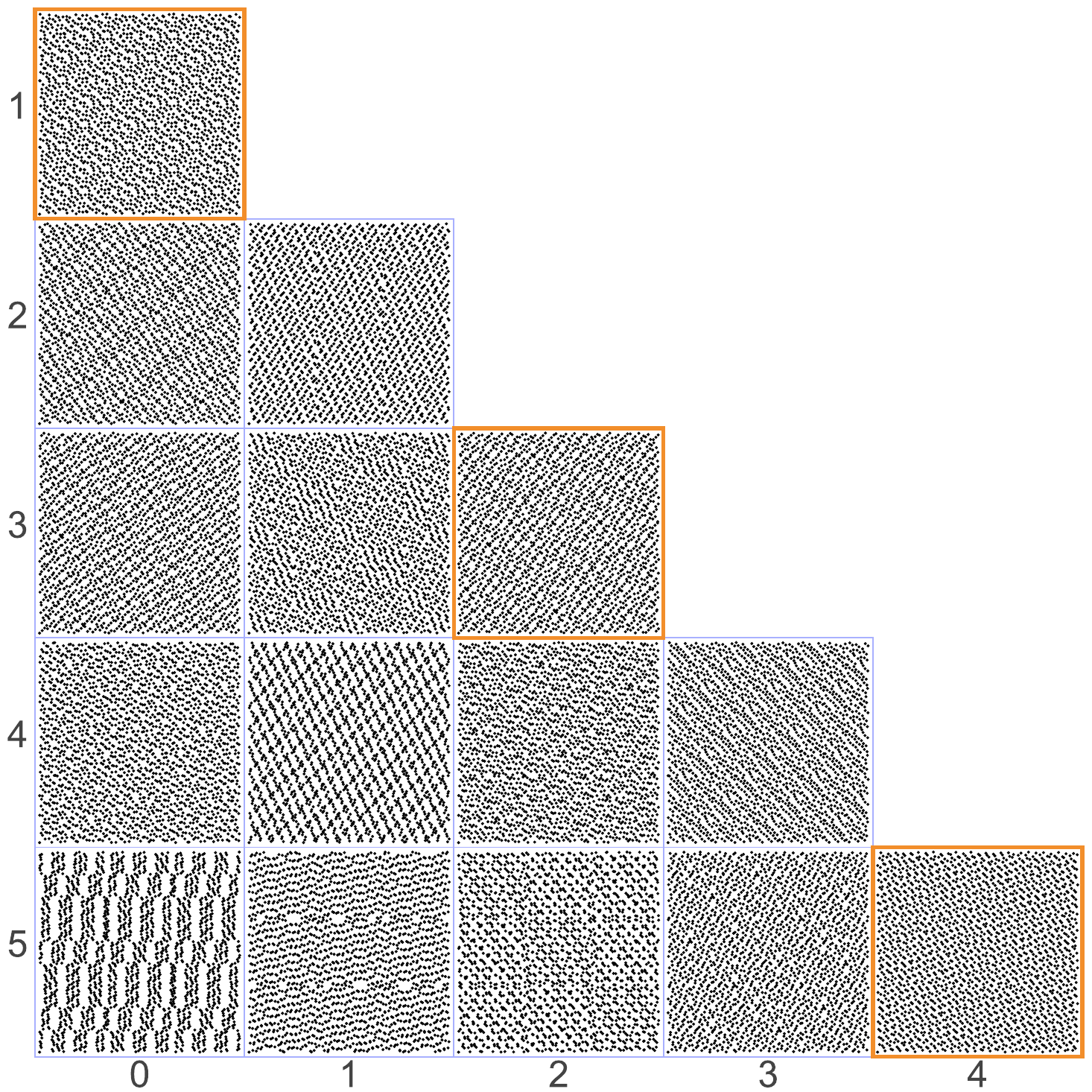"} \\
Faure and Lemieux~\cite{faure2016irreducible} in $\mathrm{GF}(2)$ & Bonneel et al.~\cite{BCIO25} in  $\mathrm{GF}(2)$ & Ostromoukhov et al.~\cite{OBCI24} in $\mathrm{GF}(3)$\\
\end{tabular}
\caption{
Pairwise 2D projections for $2^{10} = 1024$ samples in $\mathrm{GF}(2)$, $3^7 = 2187$ samples in $\mathrm{GF}(3)$, $5^4 = 625$ samples in $\mathrm{GF}(5)$,   $7^4 = 2401$ samples in $\mathrm{GF}(7)$, $11^3 = 1331$ samples in $\mathrm{GF}(11)$ for various samples, including our Artin-Schreier (AS) sampler, for up to 6 dimensions.
}
\label{fig:as}
\end{figure*}


\begin{figure*}
  \setlength{\tabcolsep}{0pt}
  \begin{center}
  \begin{tabular}{ccc}
  \begin{overpic}[width=6cm]{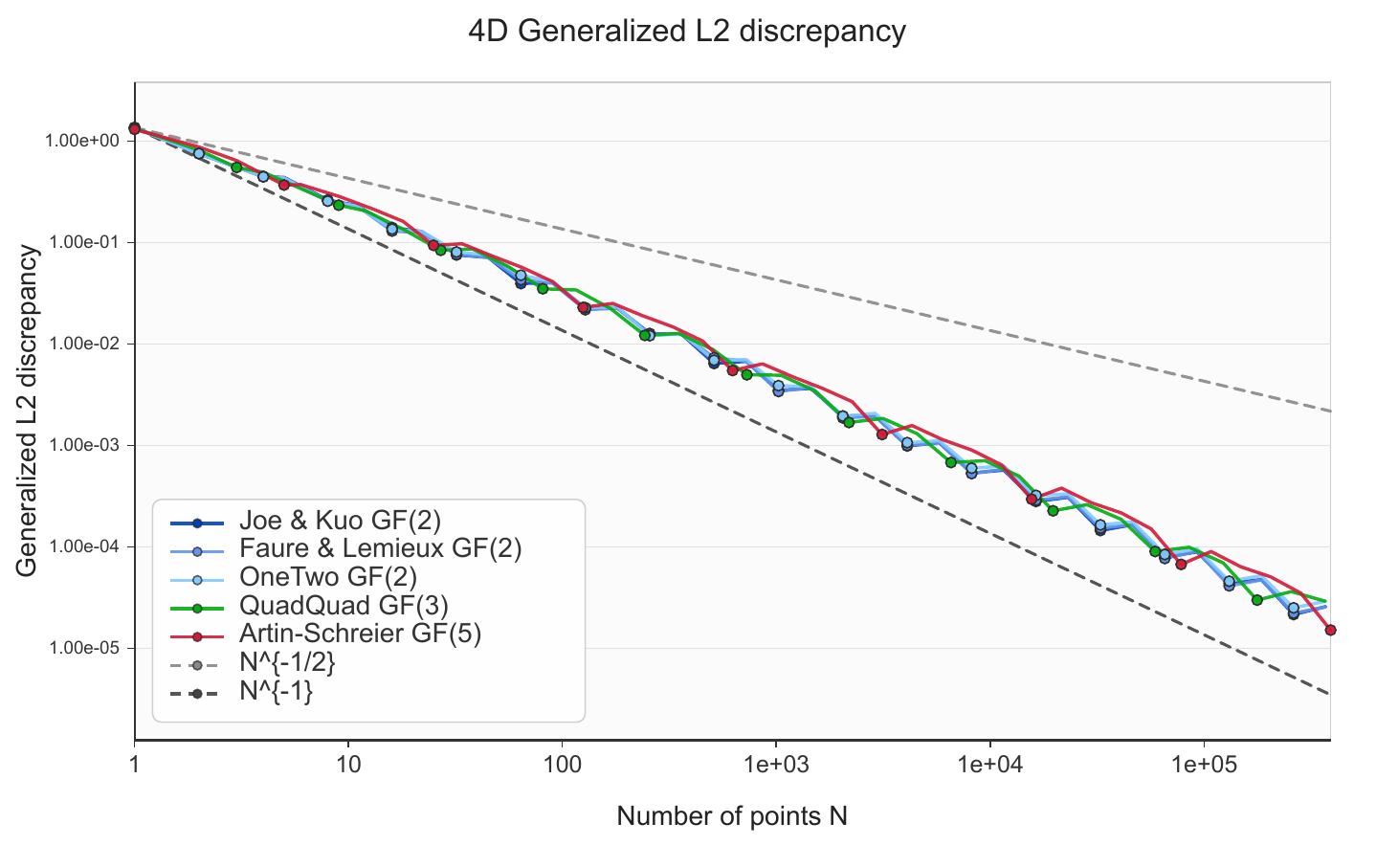}
    \put(80,51){\fcolorbox{black}{white}{$\mathrm{GF}(5)$}}
  \end{overpic}&
 \begin{overpic}[width=6cm]{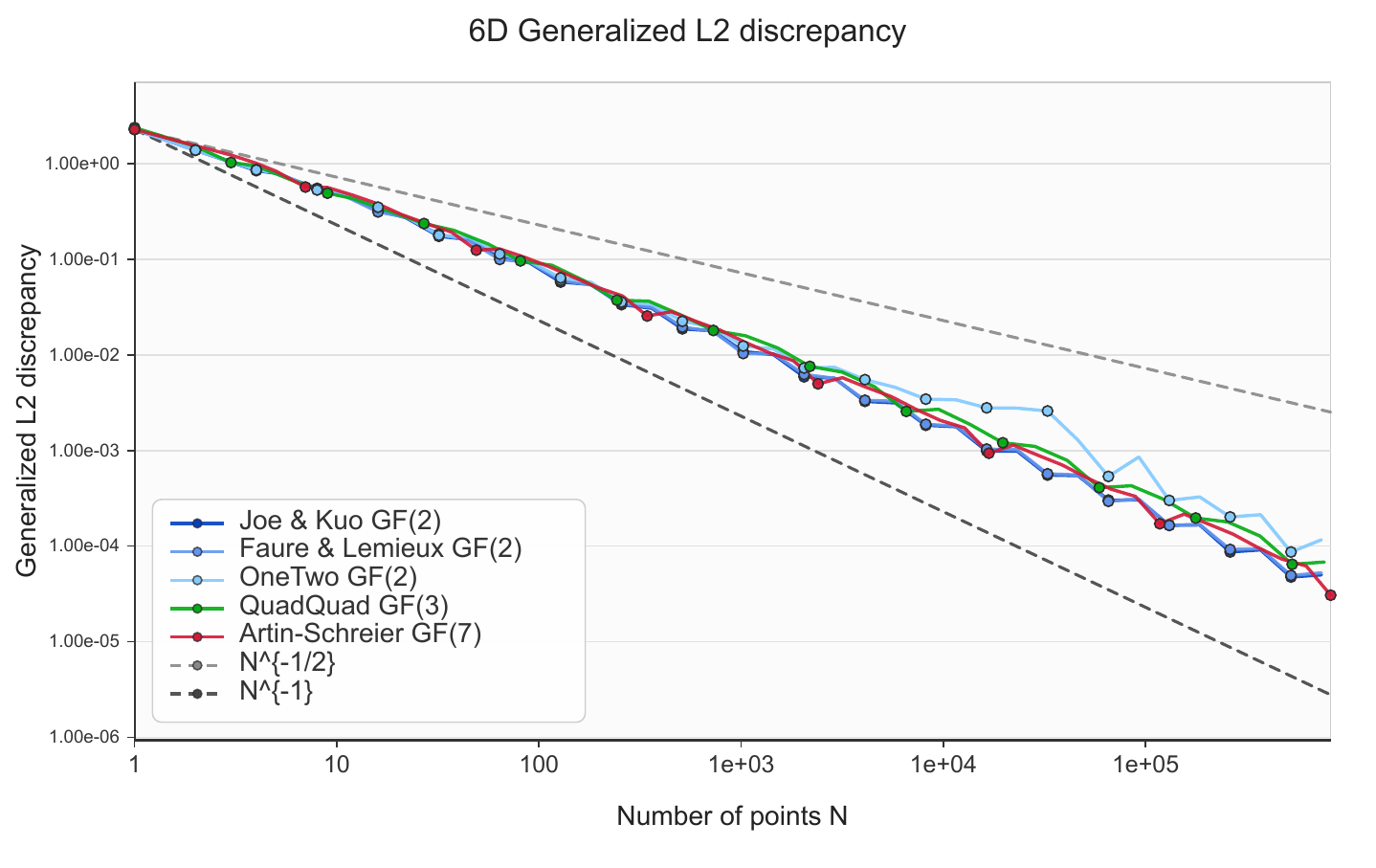}
    \put(80,51){\fcolorbox{black}{white}{$\mathrm{GF}(7)$}}
  \end{overpic}& 
 \begin{overpic}[width=6cm]{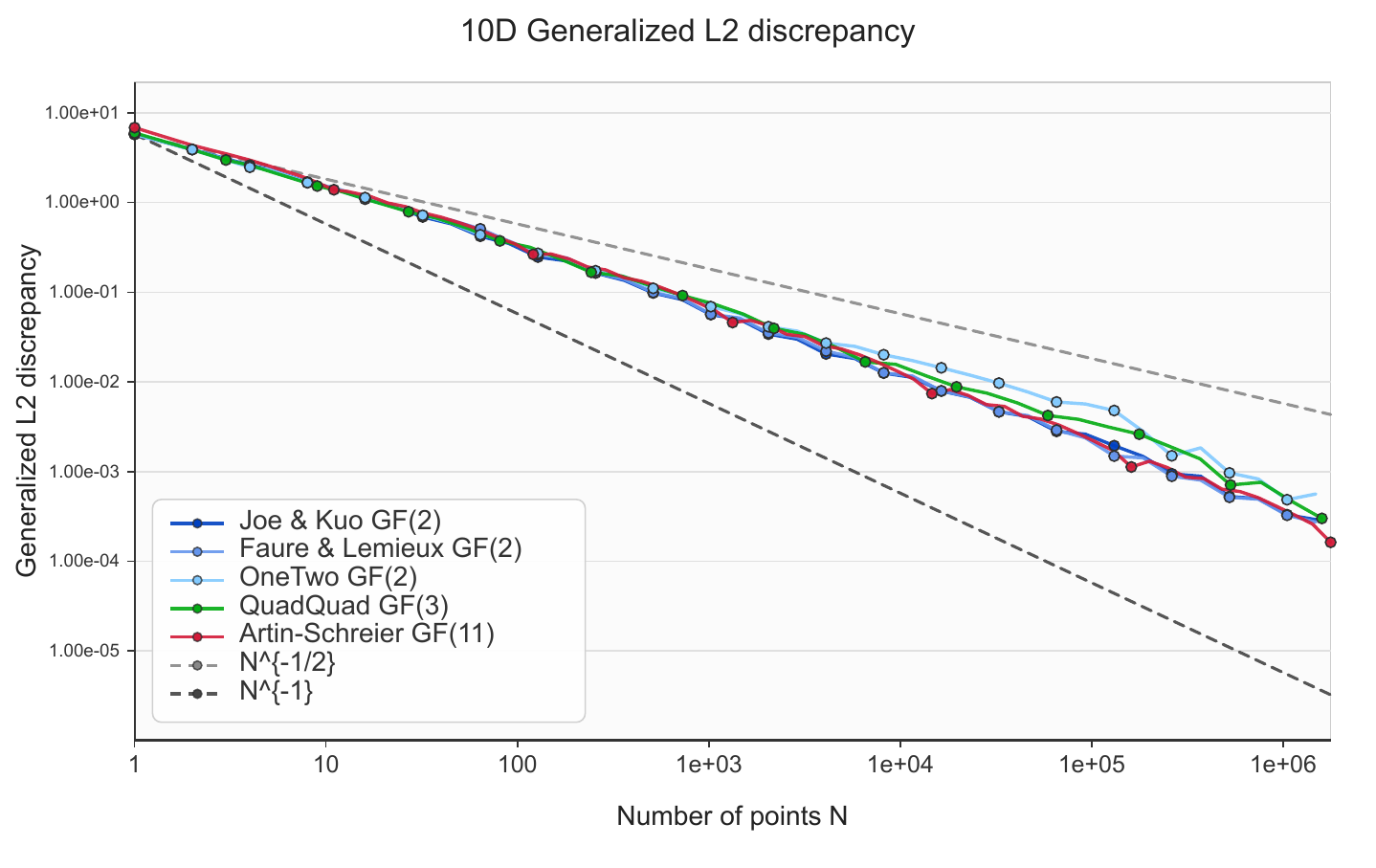}
    \put(78,51){\fcolorbox{black}{white}{$\mathrm{GF}(11)$}}
  \end{overpic}
  \end{tabular}
\end{center}
    \begin{center}
        \setlength{\tabcolsep}{2pt}\Large
\begin{tabular}{ccc}
 \begin{overpic}[width=7cm]{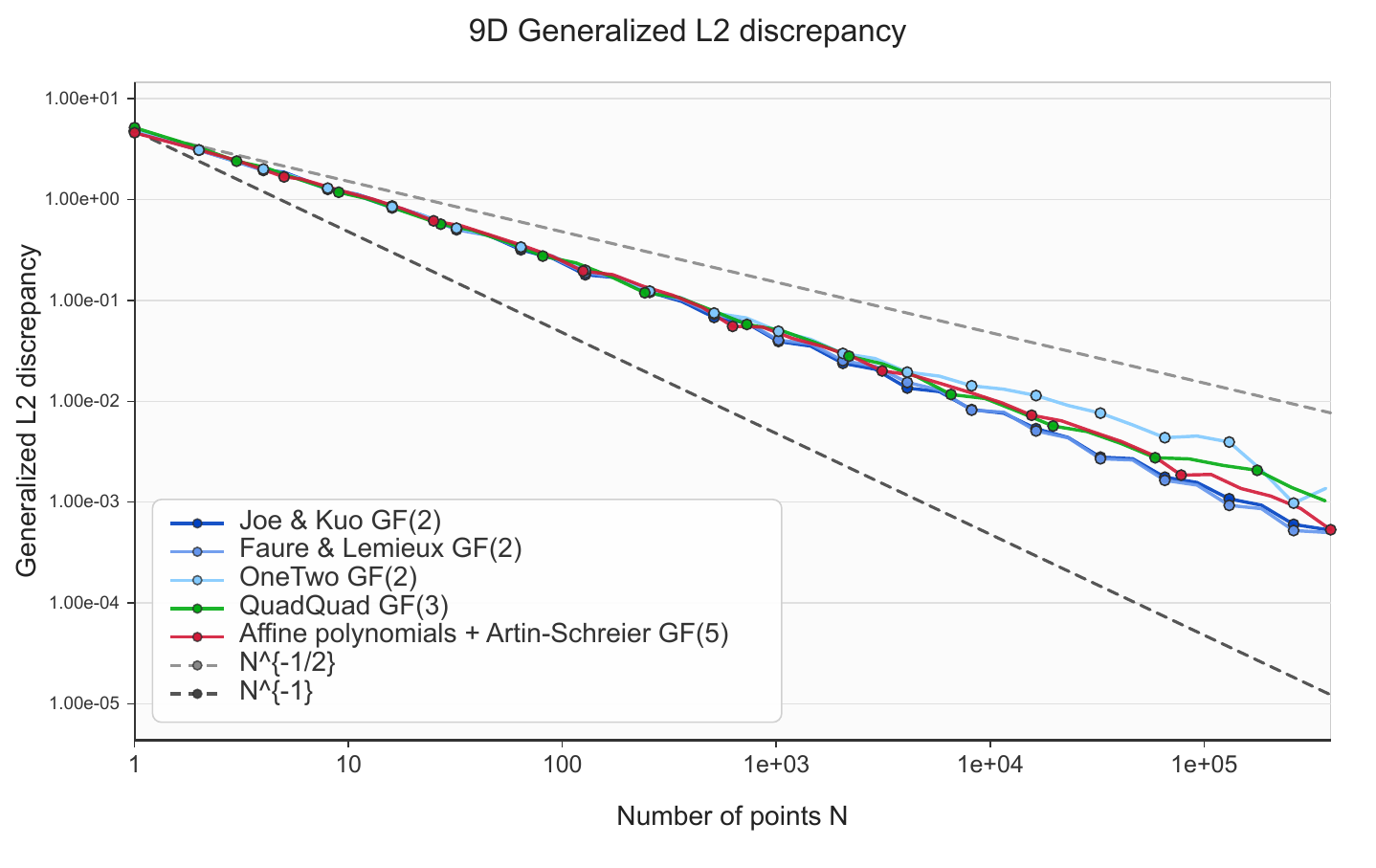}
    \put(82,51){\fcolorbox{black}{white}{$\mathrm{GF}(5)$}}
  \end{overpic}
  &\raisebox{2cm}{$\Rightarrow$}&\begin{overpic}[width=7cm]{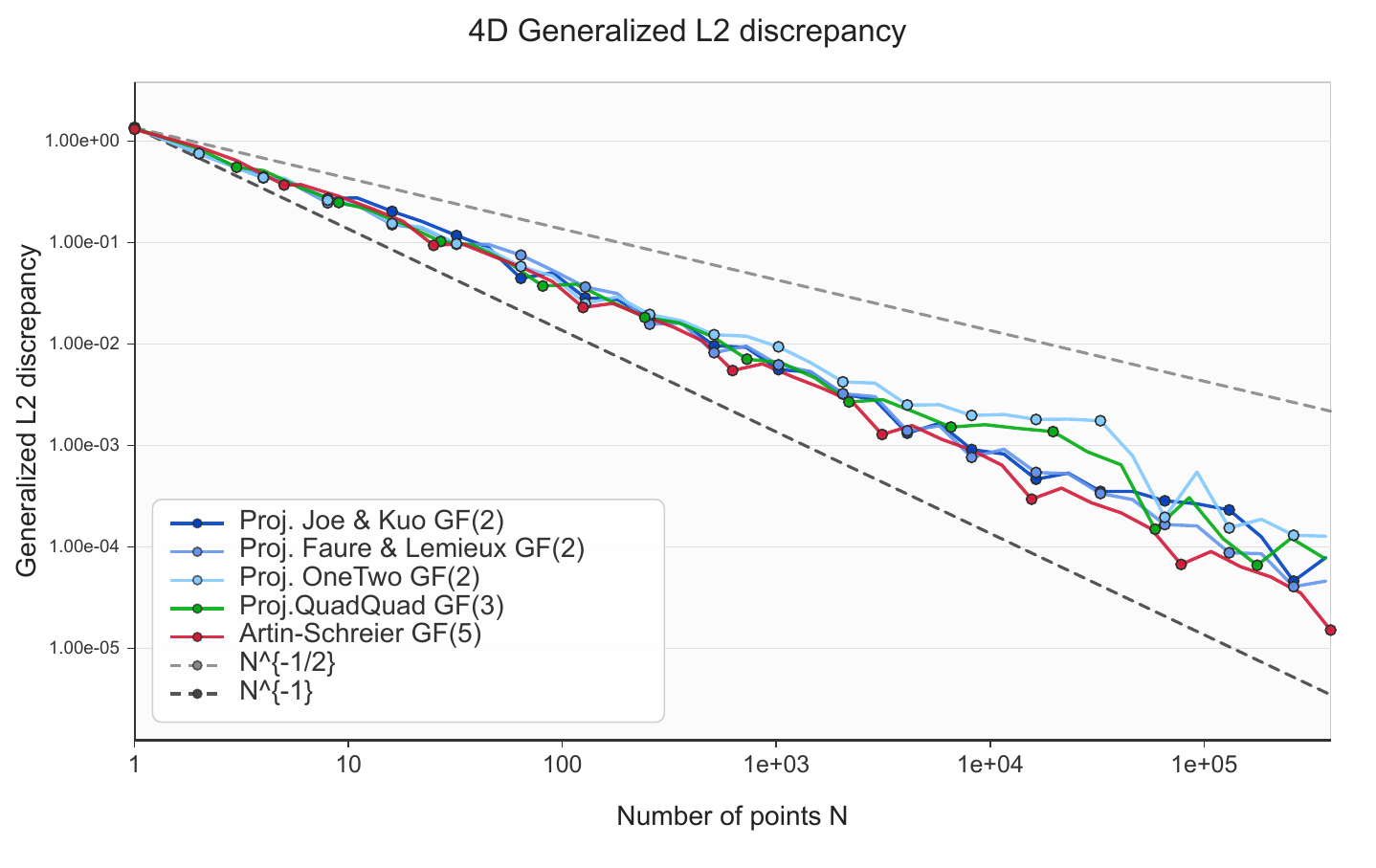}
    \put(82,51){\fcolorbox{black}{white}{$\mathrm{GF}(5)$}}
  \end{overpic}\\
 \begin{overpic}[width=7cm]{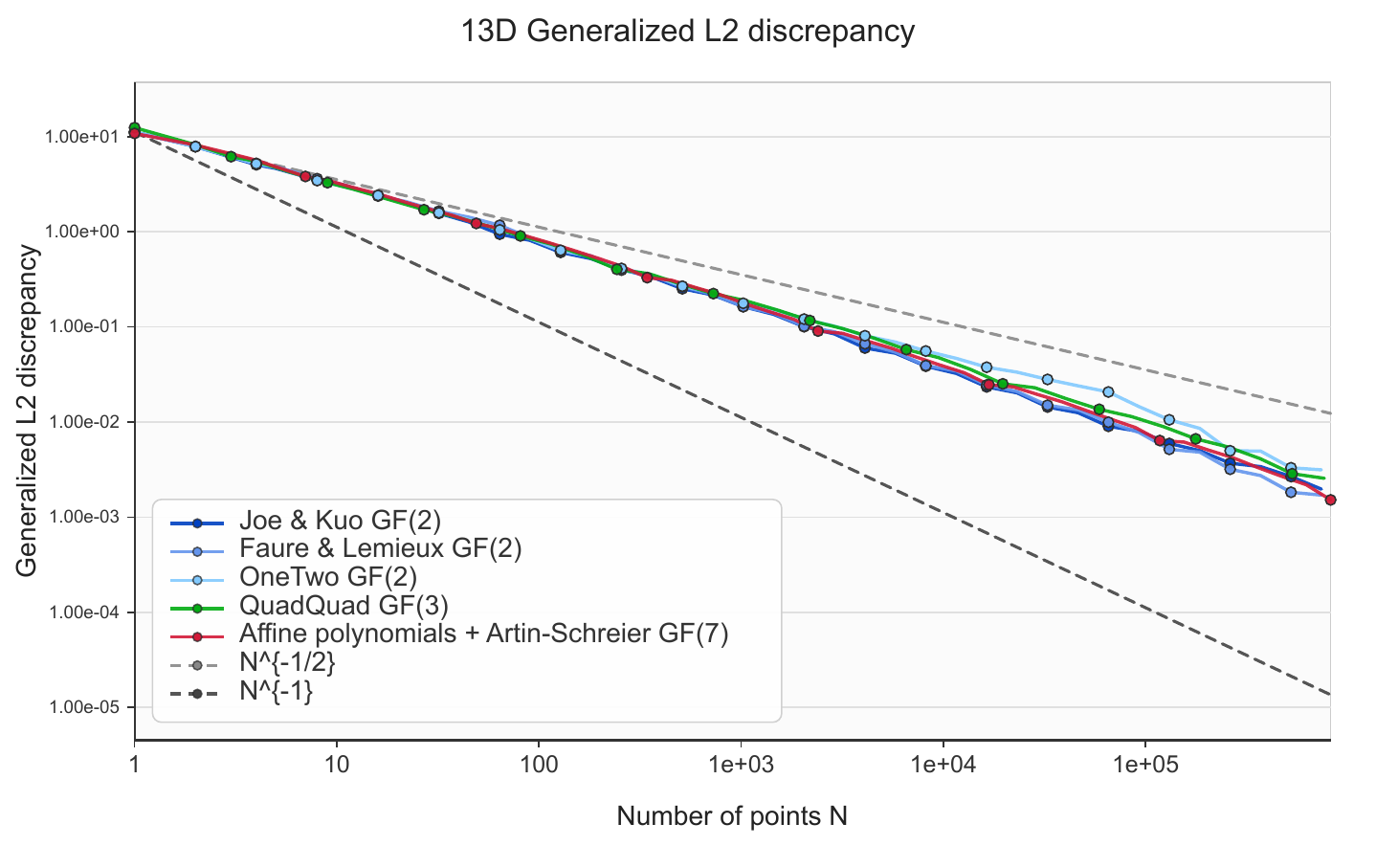}
    \put(82,51){\fcolorbox{black}{white}{$\mathrm{GF}(7)$}}
  \end{overpic}
  &\raisebox{2cm}{$\Rightarrow$}&\begin{overpic}[width=7cm]{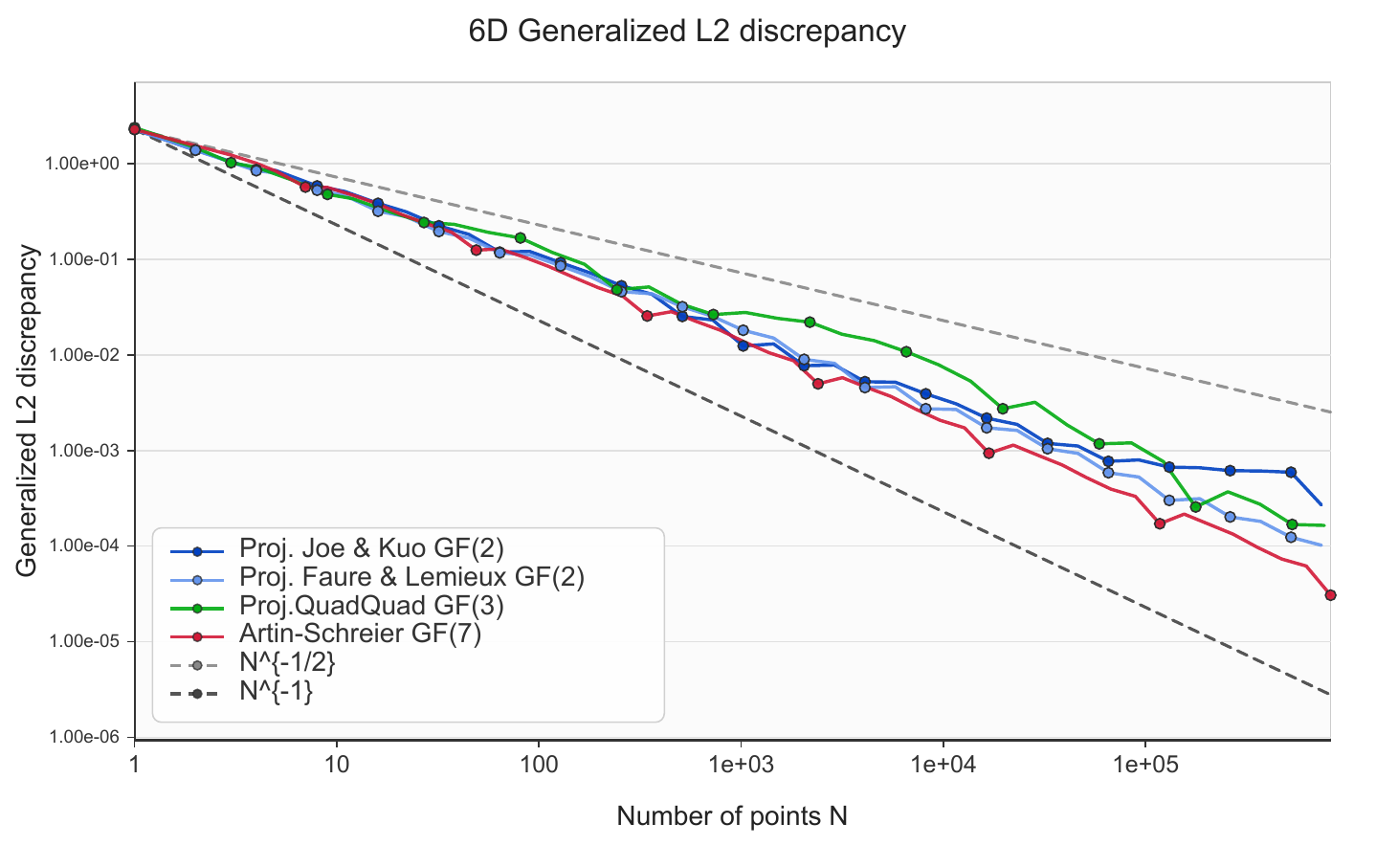}
    \put(82,51){\fcolorbox{black}{white}{$\mathrm{GF}(7)$}}
  \end{overpic}\\
  \begin{overpic}[width=7cm]{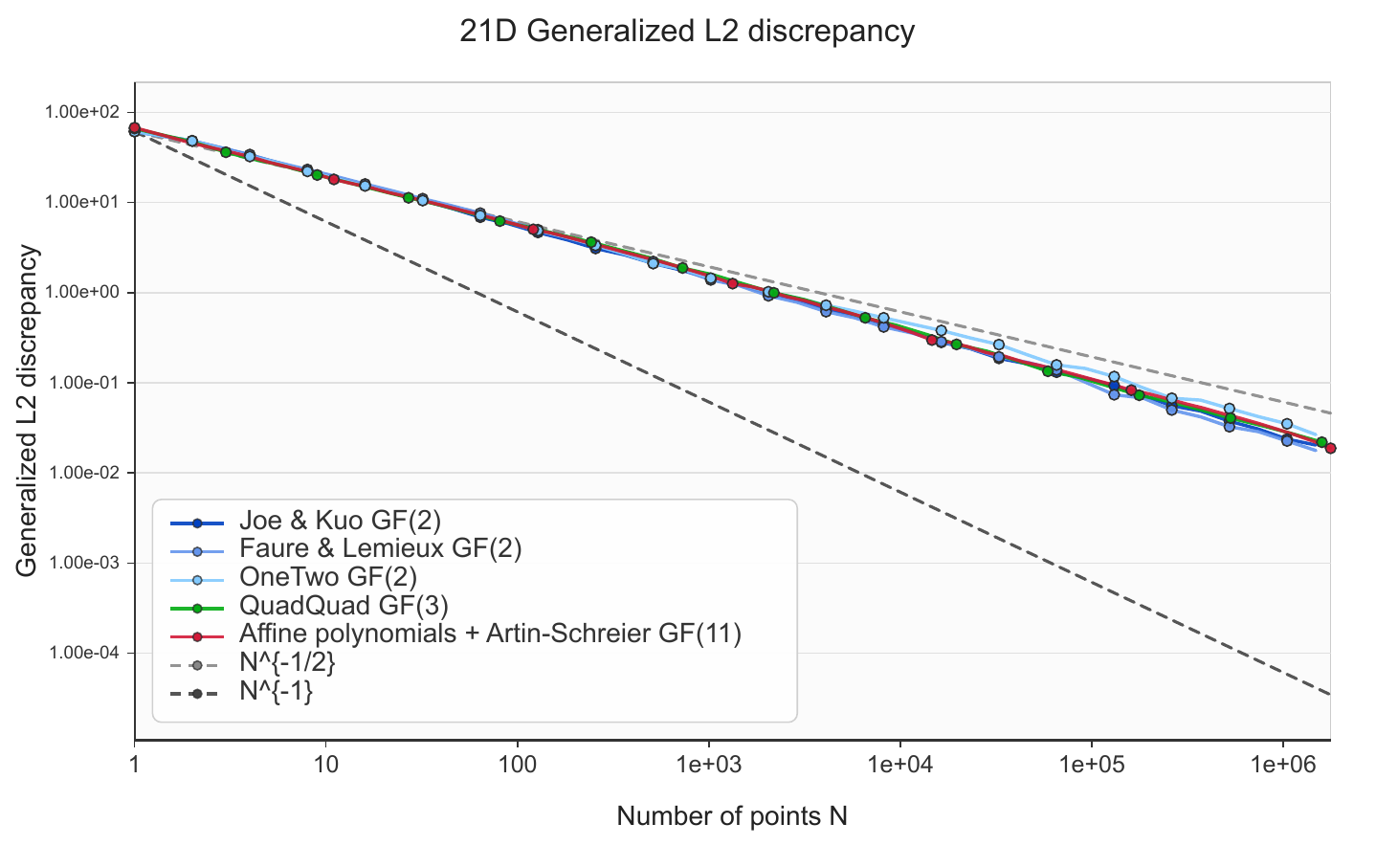}
    \put(80,51){\fcolorbox{black}{white}{$\mathrm{GF}(11)$}}
  \end{overpic}
  &\raisebox{2cm}{$\Rightarrow$}&\begin{overpic}[width=7cm]{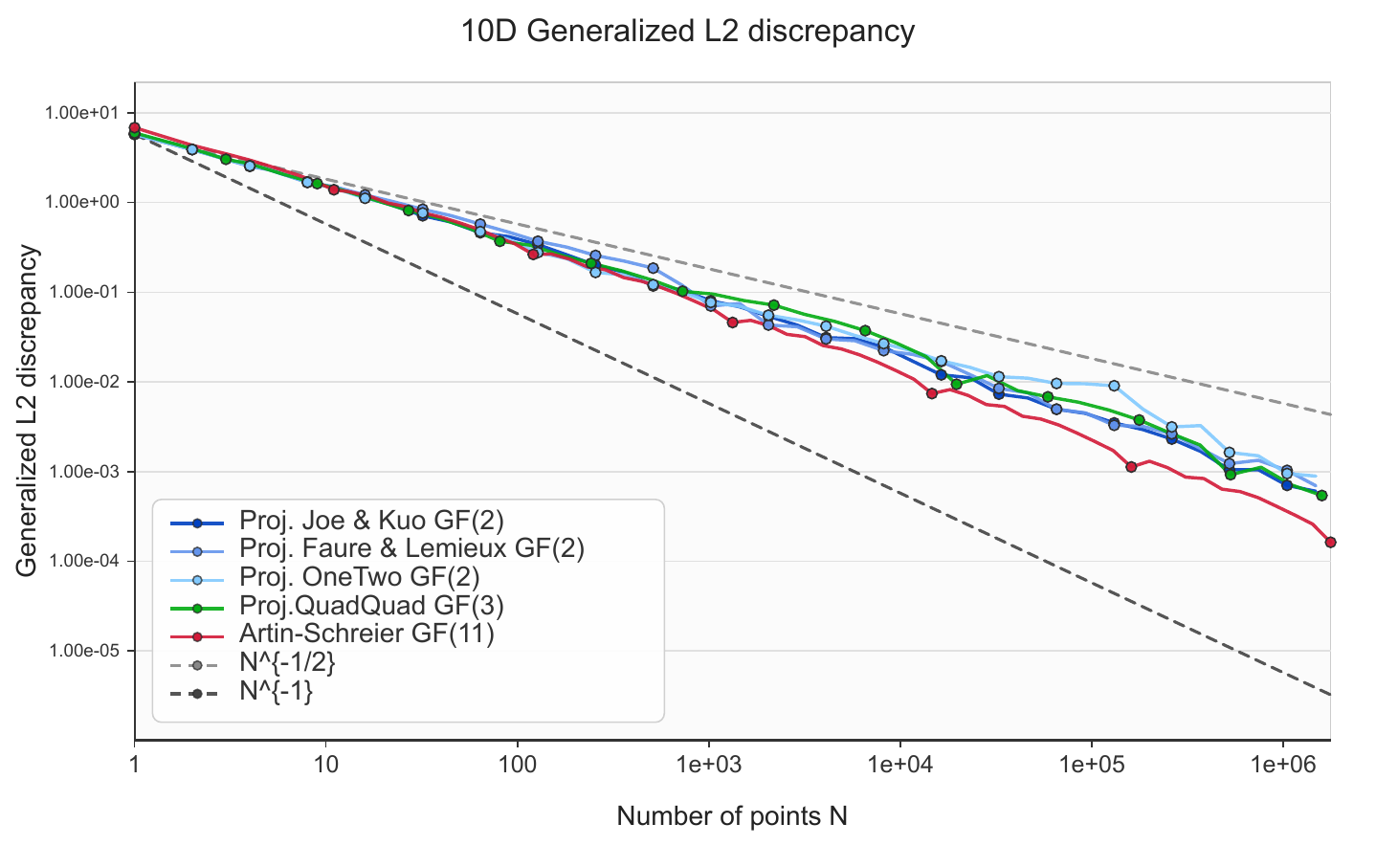}
    \put(80,51){\fcolorbox{black}{white}{$\mathrm{GF}(11)$}}
  \end{overpic}
  \end{tabular}
\end{center}
\caption{We evaluate the $(b-1)$-dimensional discrepancy of our sequences in $\mathrm{GF}(b)$ for $b=\{5, 7, 11\}$  (top 3). Then, we consider our sequence combined with the $b$ affine Sobol' polynomials in $\mathrm{GF}(b)$, and compute the combined $(2b-1)$-dimensional discrepancy (bottom left column) and compare the discrepancy of its projection on the $(b-1)$ last dimensions (bottom right column) to other Sobol' based sequences (``Proj'' denotes the last $b-1$ dimensions of these $(2b-1)$-dimensional sequences).}
\label{fig:discrepancies}
\end{figure*}

\subsection{Combined Sobol'-Artin-Schreier sequences}

In this section we investigate the behavior of sequences obtained by combining the first $b$ irreducible affine polynomials that produce powers of Pascal Sobol' matrices (with no tensorization involved) and the $b-1$ Artin-Schreier polynomials to form $(2b-1)$-dimensional sequences.
In practice, proposition~\ref{prop:init} provides a way to construct $(b-1)^{b-1}$ different sequences (since multiplying $D$ by a constant does not change the initialization matrix), all guaranteeing $t=0$ in $b-1$ dimensions, by exploring potential diagonal matrices $D$. When combining these $b-1$ dimensions with other Sobol' dimensions, it can be desirable to optimize criteria over the combined $2b-1$ dimensions. For instance, while taking $D = \text{Id}_{b \times b}$ still produces a $t=0$ sequence in $b-1$ dimensions, the sequence obtained by combining the powers of Pascal matrices resulting from affine polynomials with the powers of Pascal matrices obtained by the initialization of proposition~\ref{prop:init} with an identity diagonal matrix would result in the worst possible behavior among possible diagonal matrices in $2b-1$ dimensions, and especially for sample counts $n < b^b$ (see figure~\ref{fig:allinits} for the combined $(2b-1)$-dimensional discrepancy for all initializations in $\mathrm{GF}(5)$). 
While we rely on exhaustive search over these $(b-1)^{b-1}$ values for up to  $\mathrm{GF}(7)$,  this can be intractable for larger $b$. We alleviate this issue using a greedy approach. We first define our criterion as a lexicographic order of $t$ values computed for each $m$ as computed by the fast approach of Marion et al.~\cite{marion2020algorithm}, and progressively test diagonal elements $D_{i,i}$ for increasing $i$ of matrix $D$, making the search only over $(b-1)^2$ values. The resulting $(2b-1)$-dimensional discrepancy of our optimized $D$ values for  $\mathrm{GF}(5)$,  $\mathrm{GF}(7)$ and  $\mathrm{GF}(11)$ can be seen in figure~\ref{fig:discrepancies}, compared to other Sobol' samplers. Our resulting diagonal matrices contain the following values in the diagonal: $[1, 2, 3, 1, 4]$ in $\mathrm{GF}(5)$, $[1, 2, 6, 5, 1, 2, 6]$ in $\mathrm{GF}(7)$ and $[1, 2, 1, 1, 8, 7, 6, 9, 5, 1, 2]$ in $\mathrm{GF}(11)$.

\section{Conclusions}

We showed that Sobol' sequences in $\mathrm{GF}(b)$ constructed with polynomials of degree $e=b$ that only differ by a constant relate to Faure sequences~\cite{faure1982discrepance} and correspond to tensorized powers of Pascal matrices. These can be initialized in such a way that they guarantee the highest quality $t=0$. While irreducibility is not required per se when only using consecutive polynomials, one can ensure compatibility with Sobol' sequences by using irreducible polynomials to guarantee that higher-dimensional sequences formed by concatenating Sobol' sequences obtained by consecutive polynomials and other irreducible polynomials still produce $(t,s)$-sequences. In this setting, choosing Artin-Schreier polynomials ensure that $b-1$ irreducible polynomials can be found. We show that the resulting sequences are competitive in terms of discrepancy.

\section{Materials and Methods}
\subsection*{Disclosure of Delegation to Generative AI}
The authors declare the use of generative AI in the research and writing process. According to the GAIDeT taxonomy (2025), the following tasks were delegated to GAI tools under full human supervision:
\begin{itemize}
\item Research design
\item Proofreading and editing
\end{itemize}

The GAI tool used was: ChatGPT 5.5, Claude Opus 4.8.
Responsibility for the final manuscript lies entirely with the authors.
GAI tools are not listed as authors and do not bear responsibility for the final outcomes.
Declaration submitted by: Nicolas Bonneel

\subsection*{Data, Materials and Software Availability}
Our code for optimizing over initializations and comparing discrepancy graphs to other sequences is available at \url{https://github.com/liris-origami/Artin-Schreier-sequences}

\bibliographystyle{alpha}
{\footnotesize
\bibliography{paper}
}

\end{document}